\definecolor{red}{rgb}{1,0,0}
\newcommand{\xto}[1]{\xrightarrow{\phantom{a}{#1}{\phantom{a}}}}
\newcommand{\vvirg}{ , \dots , }
\newcommand{\contract}{\rotatebox[origin=c]{180}{ \reflectbox{$\neg$} }}
\newcommand{\bfB}{\mathbf{B}}
\newcommand{\bfT}{\mathbf{T}}
\newcommand{\bfU}{\mathbf{U}}
\newcommand{\bfV}{\mathbf{V}}
\newcommand{\bfW}{\mathbf{W}}
\newcommand{\bfX}{\mathbf{X}}
\newcommand{\bfY}{\mathbf{Y}}
\newcommand{\bfm}{\mathbf{m}}
\newcommand{\bfn}{\mathbf{n}}
\newcommand{\bfv}{\mathbf{v}}
\newcommand{\calK}{\mathcal{K}}
\newcommand{\calS}{\mathcal{S}}
\newcommand{\calT}{\mathcal{T}}
\newcommand{\bbC}{\mathbb{C}}
\newcommand{\bbN}{\mathbb{N}}
\newcommand{\bbQ}{\mathbb{Q}}
\newcommand{\bbS}{\mathbb{S}}
\renewcommand{\phi}{\varphi}
\renewcommand{\tilde}[1]{\widetilde{#1}}
\newcommand{\id}{\mathrm{id}}
\newcommand{\rank}{\mathrm{rank}}
\DeclareMathOperator{\sgn}{sgn}
\DeclareMathOperator{\depth}{depth}
\DeclareMathAccent{\wtilde}{\mathord}{largesymbols}{"65}
\newcommand{\Gr}{\mathrm{Gr}}
\newcommand{\GL}{\mathrm{GL}}
\newcommand{\SL}{\mathrm{SL}}
\newcommand{\calTNS}{\mathcal{T\!N\!S}}
\newcommand{\SI}{\mathrm{SI}}
\newcommand{\lcm}{\mathrm{lcm}}
\tikzset{->-/.style={decoration={
  markings,
  mark=at position .5 with {\arrow{angle 90}}},postaction={decorate}}}
\newtheorem{theorem}{Theorem}[section]
\newtheorem{lemma}[theorem]{Lemma}
\newtheorem{definition}[theorem]{Definition}
\newtheorem{proposition}[theorem]{Proposition}
\newtheorem{corollary}[theorem]{Corollary}
\newtheorem{conjecture}[theorem]{Conjecture}
\newcounter{claimCount}
\newcommand{\qmincut}{\mathrm{QMinCut}}
\newcommand{\qmaxflow}{\mathrm{QMaxFlow}}
\newcommand{\qcap}{\mathrm{qcap}}
\newcommand{\im}{\mathrm{im}}
\newcommand{\bridgegraphsmall}[5]{
	\begin{tikzpicture}[scale = .5,baseline={([yshift=-.5ex]current bounding box.center)}]
	\draw[thick] (0,0) --(.9,0);
	\draw[thick]  (0,1) --(.9,1);
	\draw[thick]  (1.1,0) --(2,0);
	\draw[thick]  (1.1,1) --(2,1);
	\draw[thick]  (1,0.1) --(1,.9);
	\node[scale = .7] at (.5,1.3) {$#1$};
	\node[scale = .7] at (1.5,1.3) {$#2$};
	\node[scale = .7] at (.5,-.3) {$#4$};
	\node[scale = .7] at (1.5,-.3) {$#5$};
	\node[scale = .7] at (1.3,.5) {$#3$};
\end{tikzpicture}
}
\newcommand{\bridgegraphsmallsub}[5]{
	\begin{tikzpicture}[scale = .5,baseline={([yshift=-.5ex]current bounding box.center)}]
	\draw[thick] (0,0) --(.9,0);
	\draw[thick]  (0,1) --(.9,1);
	\draw[thick]  (1.1,0) --(2,0);
	\draw[thick]  (1.1,1) --(2,1);
	\draw[thick]  (1,0.1) --(1,.9);
	\node[scale = .7] at (.5,1.3) {$#1$};
	\node[scale = .7] at (1.5,1.3) {$#2$};
	\node[scale = .7] at (.5,-.5) {$#4$};
	\node[scale = .7] at (1.5,-.5) {$#5$};
	\node[scale = .7] at (1.3,.5) {$#3$};
\end{tikzpicture}
}
\newcommand{\bridgegraphbig}[5]{
	\begin{tikzpicture}[scale = .5,baseline={([yshift=-.5ex]current bounding box.center)}]
	\draw[thick] (0,0) --(1.4,0);
	\draw[thick]  (0,1.3) --(1.4,1.3);
	\draw[thick]  (1.5,0) --(2.9,0);
	\draw[thick]  (1.5,1.3) --(2.9,1.3);
	\draw[thick]  (1.45,0.1) --(1.45,1.2);
	\node[scale = .6] at (.7,1.6) {$#1$};
	\node[scale = .6] at (2.25,1.6) {$#2$};
	\node[scale = .6] at (.7,-.3) {$#4$};
	\node[scale = .6] at (2.25,-.3) {$#5$};
	\node[scale = .6] at (1.7,.65) {$#3$};
\end{tikzpicture}
}
 \definecolor{darkspringgreen}{rgb}{0.09, 0.45, 0.27}
\title{Quantum max-flow in the bridge graph}
\author{Fulvio Gesmundo, Vladimir Lysikov, Vincent Steffan}
\address[F. Gesmundo]{Universität des Saarlandes, Saarbr\"ucken, Germany; (current) Institut de Mathématiques de Toulouse, Université Paul Sabatier, Toulouse, France}
\email[]{fgesmund@math.univ-toulouse.fr}
\address[V. Lysikov]{Ruhr-Universität Bochum, Germany}
\email[]{vladimir.lysikov@rub.de}
\address[V. Steffan]{QMATH, Dept. Math. Sci., U. of Copenhagen, Denmark}
\email[]{vincent.steffan@googlemail.com}
\subjclass[2020]{(primary) 15A69, (secondary) 81P42, 14L24, 16G20}
\keywords{quantum max-flow, castling transform, entanglement, tensor network}
\begin{document}

\begin{abstract}
The quantum max-flow is a linear algebraic version of the classical max-flow of a graph, used in quantum many-body physics to quantify the maximal possible entanglement between two regions of a tensor network state. In this work, we calculate the quantum max-flow exactly in the case of the \emph{bridge graph}. The result is achieved by drawing connections to the theory of prehomogenous tensor spaces and the representation theory of quivers. Further, we highlight relations to invariant theory and to algebraic statistics.
\end{abstract}

\maketitle

\section{Introduction}
Tensor networks are a powerful tool to construct quantum many-body states~\cite{tns1,Orus_practical}. They define a class of tensors obtained via tensor contraction described combinatorially by the structure of a graph. They are used in quantum many-body physics to parameterize an ansatz class of quantum states with certain special entanglement properties, for example, states obeying an area law~\cite{Hastings_2007,RevModPhys.82.277}. Because of their desirable entanglement properties they have a wide range of applications such as holography~\cite{holography1,holography2,holography3,holography4,holography5}, quantum chemistry~\cite{chem1,chem2,chem3,chem4,chem5} and machine learning~\cite{ml1,ml2,ml3}. They also are powerful numerical tools \cite{Hauschild_2018,ChrGesStWer:Optimization} and key components of state of the art quantum circuit  simulators~\cite{simula1,simula2,simula3,sim1,sim2}.

Informally, tensor network states are quantum states built from ``smaller states'' via tensor contractions encoded in the combinatorics of a graph. We provide the precise definition in \autoref{sec:qmaxflowproblem}. For now, we mention that given a graph $\Gamma = (V,E)$ together with two assignments of integer weights $\bfm: E \rightarrow \mathbb{N}$ and $\textbf{n}:V\rightarrow \mathbb{N}$ on the sets of edges and vertices, respectively, one can define a set of unnormalized quantum states on $|V|$ parties, or equivalently, tensors of order $|V|$ which we denote by $\calTNS^\Gamma(\bfm,\bfn)$ in $\bigotimes_{v \in V} \bbC^{\bfn(v)}$, called the set of tensor network states associated to $\Gamma$ with \emph{bond dimensions} $\bfm$. 

Understanding properties of tensor network states from the point of view of geometry and information theory provides valuable insights on the physical states that they define. However, usually quantities associated to tensor network states are difficult to compute. In \cite{calegari2009positivity}, the quantum max-flow of a network was introduced, in the context of positivity of certain tensor operators. This is an information-theoretic version of the classical max-flow of a graph. The exact definition will be given in \autoref{sec:qmaxflowproblem}; informally, given a network $\Gamma = (V,E)$ with bond dimensions $\bfm$, and two disjoint subsets $\calS,\calT \subseteq V$ of the set of vertices of $\Gamma$, the associated quantum max-flow, denoted $\qmaxflow(\Gamma,\textbf{m},\mathcal{S},\mathcal{T})$, is the maximum possible Schmidt rank across the bipartition $(\calS,\calT)$ of a quantum state arising from the network $\Gamma$. 

In \cite{CuiShaFreSatStoMin:QuantumMaxFlowMinCut}, the authors relate the quantum max-flow to the notion of \textit{quantum min-cut}, denoted $\qmincut(\Gamma,\bfm,\calS,\calT)$, that is the information-theoretic analog of the classical min-cut in graph theory \cite{elias:classicalmincutmaxflow,ford_fulkerson_1956}, see \autoref{sec:qmaxflowproblem}. They prove the inequality $\qmaxflow(\Gamma,\mathbf{m},\mathcal{S},\mathcal{T}) \leq \qmincut(\Gamma,\mathbf{m},\mathcal{S},\mathcal{T})$ and construct examples where this inequality is tight; however, this inequality is not tight in general. They highlight connections between the quantum max-flow and entropy of entanglement as well as the quantum satisfiability problem and suggest further connections to spin systems in condensed matter and quantum gravity. Further progress was achieved in \cite{GesLanWal:MPSandQuantumMaxFlowMinCut} where the authors construct families of examples where the gap between the quantum min-cut and the quantum max-flow can be arbitrarily large. In \cite{Hastings:asymptoticsofqmincut}, Hastings showed that asymptotically, quantum min-cut and quantum max-flow are the same.

As of today, the exact value of the quantum max-flow was computed only in a limited number of cases with small fixed bond dimensions and computing the quantum max-flow exactly for a given graph seemed out of reach. In this work, we study this problem for a specific graph, the bridge graph in \autoref{fig: bridge graph}. We relate the problem of computing the quantum max-flow to the theory of prehomogeneous tensor spaces and the representation theory of quivers; this allows us to compute the max-flow for essentially all choices of bond dimensions. 
\vspace{.5cm}
\begin{figure}[!h]
  \begin{tikzpicture}[scale = .5]
 \draw[fill=black] (0,0) circle (.5cm);
 \draw[fill=black] (0,5) circle (.5cm);
 \draw[fill=black] (5,0) circle (.5cm);
 \draw[fill=black] (5,5) circle (.5cm);
 \draw[fill=black] (10,0) circle (.5cm);
 \draw[fill=black] (10,5) circle (.5cm);
 \path[draw,thick] (0,0)--(10,0);
 \path[draw,thick] (0,5)--(10,5);
 \path[draw,thick] (5,0)--(5,5);
\end{tikzpicture}
\caption{The bridge graph.} \label{fig: bridge graph}
\end{figure}
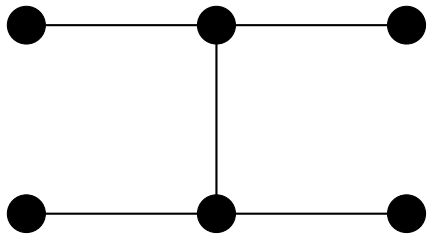

The study of the quantum max-flow in the bridge graph has two interesting connections to other areas, that we briefly outline.

In the representation theory of quivers, the value of the max-flow reveals the existence (or the non-existence) of certain covariants for the Kronecker quiver. Quiver representation theory is the main tool used in the study of the invariant theory of families of matrices \cite{DerMak:MLEMatrix}, which has, among others, applications in the study of matrix product states \cite{MicShi:QuantumVersionWielandt,DeMotSey:LinearSpanMPS}. In fact, the invariant theory of the Kronecker quiver is  the main tool used in the proof of \autoref{thm: UVW quiver}, where we rely on the existence of certain invariants introduced in \cite{Sch:SemiInvQuivers} and studied in \cite{domokos,SCHOFIELD2001125,derksenmakampolybounds}.

The second connection we briefly mention is the one to algebraic statistics and the study of maximum likelihood estimation, a widely used method to determine the free parameters of a probability distribution which best explains given data. In this context, the maximum likelihood threshold is the smallest sample size allowing one to completely reconstruct the model from the given sample, see e.g. \cite{DrtKurHof:MLEKroneckerCovariance}. The existence of a maximum likelihood threshold has strong connections with classical invariant theory, as explained in \cite{Invarianttheoryandmaximumlikelihood,DerMak:MLEMatrix}. In particular, \cite{DerMak:MLEMatrix,DerMakWal:MLETensors} study invariant theoretic properties of matrix and tensor spaces in the context of maximum likelihood estimation. Moreover, \cite{Invarianttheoryandmaximumlikelihood} studies the maximum likelihood threshold in relation to the \emph{cut-and-paste rank} introduced in \cite{BurDra:HilbNullConeMatricesBil}; in the case of matrices, the cut-and-paste rank is exactly the quantum max-flow on the bridge graph. In particular, the results of our work can be read in terms of cut-and-paste rank.

\subsection*{Acknowledgments} F.G.'s work is partially supported by the Thematic Research Programme ``Tensors: geometry, complexity and quantum entanglement'', University of Warsaw, Excellence Initiative -- Research University and the Simons Foundation Award No. 663281 granted to the Institute of Mathematics of the Polish Academy of Sciences for the years 2021--2023. We thank Mathias Drton, Alexandros Grosdos, Visu Makam and Philipp Reichenbach for helpful discussions and for pointing out the connections to algebraic statistics and to the representation theory of quivers. This work was done while V.L. was at the University of Copenhagen. V.L. and V.S. acknowledge financial support from the European Research Council (ERC Grant Agreement No. 818761), VILLUM FONDEN via the QMATH Centre of Excellence (Grant No. 10059). V.S. thanks Matthias Christandl and Frederik Ravn Klausen for helpful discussions and Or Sattath for proposing the problem.

This preprint has not undergone any post-submission improvements or corrections. The Version of Record of this article is published in \emph{Transformation Groups}, and is available online at \url{https://doi.org/10.1007/s00031-024-09863-2}

\subsection{The quantum max-flow and the bridge graph: summary of the results}\label{subsec:Tensornetworks and the quantum max-flow}

Associate to the bridge graph of \autoref{fig: bridge graph} bond dimensions $a,b,w,a',b'$ on the edges and consider the two disjoint subsets $\calS$ and $\calT$ of the vertices as follows: 
\vspace{1em}
\begin{center}
 \begin{tikzpicture}[scale = .5]
\draw[color = blue, fill = blue!2!white] ( -.8, -.8) rectangle (0.8, 5.8);
\draw[color = blue, fill = blue!2!white] ( 9.2, -.8) rectangle (10.8, 5.8);
 \draw[fill=black] (0,0) circle (.5cm);
 \draw[fill=black] (0,5) circle (.5cm);
 \draw[fill=black] (5,0) circle (.5cm);
 \draw[fill=black] (5,5) circle (.5cm);
 \draw[fill=black] (10,0) circle (.5cm);
 \draw[fill=black] (10,5) circle (.5cm);
 \path[draw] (0,0)--(10,0);
 \path[draw] (0,5)--(10,5);
 \path[draw] (5,0)--(5,5);
\node[anchor=north] () at (2.5,0) {$b'$};
\node[anchor=north] () at (7.5,0) {$a'$};
\node[anchor=south] () at (7.5,5) {$b$};
\node[anchor=south] () at (2.5,5) {$a$};
\node[anchor=west] () at (5,2.5) {$w$};
\node[anchor=north] () at (0,-1) {$\mathcal{S}$};
\node[anchor=north] () at (10,-1) {$\mathcal{T}$};
\node[anchor=east] () at (-1,2.5) {$\Gamma:$};
\end{tikzpicture}
\end{center}

Fix vector spaces $A,B,A',B',W$ of dimension $a,b,a',b',w$ respectively. A flow map on the bridge graph is defined as follows. Let $T \in A \otimes B \otimes W$ and $T' \in A' \otimes B' \otimes W^*$ be two tensors; pictorially one can think of them as placed on the top and bottom central vertex of the bridge graph, with the ``legs'' corresponding to the three edges incident to each vertex. Let $F_{T,T'} \in A \otimes B \otimes A' \otimes B'$ be the tensor obtained by contracting the factor $W$ of $T$ with the factor $W^*$ of $T'$; this is the tensor network state defined by $T$ and $T'$ on the bridge graph. The tensor $F_{T,T'}$ can be regarded as a linear map $F_{T,T'} : (A \otimes B')^*\rightarrow (B \otimes A')$, namely a bipartite state between the vertices in $\calS$ and the vertices in $\calT$. The quantum flow associated to $T$ and $T'$ is the rank of this linear map. The quantum max-flow is the maximum possible value of $\rank(F_{T,T'})$ as $T$ and $T'$ vary in the respective spaces; we write
\[
\qmaxflow \left(\bridgegraphsmall{a}{b}{w}{b'}{a'} \right) = \max \left\{ \rank(F_{T,T'}) : \begin{array}{l} T \in A \otimes B \otimes W \\ T' \in A' \otimes B' \otimes W^* \end{array}\right\}.
 \]
Note that quantum max-flow in the bridge graph is symmetric, in the sense that 
\begin{equation*}
\qmaxflow \left(\bridgegraphsmall{a}{b}{w}{b'}{a'} \right) =\qmaxflow \left(\bridgegraphsmall{b}{a}{w}{a'}{b'} \right).
\end{equation*}
Moreover, if $a \leq b$ and $b' \leq a'$, then the quantum max-flow in the bridge graph is $ab'$; this coincides with the quantum min-cut that will be introduced in \autoref{sec:qmaxflowproblem}. Hence, we will restrict our analysis to the case $a\leq b$ and $a' \leq b'$. 

A crucial role in the behaviour of the quantum max-flow is played by the constant $\lambda_w = \frac{w + \sqrt{w^2 - 4}}{2}$. In order to summarize our results, we consider a partition of the set $\{ (a,b) \in \bbN^2 : b \geq a\}$ into five regions:

\begin{center}
    \begin{tikzpicture}
    \fill[color = {rgb:black,1;white,10}, rounded corners] (0,5) -- (5,5) -- (5,4) -- cycle;
    \fill[color = {rgb:black,1;white,10}, rounded corners] (0,5) -- (5,3) -- (5,2) -- cycle;
    \fill[color = {rgb:black,1;white,10}, rounded corners] (0,5) -- (5,1) -- (5,0) -- cycle;
    \fill[color = {rgb:black,1;white,4}, rounded corners] (0,5) -- (5,4) -- (5,3) -- cycle;
    \fill[color = {rgb:black,1;white,4}, rounded corners] (0,5) -- (5,2) -- (5,1) -- cycle;

    \draw[thick] (-0.5,5) -- (5,5);
    \draw[thick] (0,0) -- (0,5.5);
    \draw (0,5) -- (5,4);
    \draw (0,5) -- (5,3);
    \draw (0,5) -- (5,2);
    \draw (0,5) -- (5,1);
    \draw (0,5) -- (5,0);
        
    \node at (-.4,5.125) {$a$};
    \node at (-.125,5.4) {$b$};
    \node at (5.7,4) {\footnotesize $wa = b$};
    \node at (5.7,3) {\footnotesize $\lambda_wa = b$};
    \node at (6.1,2) {\footnotesize $(w-1)a = b$};
    \node at (5.9,1) {\footnotesize $\lambda_{w-1} a = b$};
    \node at (5.6,0) {\footnotesize $a = b$};
    
    \node at (4,4.5) {$\bfY_w$};
    \node at (4,3.8) {$\bfX_w$};
    \node at (4,3) {$\bfW_w$};
    \node at (4,2.2) {$\bfV_w$};
    \node at (4,1.4) {$\bfU_w$};
    \node at (-4,2.5) {$
    \begin{array}{lll}
    \bfY_w &=& \lbrace (a,b): w a \leq b \rbrace \\
    ~ \\ 
    \bfX_w &=& \lbrace (a,b): \lambda_{w}a < b < wa \rbrace \\
    ~ \\ 
    \bfW_w &=& \lbrace (a,b): (w-1)a \leq b \leq \lambda_{w}a \rbrace \\
    ~ \\ 
    \bfV_w &=& \lbrace (a,b): \lambda_{w-1}a < b < (w-1)a \rbrace \\
    ~ \\ 
    \bfU_w &=& \lbrace (a,b): a \leq b \leq \lambda_{w-1} a \rbrace \\
    \end{array}$
};
    \end{tikzpicture}
\end{center}
Note that $\bfX_w = \bfV_{w+1}$ and $\bfU_w \cup \bfV_w \cup \bfW_w = \bfU_{w+1}$. Moreover, $\bfW_2  = \bfU_3=  \{ (a, a) : a \in \bbN\}$ and $\bfU_2,\bfV_2 = \emptyset$. Our result is an almost complete characterization of the quantum max-flow in the bridge graph. The value of the max-flow is summarized in \autoref{table} and \autoref{thm:maintheoremintro}. The case marked with $\diamondsuit$ is complete, but the technical statement involves the generalized Fibonacci sequence introduced in \autoref{sec:prehomogtensorspacecastling}:
\begin{align*}
 z^{(w)}_0 &= 0, \\
 z^{(w)}_1 &= 1, \\
 z^{(w)}_{p+1} &= w \cdot z^{(w)}_p - z^{(w)}_{p-1} .
\end{align*}
In \autoref{lemma: arithmetic a b p}, we will see that if $(a,b) \in \bfX_w$, then there exist unique nonnegative integers $p,\alpha,\beta$ with $p,\beta > 0$ such that 
\[
a = z^{(w)}_p \alpha + z^{(w)}_{p+1} \beta \qquad b = z^{(w)}_{p+1} \alpha + z^{(w)}_{p+2} \beta .
\]
The integers $\alpha,\beta$ play an important  role in the characterization of the max-flow. 

The incomplete cases are the ones with white background in \autoref{table}, marked with a question mark `$?$': we prove that the conjectured value in attained in a number of cases, marked with $\bigstar$ and $\clubsuit$ in \autoref{thm:maintheoremintro}. The case $\clubsuit$ is technical and involves the definition of the \emph{castling depth}, see \autoref{def:castlingdepth}.
\begin{theorem}\label{thm:maintheoremintro}
The quantum max-flow in the bridge graph is given by the value specified in \autoref{table}. The cases marked with $\diamondsuit,\bigstar,\clubsuit$ are described below: 
\begin{itemize}
    \item [$(\diamondsuit)$] \textrm{(Complete case)} Let $(a,b), (a',b') \in \bfX_w$ and let 
		\begin{equation*}
  \begin{array}{lccl}
	 a = z^{(w)}_p \alpha + z^{(w)}_{p+1} \beta, & &  &b = z^{(w)}_{p+1} \alpha + z^{(w)}_{p+2} \beta, \\
	 a' = z^{(w)}_{p'} \alpha' + z^{(w)}_{{p'}+1} \beta', & & &b' = z^{(w)}_{{p'}+1} \alpha' + z^{(w)}_{{p'}+2} \beta',
  \end{array}
\end{equation*}
be their expression in terms of the generalized Fibonacci numbers. Then, 
\begin{equation*}
\qmaxflow \left( \bridgegraphsmall{a}{b}{w}{b'}{a'} \right)= 
\left\{ \begin{array}{ll}
	a\cdot b' & \text{if } p'>p,\\
	a\cdot b' - \beta\cdot \alpha' & \text{if } p = p', \\
	a'\cdot b & \text{if } p'<p.\\
        \end{array}\right.
\end{equation*}
\item[$(\bigstar)$] (Incomplete case) Let $(a,b), (a',b') \in \bfU_w \cup \bfV_w \cup  \bfW_w$. If $(a,b) = q \cdot (a',b')$ for some $q \in \bbQ$, then 
\[
\qmaxflow \left( \bridgegraphsmall{a}{b}{w}{b'}{a'} \right)= a \cdot b' = a' \cdot b.
\]
 \item [$(\clubsuit)$] (Incomplete case) Let  $(a,b) , (a',b')\in \bfW_w$. Then
	 \[
	\qmaxflow\left(\bridgegraphsmall{a}{b}{w}{b'}{a'}\right) = \left\{ 
	\begin{array}{ll}
a \cdot b' & \text{if }  \depth(a,b) > \depth(a',b') \\
a' \cdot b & \text{if }  \depth(a,b) < \depth(a',b'),
\end{array} \right.
\]
where $\depth$ denotes the castling depth.
\end{itemize}
\end{theorem}

\begin{table}
\renewcommand{\arraystretch}{1.8}
\setlength{\tabcolsep}{.1cm}
\centering
\begin{tabular}{ |>{\centering\arraybackslash}p{1cm}|>{\centering\arraybackslash}p{1cm} |>{\centering\arraybackslash}p{1.5cm} >{\centering\arraybackslash}p{1.5cm} >{\centering\arraybackslash}p{1.5cm}>{\centering\arraybackslash}p{1.5cm} >{\centering\arraybackslash}p{1.5cm}| }
\hline
 &&\multicolumn{5}{c|}{$(a,b)$} \\
\hline
&& $\mathbf{U}_w$&$\mathbf{V}_w$&$\mathbf{W}_w$&$\mathbf{X}_w$&$\mathbf{Y}_w$\\
\hline
&$\mathbf{U}_w$& 
    \cellcolor[RGB]{255,255,255} $\bigstar^{?} $&$\cellcolor[RGB]{60,180,75}ab'$&
    \cellcolor[RGB]{70,240,240}$ab'$&\cellcolor[RGB]{70,240,240}$ab'$&\cellcolor[RGB]{245,130,48}$ab'$ \\
&$\mathbf{V}_w$&{\cellcolor[RGB]{60,180,75}}$a'b$&\cellcolor[RGB]{255,255,255}$\bigstar ^{?}$&\cellcolor[RGB]{70,240,240}$ab'$&\cellcolor[RGB]{70,240,240}$ab'$&\cellcolor[RGB]{245,130,48}$ab'$ \\
$(a',b')$&$\mathbf{W}_w$&\cellcolor[RGB]{70,240,240}$a'b$&\cellcolor[RGB]{70,240,240}$a'b$&\cellcolor[RGB]{255,255,255}$\clubsuit + \bigstar ^{?}$ &\cellcolor[RGB]{220,190,255}$ab'$&\cellcolor[RGB]{245,130,48}$ab'$  \\
&$\mathbf{X}_w$&\cellcolor[RGB]{70,240,240}$a'b$&\cellcolor[RGB]{70,240,240}$a'b$&\cellcolor[RGB]{220,190,255}$a'b$&\cellcolor[RGB]{220,190,255}$\diamondsuit$&\cellcolor[RGB]{245,130,48}$ab'$  \\
&$\mathbf{Y}_w$&\cellcolor[RGB]{245,130,48}$a'b$&\cellcolor[RGB]{245,130,48}$a'b$&\cellcolor[RGB]{245,130,48}$a'b$&\cellcolor[RGB]{245,130,48}$a'b$&\cellcolor[RGB]{245,130,48} $aa'w$\\
\hline 
 &&\multicolumn{5}{c|}{(?): Conjectured value: $\min\left\{ a\cdot b',a'\cdot b \right\}$.} \\
\hline
\end{tabular}
\vspace{1em}
\caption{The quantum max-flow in the bridge graph with bond dimension $a,b,a',b',w$. The orange cases are solved in \autoref{thm:maxfloweasyregion}. The cyan colored cases are solved in \autoref{thm:wxvsuv}. The purple cases are solved in \autoref{thm:xvswx}, the precise formulation of $\diamondsuit$ can be found in \autoref{thm:maintheoremintro}. The green cases are solved in \autoref{cor:v versus u}. The case marked with $\clubsuit$ is partly solved in \autoref{thm:diffcastlingdegree}. The cases marked with $\bigstar$ are partly solved in \autoref{thm: UVW quiver}. A precise formulation of the results is in \autoref{thm:maintheoremintro}.}\label{table}
\end{table}

In particular, we have a full characterization of the quantum max-flow in the bridge graph for \emph{symmetric} bond dimension, namely in the case $(a,b) = (a',b')$. 
\begin{corollary}
Let $w,a,b \geq 1$. Then
\begin{itemize}
\item If $(a,b) \in \bfU_w \cup \bfV_w \cup \bfW_w$, then 
\[
\qmaxflow\left(\bridgegraphsmall{a}{b}{w}{b}{a}\right) = a b.
\]
\item If $(a,b) \in \bfX_w$, and let 
\[
	 a = z^{(w)}_p \alpha + z^{(w)}_{p+1} \beta \qquad b = z^{(w)}_{p+1} \alpha + z^{(w)}_{p+2} \beta 
\]
be its expression in terms of generalized Fibonacci numbers. Then, 
\[
\qmaxflow \left( \bridgegraphsmall{a}{b}{w}{b}{a} \right)= ab - \alpha\beta.
\]
\item If $(a,b) \in \bfY_w$, then 
\[
\qmaxflow \left( \bridgegraphsmall{a}{b}{w}{b}{a} \right)= wa^2.
\]
\end{itemize}
\end{corollary}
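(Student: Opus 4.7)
The plan is to derive the corollary as a direct specialization of \autoref{thm:maintheoremintro} to the symmetric setting $(a',b')=(a,b)$. Since the bridge-graph max-flow is invariant under the symmetry swapping the pairs $(a,b)$ and $(b,a)$ (noted in \autoref{subsec:Tensornetworks and the quantum max-flow}), and since the regions $\bfU_w,\bfV_w,\bfW_w,\bfX_w,\bfY_w$ cover the regime $a\le b$, one may assume $a\le b$ and then proceed region by region, in each case identifying which sub-case of the main theorem is triggered.

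If $(a,b)\in \bfU_w\cup \bfV_w\cup \bfW_w$, then with $(a',b')=(a,b)$ the proportionality hypothesis of case $(\bigstar)$ holds trivially with $q=1$, and yields
\[
\qmaxflow\left(\bridgegraphsmall{a}{b}{w}{b}{a}\right)=ab'=ab;
\]
in the $\bfW_w$ subregion one can alternatively invoke $(\clubsuit)$, since the castling depths of the two pairs agree tautologically. If $(a,b)\in \bfX_w$, the $z^{(w)}_\bullet$-expansion recalled in \autoref{sec:prehomogtensorspacecastling} applied to the symmetric pair forces $p=p'$ and $(\alpha,\beta)=(\alpha',\beta')$; the middle branch of $(\diamondsuit)$ is therefore the relevant one and produces $ab'-\beta\alpha' = ab - \alpha\beta$. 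Finally, if $(a,b)\in \bfY_w$, the pair sits in the $\bfY_w\times \bfY_w$ cell of \autoref{table}, which is unconditionally resolved by \autoref{thm:maxfloweasyregion} and gives $aa'w=wa^2$.

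No substantive obstacle is expected: the corollary is essentially a collation of three specialized readings of the main theorem, made possible by the fact that symmetry collapses each of the partially-resolved cases $\bigstar$, $\clubsuit$, $\diamondsuit$ into its fully-resolved branch. The one small point worth verifying explicitly is that the $z^{(w)}_\bullet$-expansion for $(a,b)\in \bfX_w$ is unique, so the symmetric pair genuinely lands in the equality branch $p=p'$ of $(\diamondsuit)$ rather than in one of the two strict-inequality branches; this uniqueness is established in \autoref{sec:prehomogtensorspacecastling} and is the only prerequisite one must inspect carefully before reading off the formula.
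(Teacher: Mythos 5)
Your proposal is correct and matches the paper's (implicit) derivation: the corollary is stated in the paper without a separate proof, precisely because it is obtained by specializing $(a',b')=(a,b)$ in \autoref{thm:maintheoremintro} exactly as you do, using $q=1$ for $(\bigstar)$, the uniqueness of the $z^{(w)}_\bullet$-expansion from \autoref{lemma: arithmetic a b p} to land in the $p=p'$ branch of $(\diamondsuit)$, and \autoref{thm:maxfloweasyregion} for the $\bfY_w$ cell.
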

We point out that in all cases computed in \autoref{thm:maintheoremintro} except for the ones in the region marked by $\diamondsuit$ the value of the quantum max-flow in the bridge graph coincides with the value of the quantum min-cut, defined in \autoref{sec:qmaxflowproblem}. We prove in \autoref{prop:mincutbridge} that 
\[
\qmincut\left(\bridgegraphsmall{a}{b}{w}{b'}{a'}\right) = \min\{ aa'w, a'b,ab'\},
\]
which allows us to obtain the following result.
\begin{corollary}
 In all cases computed in \autoref{thm:maintheoremintro}, the quantum max-flow coincides with the quantum min-cut except for the case $(a,b),(a',b') \in \bfX_w$ such that the expressions in terms of generalized Fibonacci numbers given by 
		\begin{equation*}
  \begin{array}{lccl}
	 a = z^{(w)}_p \alpha + z^{(w)}_{p+1} \beta, & &  &b = z^{(w)}_{p+1} \alpha + z^{(w)}_{p+2} \beta, \\
	 a' = z^{(w)}_{p'} \alpha' + z^{(w)}_{{p'}+1} \beta', & & &b' = z^{(w)}_{{p'}+1} \alpha' + z^{(w)}_{{p'}+2} \beta',
  \end{array}
\end{equation*}
satisfy $p = p'$ and $\alpha,\alpha' > 0$.
\end{corollary}

\autoref{thm:maintheoremintro} is almost complete and we conjecture that our results can be extended to the incomplete cases:
\begin{conjecture}\label{conj:intro}
Let $(a,b), (a',b') \in \bfU_w \cup \bfV_w \cup \bfW_w$. Then 
	\begin{equation*}
	\qmaxflow\left(\bridgegraphsmall{a}{b}{w}{b'}{a'}\right) = \qmincut\left(\bridgegraphsmall{a}{b}{w}{b'}{a'}\right) = \text{min} \left\{ a\cdot b',a'\cdot b \right\}.
	\end{equation*}
\end{conjecture}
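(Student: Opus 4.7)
The plan is to extend the castling-transform machinery that already handles the subcases $\clubsuit$ and $\bigstar$ to the full diagonal of \autoref{table}. By the symmetry recorded in \autoref{subsec:Tensornetworks and the quantum max-flow} we may assume $ab' \le a'b$, so that $\qmincut = ab'$ and only the lower bound remains: construct tensors $T \in A \otimes B \otimes W$ and $T' \in A' \otimes B' \otimes W^*$ with $\rank F_{T,T'} = ab'$. When $(a,b)$ and $(a',b')$ sit in \emph{different} regions among $\bfU_w,\bfV_w,\bfW_w$, the claim is already covered by \autoref{thm:wxvsuv} and \autoref{cor:v versus u}; the remaining open cells are therefore the three diagonal entries.

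First I would set up simultaneous castling on both sides of the bridge. Each dimension vector $(a,b)$ in $\bfU_w \cup \bfV_w \cup \bfW_w$ has finite castling depth on the $w$-Kronecker quiver, and the castling action preserves the stratification by regions up to a predictable shift. The first step would be to establish a transformation rule saying that $\qmaxflow$ with bond dimensions $(a,b,w,b',a')$ equals $\qmaxflow$ with bond dimensions $(\tilde{a}, \tilde{b}, w, b', a')$ plus an explicit bilinear correction depending on all five parameters, where $(\tilde{a},\tilde{b})$ is the castling of $(a,b)$. This mirrors the reductions used in the proofs of \autoref{thm:diffcastlingdegree} and \autoref{thm: UVW quiver}. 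Applied at each step to the side of larger castling depth, such a rule would drive the problem either into a case already solved, or to the proportional boundary $(a,b) = q(a',b')$, where \autoref{thm: UVW quiver} and \autoref{thm:diffcastlingdegree} conclude.

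The main obstacle is the construction of the maximizing pair $(T,T')$ in the diagonal cells where $(a,b)$ and $(a',b')$ are not proportional and neither castling depth dominates. The Schofield semi-invariants of the Kronecker quiver used in \cite{Sch:SemiInvQuivers,SCHOFIELD2001125,domokos} produce the covariants that realize the min-cut in the proportional case, but their weights depend nontrivially on the dimension vector, so a covariant giving full rank on one side of the bridge need not survive a castling operation on the other side. My preferred route around this would be a dimension-theoretic argument: consider the morphism $(T,T') \mapsto F_{T,T'}$ from the Kronecker representation variety into the variety of $ab' \times a'b$ matrices of rank $\le ab'$, and show, using the defining inequalities of $\bfU_w \cup \bfV_w \cup \bfW_w$ together with the semistability criteria from \cite{Sch:SemiInvQuivers,derksenmakampolybounds}, that the image meets the open stratum of maximal-rank matrices. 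Making the dimension count work uniformly across all three diagonal cells, while reconciling it with the asymmetric depth bookkeeping from the castling reduction, is the technical heart of the argument and the step I expect to be the hardest.
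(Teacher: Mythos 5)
The statement you are trying to prove is a \emph{conjecture} in the paper, not a theorem. The paper itself proves it only in special cases (proportional pairs in \autoref{thm: UVW quiver}, pairs of different castling depth in \autoref{thm:diffcastlingdegree}, some same-depth pairs in \autoref{thm:samecastlingdegree}), and the best it offers beyond that is \autoref{prop:conjs equ}, a reduction of the general case to $w=3$ via \autoref{lem:vwequtow} and \autoref{lemma: full rank from u to vw}. The diagonal $\bigstar$ cells in \autoref{table} are explicitly flagged as open. So there is no paper proof to compare against, and a ``proof proposal'' for \autoref{conj:intro} should have been expected to hit exactly the wall the authors ran into.

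Your sketch has two concrete gaps. First, the ``transformation rule'' you posit in step one — castling only $(a,b)$ to $(\tilde a,\tilde b)$ while leaving $(a',b')$ fixed, with an explicit bilinear correction — does not exist as far as the paper's machinery goes, and is not something you derive. \autoref{thm:QMaxFlowcastling} castles \emph{both} sides of the bridge simultaneously: it relates $\qmaxflow(a,b,w,b',a')$ to $\qmaxflow(b,bw-a,w,b'w-a',b')$. The proof of that theorem (via \autoref{thm: kernel preserving castling stuff}) hinges on the kernel of $F^1_{E,E'}$ equalling the kernel of the transpose of $F^1_{E^\perp,E'^\perp}$, which intrinsically ties the two perpendicular complements together; there is no analogous identity for a one-sided castling. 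If you castle only one side you leave the framework of the proved results entirely. Second, even granting the two-sided reduction, you do not show that iterating it drives every pair of same-depth, non-proportional dimension vectors into a solved case: \autoref{thm:diffcastlingdegree} needs a strict depth inequality, \autoref{thm: UVW quiver} needs exact proportionality, and the castling orbit of a generic same-depth pair need not land in either. The closing ``dimension-theoretic argument'' is stated as a hope, not carried out, and you yourself identify it as ``the step I expect to be the hardest'' — but it is not merely hard, it is the entire content of the conjecture. As written, the proposal restates the problem in the language the paper already uses without resolving it.
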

In \autoref{sec:conjebehaviour}, we prove a reduction argument allowing one to deduce \autoref{conj:intro} for any $w$ from the case $w = 3$. 

\subsection{Quantum max-flow and castling transform}

A key ingredient in the proof of the results described in \autoref{subsec:Tensornetworks and the quantum max-flow} is that the quantum max-flow is easy to control under \emph{castling transform}. The castling transform was introduced in \cite{SatKim:ClassificationIrredPrehomVS} in the study of prehomogeneous tensor spaces; in general, it defines a correspondence between orbits in tensor spaces under particular group actions. We refer to \autoref{sec:prehomogtensorspacecastling} for the precise definition. In the framework of the castling transform, we will prove the following result:
\begin{theorem}\label{thm:QMaxFlowcastlingintro}
Let $a,b,a',b',w$ be natural numbers such that $a \leq bw$ and  $a' \leq b'w.$ Then, 
\begin{align*}
	a\cdot b' - \qmaxflow \left( 
	\bridgegraphsmall{a}{b}{w}{b'}{a'} 
\right)  = 
	b\cdot (b'w - a')-\qmaxflow \left( 
	\bridgegraphbig{b}{bw-a}{w}{b'w-a'}{b'}
\right) .
\end{align*}
Moreover, if $b \leq aw$ and $b' \leq a'w$, then
\begin{align*}
	a\cdot b' - \qmaxflow \left( \bridgegraphsmall{a}{b}{w}{b'}{a'} \right) = (wa - b) \cdot a' - \qmaxflow \left( \bridgegraphbig{wa-b}{a}{w}{a'}{wa'-b'} \right) .
\end{align*}
\end{theorem}

We briefly outline how \autoref{thm:QMaxFlowcastlingintro} allows one to compute the quantum max-flow in the bridge graph. For a detailed discussion, we refer to \autoref{sec:qmaxflowinbridge}. On a high level, we use \autoref{thm:QMaxFlowcastlingintro} to reduce the computation of the quantum max-flow to one of the following two, easy cases:
\begin{enumerate}[(i)]
    \item If $b \geq a\cdot w$, that is, $(a,b) \in \mathbf{Y}_w$, then 
    \begin{equation*}
        \qmaxflow\left( \bridgegraphsmall{a}{b}{w}{b'}{a'}\right) = \min \lbrace a\cdot b', a\cdot a' \cdot w \rbrace.
    \end{equation*}
    \item If $a \leq b$ and $b' \leq a'$, then
    \begin{equation*}
        \qmaxflow\left( \bridgegraphsmall{a}{b}{w}{b'}{a'}\right) = a\cdot b'.
    \end{equation*}
\end{enumerate}

For a fixed $w$, we say that two pairs of natural numbers $(a,b)$ and $(\tilde{a},\tilde{b})$ are \textit{castling equivalent} if $(a,b)$ can be transformed into $(\tilde{a},\tilde{b})$ by a number of operations $(a,b) \mapsto (wa-b,a)$ or $(a,b) \mapsto (b,wb-a)$. 

It turns out that every pair $(a,b) \in \mathbf{X}_w$ is castling equivalent to a pair in $\mathbf{Y}_w$. Consequently, we can apply \autoref{thm:QMaxFlowcastlingintro} multiple times and use (i) to calculate the quantum max-flow. This procedure is applied in the proofs of \autoref{thm:wxvsuv} and \autoref{thm:xvswx}. 

A pair $(a,b) \in \mathbf{U}_w \cup \mathbf{V}_w \cup \mathbf{W}_w$ on the other hand is always castling equivalent to a pair $(\tilde{a},\tilde{b})$ where $\tilde{a} > \tilde{b}$. This -- in many cases -- reduces calculating the quantum max-flow in the bridge graph to the easier case (ii). This procedure is applied in the proofs of \autoref{thm:wxvsuv}, \autoref{cor:v versus u} and \autoref{thm:diffcastlingdegree}.

\subsection{Open questions}\label{subsec:Open ends}
We identify some open ends of this work.
\begin{enumerate}
	\item One open task is \autoref{conj:intro}. The reduction discussed in \autoref{sec:conjebehaviour} guarantees that the case $w =3$ is equivalent to the full conjecture.
	\item We believe that methods similar to ours can be applied to graphs that are not bridge graphs. For example, it would be interesting to see if one can calculate the quantum max-flow in translation-invariant matrix product states with periodic boundary conditions -- which is the main example in prior work on the quantum max-flow where it was used to show separations between quantum min-cut and quantum max-flow~\cite{GesLanWal:MPSandQuantumMaxFlowMinCut}.  
\end{enumerate}

\section{Preliminaries}

In this section, we provide the exact definition of the quantum max-flow for a general graph, and we prove how the general definition reduces to the one given in \autoref{subsec:Tensornetworks and the quantum max-flow} in the case of the bridge graph. Moreover, we introduce the castling framework and describe the connections to the theory of prehomogeneous vector spaces that will be useful in the proofs of the main results.

\subsection{Tensor networks and the quantum min-cut/max-flow problem}\label{sec:qmaxflowproblem}

A tensor network is defined starting from a simple graph $\Gamma =(V,E)$ together with two assignments of integer weights $\bfm: E \rightarrow \mathbb{N}$ and $\textbf{n}:V\rightarrow \mathbb{N}$ on the set of edges and vertices of $\Gamma$, respectively. The values $\textbf{m}(e)$ are called the \textit{bond dimensions} of the network and the values $\textbf{n}(v)$ are called \textit{physical dimensions}. Now, given a collection of tensors $T_v \in \left(\bigotimes_{e \in E : v \in e} \bbC^{\bfm(e)} \right) \otimes \bbC^{\bfn(v)}$, for all $v \in V$ let $T \in \bigotimes_{v \in V} \bbC^{\bfn(v)}$ be the $|V|$-party unnormalized quantum state, or equivalently tensor of order $|V|$, obtained by contracting the $T_v$'s on the tensor factors corresponding to the same edge; such a tensor $T$ is a tensor network state associated to the graph $\Gamma$ with the associated bond and physical dimensions. The set of all unnormalized quantum states that can be constructed in this way form the tensor network class associated to the graph $\Gamma$ with bond dimension $\bfm$ and physical dimension $\bfn$, denoted $\calTNS^\Gamma(\bfm,\bfn)$. 

Given a network $\Gamma$ with an assignment of bond dimensions $\bfm$, let $\mathcal{S},\mathcal{T} \subseteq V$ be two disjoint subsets; for $v \in \mathcal{S} \cup \mathcal{T}$, set $n_v = \prod_{e \ni v} m_e$ and for $v \notin \mathcal{S} \cup \mathcal{T}$, set $n_v = 1$. Now, every element $T \in \calTNS^\Gamma(\textbf{m},\bfn)$ can be regarded as a bipartite state $T \in \left(\bigotimes_{v\in \calS} \bbC^{\bfn(v)} \right) \otimes \left( \bigotimes_{v\in \calT} \bbC^{\bfn(v)}\right)$, or equivalently a linear map $F_T : \bigotimes_{v \in\mathcal{S}} {\mathbb{C}^{n_v}}^* \to \bigotimes_{v \in\mathcal{T}} \mathbb{C}^{n_v}$. The \emph{quantum max-flow} of $(\Gamma,\textbf{m})$ relative to the subsets $\mathcal{S},\mathcal{T}$ is the maximum possible Schmidt rank of this bipartite state, namely
\begin{equation*}
 \qmaxflow(\Gamma,\textbf{m}; \mathcal{S},\mathcal{T}) = \max \bigl\{ \rank (F_T) : T \in \calTNS^\Gamma(\textbf{m},\bfn) \bigr\}.
 \end{equation*}

A cut of $\Gamma$ relative to the sets $\mathcal{S},\mathcal{T}$ is a partition $V = \mathcal{A} \sqcup \mathcal{B}$ of $V$ with $\mathcal{S} \subseteq \mathcal{A}$ and $\mathcal{T}\subseteq \mathcal{B}$. The quantum capacity of a cut is 
\[
 \qcap(\mathcal{A},\mathcal{B}) = \prod_{\substack{\{v_1,v_2\} \in E \\
			      v_1 \in \mathcal{A},v_2 \in \mathcal{B}}} m_{\{v_1v_2\}}.
\]
The \emph{quantum min-cut} of $(\Gamma,\textbf{m})$ relative to the sets $\mathcal{S},\mathcal{T}$ is 
\begin{equation}\label{eq:mincut}
	\qmincut(\Gamma,\textbf{m}, (\mathcal{S},\mathcal{T})) = \min \bigl\{ \qcap(\mathcal{A},\mathcal{B}) : \mathcal{S} \subseteq \mathcal{A}, \mathcal{T}\subseteq \mathcal{B}, \mathcal{A} \cap \mathcal{B} =\emptyset \bigr\}.
\end{equation}

We state two immediate but crucial properties of the quantum min-cut and the quantum max-flow. They were first stated and shown in~\cite{CuiShaFreSatStoMin:QuantumMaxFlowMinCut}.

\begin{proposition}\label{prop:basicfactsaboutmincut}
For every network $\Gamma$ with bond dimensions $\textbf{m}$, the following holds: 
\begin{itemize}
	\item $\qmaxflow(\Gamma,\bfm;\calS,\calT)\leq\qmincut (\Gamma,\bfm;\calS,\calT)$
	\item The set of $T \in \calTNS^{\Gamma}(\mathbf{m},\bfn)$ such that $\rank(F_T) = \qmaxflow(\Gamma,\mathbf{m};\mathbf{S},\mathbf{T})$ is a dense open subset of $\calTNS^{\Gamma}(\bfm,\bfn)$, in the Zariski and in the Euclidean topology.
\end{itemize}
\end{proposition}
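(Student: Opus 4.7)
The plan is to treat the two assertions independently, since they rely on different mechanisms.

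For the inequality $\qmaxflow\leq\qmincut$, the strategy is to show that for every cut $(\calA,\calB)$ separating $\calS$ from $\calT$ and every $T\in\calTNS^\Gamma(\bfm,\bfn)$, one has $\rank(F_T)\leq \qcap(\calA,\calB)$; taking the maximum over $T$ and the minimum over cuts yields the claim. Given vertex tensors $\{T_v\}_{v\in V}$ that contract to $T$, I would perform the contraction in two stages: first contract every edge with both endpoints in $\calA$, collecting the result into a single tensor $T_\calA$, and analogously form $T_\calB$ on the other side. Since $\bfn(v)=1$ for $v\notin\calS\cup\calT$, the tensor $T_\calA$ has physical legs only over $\calS$ together with one bond leg per cut edge, and likewise $T_\calB$ has physical legs only over $\calT$. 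The full state $T$ then arises by contracting $T_\calA$ and $T_\calB$ along the cut edges, which exhibits $F_T$ as the composition
\[
\bigotimes_{v\in\calS}(\bbC^{\bfn(v)})^*\;\xrightarrow{\;\phi_\calA\;}\;\bigotimes_{e\in\mathrm{cut}(\calA,\calB)}\bbC^{\bfm(e)}\;\xrightarrow{\;\phi_\calB\;}\;\bigotimes_{v\in\calT}\bbC^{\bfn(v)},
\]
bounding $\rank(F_T)$ by the dimension of the middle factor, which is exactly $\prod_{e\in\mathrm{cut}(\calA,\calB)}\bfm(e)=\qcap(\calA,\calB)$.

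For the density statement, the key observation is that the map $\Phi\colon\prod_{v\in V}\bigl(\bigotimes_{e\ni v}\bbC^{\bfm(e)}\otimes\bbC^{\bfn(v)}\bigr)\to\bigotimes_{v\in V}\bbC^{\bfn(v)}$ sending $(T_v)_v$ to its contraction is polynomial (in fact multilinear in each $T_v$), and the reshape $T\leftrightarrow F_T$ is linear. Hence the matrix entries of $F_T$ are polynomials in the coordinates of the $T_v$. Since $\{M:\rank M\geq r\}$ is Zariski open, cut out by the non-vanishing of some $r\times r$ minor, the locus $\{(T_v)_v : \rank F_{\Phi((T_v)_v)}\geq \qmaxflow(\Gamma,\bfm;\calS,\calT)\}$ is Zariski open in the irreducible affine variety $\prod_v(\cdots)$, and it is non-empty by the very definition of $\qmaxflow$, hence dense. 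Pushing forward along $\Phi$ and using that $\calTNS^\Gamma(\bfm,\bfn)$ is the image of an irreducible variety under a polynomial map, the corresponding subset of $\calTNS^\Gamma(\bfm,\bfn)$ is dense in $\calTNS^\Gamma(\bfm,\bfn)$, and open in it by the lower semicontinuity of rank on the ambient tensor space.

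Neither step poses a substantial obstacle: the first is a direct consequence of how tensor contraction factors through the cut edges, and the second is the standard lower semicontinuity of matrix rank transported along the contraction map. The only subtlety is the bookkeeping in the first step, where one must verify that the trivial physical dimensions assigned to vertices in $V\setminus(\calS\cup\calT)$ ensure that after internal contraction all surviving physical legs of $T_\calA$ (resp.\ $T_\calB$) live over $\calS$ (resp.\ $\calT$)---a convention built directly into the definition.
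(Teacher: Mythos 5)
Your proof is correct and is the standard argument. The paper itself gives no proof of this proposition --- it cites \cite{CuiShaFreSatStoMin:QuantumMaxFlowMinCut} and calls both parts immediate consequences of the definitions --- and your two arguments (factoring $F_T$ through the cut-edge space $\bigotimes_{e\in\mathrm{cut}(\calA,\calB)}\bbC^{\bfm(e)}$ to bound its rank by $\qcap(\calA,\calB)$, and lower semicontinuity of rank pulled back along the polynomial contraction map $\Phi$ for density and relative openness) supply exactly the expected details.
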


In general, computing the quantum max-flow is not trivial. In fact, in most cases, even proving a separation between quantum max-flow and quantum min-cut is challenging. In \cite{CuiShaFreSatStoMin:QuantumMaxFlowMinCut}, the first examples where these two quantities are not the same was presented. In \cite{GesLanWal:MPSandQuantumMaxFlowMinCut}, families of examples presenting big separations between quantum max-flow and quantum min-cut were given, but computing the quantum max-flow exactly seemed out of reach. 

In this work we want to compute $\qmincut(\Gamma,\textbf{m}, (\mathcal{S},\mathcal{T}))$ and $\qmaxflow(\Gamma,\textbf{m}, (\mathcal{S},\mathcal{T}))$ in the case of the \emph{bridge network}
\begin{center}
 \begin{tikzpicture}[scale = .5]
\draw[color = blue, fill = blue!2!white] ( -.8, -.8) rectangle (0.8, 5.8);
\draw[color = blue, fill = blue!2!white] ( 9.2, -.8) rectangle (10.8, 5.8);
 \draw[fill=black] (0,0) circle (.5cm);
 \draw[fill=black] (0,5) circle (.5cm);
 \draw[fill=black] (5,0) circle (.5cm);
 \draw[fill=black] (5,5) circle (.5cm);
 \draw[fill=black] (10,0) circle (.5cm);
 \draw[fill=black] (10,5) circle (.5cm);
 \path[draw] (0,0)--(10,0);
 \path[draw] (0,5)--(10,5);
 \path[draw] (5,0)--(5,5);
\node[anchor=north] () at (2.5,0) {$b'$};
\node[anchor=north] () at (7.5,0) {$a'$};
\node[anchor=south] () at (7.5,5) {$b$};
\node[anchor=south] () at (2.5,5) {$a$};
\node[anchor=west] () at (5,2.5) {$w$};
\node[anchor=north] () at (0,-1) {$\mathcal{S}$};
\node[anchor=north] () at (10,-1) {$\mathcal{T}$};
\node[anchor=east] () at (-1,2.5) {$\Gamma:$};
\end{tikzpicture}
\end{center}
for integers $a,b,w,b',a'$. For the four vertices of degree one, the local tensors $T_v$ in the definition of the tensor network state have no effect. This yields the following immediate result.
\begin{lemma}
Let $\bfm = (a,b,w,b',a')$ be bond dimensions on the bridge graph. Let $\bfn$ be the physical dimensions induced by the choice of $\calS$ and $\calT$. Write $A = \bbC^a$ and similarly for $B,W,B',A'$. Then, regarded as a subset of $A \otimes B \otimes A' \otimes B'$,
\[
\calTNS^\Gamma(\bfm,\bfn) = \{ T \contract T' \in  A \otimes B \otimes A' \otimes B': T \in A\otimes B \otimes W , T'  \in A' \otimes B' \otimes W^* \} 
\]
where $\contract : W \otimes W^* \to \bbC$ is the natural tensor contraction. In particular
\[
 \qmaxflow\left(\bridgegraphsmall{a}{b}{w}{b'}{a'}\right) = \max \left\{ \rank( T \contract T'  : (A \otimes B')^* \to A' \otimes B ): \begin{array}{l} T \in A\otimes B \otimes W , \\ T'  \in A' \otimes B' \otimes W^* \end{array} \right\}.
\]
\end{lemma}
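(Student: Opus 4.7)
The plan is to unpack the definitions directly: list all six vertices of the bridge graph, compute their local tensors, perform the contraction, and observe that the four degree-one vertices can be absorbed into the two central tensors.

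First I would identify the vertices by position. The bridge graph has four ``corner'' vertices, which are exactly the elements of $\calS \cup \calT$, and two ``central'' vertices of degree three that are not in $\calS \cup \calT$. Each corner vertex is incident to a unique edge carrying one of the bond dimensions $a, b, a', b'$, so the physical dimension assigned to that corner equals the bond dimension of its unique incident edge. Consequently, the corner local tensors live in $\bbC^a \otimes \bbC^a$, $\bbC^b \otimes \bbC^b$, $\bbC^{b'} \otimes \bbC^{b'}$ and $\bbC^{a'} \otimes \bbC^{a'}$ respectively, and can be viewed as endomorphism tensors $M_A, M_B, M_{B'}, M_{A'}$. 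The two central vertices, being outside $\calS \cup \calT$, have physical dimension $1$, so their local tensors are simply elements $T \in A \otimes B \otimes W$ (top vertex) and $T' \in A' \otimes B' \otimes W^*$ (bottom vertex).

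Next I would perform the contraction prescribed by the graph. The only contraction across the bridge edge pairs the $W$ factor of $T$ with the $W^*$ factor of $T'$, producing $T \contract T' \in A \otimes B \otimes A' \otimes B'$; the remaining contractions at the four corners apply the endomorphisms $M_A, M_B, M_{A'}, M_{B'}$ to the respective bond legs of $T \contract T'$, transferring the resulting index to the physical factor at each corner vertex. This gives the explicit formula
\begin{equation*}
F \;=\; (M_A \otimes M_B \otimes M_{A'} \otimes M_{B'}) \cdot (T \contract T') \;\in\; A \otimes B \otimes A' \otimes B'
\end{equation*}
for a general tensor network state on the bridge graph, after identifying the physical space at each corner with the adjacent bond space via the equality of dimensions.

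The final step is absorption. Setting all corner endomorphisms equal to the identity shows $\{T \contract T'\} \subseteq \calTNS^\Gamma(\bfm, \bfn)$. Conversely, given any $F$ as above, define $\tilde T = (M_A \otimes M_B \otimes \id_W)(T) \in A \otimes B \otimes W$ and $\tilde T' = (\id_{W^*} \otimes M_{A'} \otimes M_{B'})(T') \in A' \otimes B' \otimes W^*$; since the $W$-contraction commutes with linear maps acting on the other legs, $F = \tilde T \contract \tilde T'$, which proves the reverse inclusion. This establishes the claimed description of $\calTNS^\Gamma(\bfm, \bfn)$, and then the formula for $\qmaxflow$ is immediate from the definition together with the identification of the bipartition $(\bigotimes_{\calS} \bbC^{n_v}, \bigotimes_{\calT} \bbC^{n_v})$ with $(A \otimes B', A' \otimes B)$. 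There is no real obstacle here; the lemma is essentially a bookkeeping exercise, and the only point to watch is keeping the bond/physical identifications at the corners consistent so that the absorption step is legitimate.
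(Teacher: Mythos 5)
Your proof is correct and follows the same route the paper takes; the paper simply states before the lemma that the degree-one vertices' local tensors ``have no effect'' and calls the lemma immediate, while you fill in exactly that absorption argument (corner tensors are square matrices that can be pushed onto the legs of $T$ and $T'$, giving both inclusions).
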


We can provide a purely linear-algebraic characterization of the quantum max-flow in the bridge graph. Fix a basis $e_1 \vvirg e_w$ of $W$ with dual basis $e^1 \vvirg e^w$. For $T \in A \otimes B \otimes W$, the \emph{slices} $T(e^j) \in A \otimes B$ can be regarded as matrices of size $a \times b$; similarly for $T' \in A' \otimes B' \otimes W^*$, the elements $T(e_j) \in A' \otimes B'$ can be regarded as matrices of size $b' \times a'$. Then, the linear map 
\[
T \contract T' : (A \otimes B')^* \to A' \otimes B
\]
is represented by the matrix $T(e^1) \boxtimes T'(e_1) + \cdots + T(e^w) \boxtimes T'(e_w) $. As $T$ and $T'$ are arbitrary we obtain the following characterization:

\begin{lemma}\label{lem:reformulationof qmaxflow}
The quantum max-flow in the bridge graph satisfies
\begin{equation*}
    \qmaxflow \left(\bridgegraphsmall{a}{b}{w}{b'}{a'} \right) = \max \left\lbrace \rank(\sum_{i = 1 }^w M_i \boxtimes N_i):  \begin{array}{c}
     M_1 \dots M_w \; \text{ matrices of size $a \times b$},     \\
     N_1 \dots N_w \; \text{ matrices of size $b' \times a'$}
    \end{array}\right\rbrace.
\end{equation*}
\end{lemma}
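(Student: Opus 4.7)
The statement is essentially a bookkeeping lemma: it re‑packages the characterization of $\qmaxflow$ established immediately before, rewriting tensors $T \in A \otimes B \otimes W$ and $T' \in A' \otimes B' \otimes W^*$ in terms of their slices with respect to a fixed basis. My plan is to simply make this passage from tensors to tuples of matrices explicit and verify that the flattening of the contraction $T \contract T'$ as a map $(A \otimes B')^* \to A' \otimes B$ coincides with a sum of Kronecker products.

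Concretely, I would start from the previous lemma, which gives
\[
\qmaxflow\left(\bridgegraphsmall{a}{b}{w}{b'}{a'}\right) = \max\bigl\{ \rank\bigl(T \contract T' : (A \otimes B')^* \to A' \otimes B\bigr) : T \in A\otimes B \otimes W, \; T' \in A' \otimes B' \otimes W^*\bigr\}.
\]
Fix the basis $e_1, \dots, e_w$ of $W$ and the dual basis $e^1, \dots, e^w$ of $W^*$. Every $T \in A \otimes B \otimes W$ decomposes uniquely as $T = \sum_{i=1}^w M_i \otimes e_i$ with $M_i = T(e^i) \in A \otimes B$, and every $T' \in A' \otimes B' \otimes W^*$ decomposes as $T' = \sum_{i=1}^w N_i \otimes e^i$ with $N_i = T'(e_i) \in A' \otimes B'$. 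These two assignments are bijections between tensors and tuples $(M_1,\dots,M_w)$ of $a \times b$ matrices, respectively $(N_1,\dots,N_w)$ of $b' \times a'$ matrices.

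Next, I would compute the contraction: applying $\contract : W \otimes W^* \to \bbC$ to the pairs $(e_i, e^j)$ gives $\delta_{ij}$, so
\[
T \contract T' = \sum_{i,j=1}^w M_i \otimes N_j \; \langle e_i, e^j \rangle = \sum_{i=1}^w M_i \otimes N_i \in A \otimes B \otimes A' \otimes B'.
\]
The only remaining point is to verify that, under the flattening $(A \otimes B') \otimes (A' \otimes B) \simeq A \otimes B \otimes A' \otimes B'$ that regards this tensor as a linear map $(A \otimes B')^* \to A' \otimes B$, each summand $M_i \otimes N_i$ corresponds to the Kronecker product $M_i \boxtimes N_i$ of the two matrix slices. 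This is precisely the content of the explicit statement preceding the lemma, and it follows from the standard convention identifying a tensor $u \otimes v \otimes u' \otimes v' \in A \otimes B \otimes A' \otimes B'$ with the rank‑one linear map $(A\otimes B')^* \to A' \otimes B$ sending $\phi \mapsto \phi(u \otimes v')\, (u' \otimes v)$.

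Taking the maximum over all tuples $(M_1,\dots,M_w)$ and $(N_1,\dots,N_w)$ then gives the claimed equality. There is no real obstacle here; the only thing one has to be careful about is the transposition conventions (matrices of size $a \times b$ versus $b' \times a'$, coming from $T' \in A' \otimes B' \otimes W^*$ being read as a map) so that the Kronecker product $M_i \boxtimes N_i$ indeed represents the appropriate matrix in $(A\otimes B')^* \to A'\otimes B$.
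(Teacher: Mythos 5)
Your argument reproduces exactly the reasoning the paper gives in the paragraph immediately preceding the lemma: fix a basis of $W$, decompose $T$ and $T'$ into slices, observe that the contraction collapses to $\sum_i M_i \otimes N_i$, and identify the flattening of each summand as the Kronecker product $M_i \boxtimes N_i$. The approach matches the paper's; you simply spell out the steps that the paper leaves as a one-paragraph observation without a formal proof environment.
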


On the other hand, it is easy to characterize the quantum min-cut of the bridge graph:
\begin{proposition}\label{prop:mincutbridge}
Let $a,b,w,a',b'$ be integers with $a \leq b$, $a' \leq b'$. The quantum min-cut in the bridge graph is 
\[
\qmincut\left(\bridgegraphsmall{a}{b}{w}{b'}{a'} \right) = \min\{ aa'w, ab',a'b\}.
\]
\end{proposition}
\begin{proof}
The result follow from the definition, since the bridge graph has three possible cuts, of quantum capacity $aa'w, ab',a'b$ respectively. 
\end{proof}


\subsection{Prehomogenous tensor spaces and the castling transform}\label{sec:prehomogtensorspacecastling}

Let $G$ be a complex linear algebraic group acting on a complex vector space $V$. The space $V$ is \emph{prehomogeneous} for the action of $G$ if the action has a dense orbit. The theory guarantees that the orbit is Zariski open, hence dense in the Euclidean topology as well. Moreover, it is unique, and it coincides with the orbit of any \emph{generic enough} element. 

In this section, we are interested in tensor spaces $\mathbb{C}^{a}\otimes \mathbb{C}^{b}\otimes \mathbb{C}^w $ which are prehomogeneous for the action of $\GL_a \times \GL_b$ on the first two factors. We provide a sufficient condition in \autoref{prop: dense orbit}, which will be applied in \autoref{sec:qmaxflowinbridge} to compute the quantum max-flow of the bridge graph in a particular range.

A fundamental ingredient in the theory of prehomogeneous spaces is the notion of \emph{castling transform}, introduced in \cite{SatKim:ClassificationIrredPrehomVS} and extensively used in the geometric study of tensor spaces, see, e.g., \cite{Man:PrehomogeneousSpaces,Ventu:PrehomogenousTensorSpaces,DerMakWal:MLETensors}. In general, let $V$ be a representation for the group $G$. Then for every $k$, $\GL_k \times G$ naturally acts on $\bbC^k \otimes V$ and on $\bbC^k \otimes V^*$. We say that an element $T \in \bbC^k \otimes V$ is $\bbC^k$-concise if the linear map $T: (\bbC^k)^* \to V$ is injective. Set $m = \dim V$ and let $1< k < m$: the castling transform defines a natural bijection between $(\GL_k \times G)$-orbits of $\bbC^k$-concise elements in $\bbC^k \otimes V$ and $(\GL_{m-k} \times G)$-orbits of $\bbC^{m-k}$-concise elements in $\bbC^{m-k} \otimes V^*$. Notice that if $T$ is $\bbC^k$-concise in $\bbC^k \otimes V$, then $E_T = \im( (\bbC^k)^* \to V )$ is a $k$-dimensional subspace of $V$, that is an element of the Grassmannian $\Gr(k,V)$. It is easy to see that such subspace is uniquely determined by the $\GL_k$-orbit of $T$: more precisely $E_T = E_{T'}$ if and only if there exists $g \in \GL_k$ such that $g \cdot T = T'$. Consequently, $(\GL_k \times G)$-orbits of $\bbC^k$-concise elements in $\bbC^k \otimes V$ are in bijection with $G$-orbits in $\Gr(k,V)$. Similarly, $(\GL_{m-k} \times G)$-orbits of $\bbC^{m-k}$-concise elements in $\bbC^{m-k} \otimes V^*$ are in bijection with $G$-orbits in $\Gr(m-k,V^*)$. The usual identification between $\Gr(k,V)$ and $\Gr(m-k,V^*)$, given by sending a linear space $E$ to its annihilator $E^\perp$ is $\GL(V)$-equivariant, hence $G$-equivariant; in particular it preserves $G$-orbits. The castling transform is defined as the composition of these correspondences: a $\bbC^k$-concise orbit in $\bbC^k \otimes V$ is sent to the corresponding orbit in $\Gr(k,V)$, which is identified with an orbit in $\Gr(m-k,V)$ and in turn it corresponds to an orbit in $\bbC^{m-k} \otimes V^*$. Several geometric properties of the orbits are preserved under castling transform. In particular, the space $\bbC^k \otimes V$ is prehomogeneous for the action of $\GL_k \times G$ if and only if the space $\bbC^{m-k} \otimes V^*$ is prehomogeneous for the action of $\GL_k \times G$; see, e.g., \cite[Prop. 28]{Man:PrehomogeneousSpaces}.

We will use the castling correspondence in \autoref{prop: dense orbit}, but to give a precise statement we need some technical results. For every $w \geq 2$, define $\lambda_w = \frac{w + \sqrt{w^2-4}}{2}$. We have $\lambda_2 = 1$, $\lambda_w \in (w - 1, w)$ for $w \geq 3$ and $\lambda_w^{-1} = \frac{w - \sqrt{w^2-4}}{2}$; in particular $\lambda_w$ and $\lambda_w^{-1}$ are the two roots of the equation $\lambda^2 - w\lambda  + 1 = 0$. For every $w \geq 2$, define recursively the generalized Fibonacci sequence
\begin{align*}
 z^{(w)}_0 &= 0 \\
 z^{(w)}_1 &= 1 \\
 z^{(w)}_{p+1} &= w \cdot z^{(w)}_p - z^{(w)}_{p-1} .
\end{align*}
By resolving the recursion, one obtains 
\begin{align*}
 z^{(2)}_p &= p,\\
 z^{(w)}_p &= \dfrac{\lambda_w^p - \lambda_w^{-p}}{\sqrt{w^2-4}} \quad \text{if } w \neq 2 .
\end{align*}
We record an immediate fact, which will be useful multiple times throughout: 
\begin{lemma}\label{lemma: z's have det 1}
For every $w \geq 2$ and $p \geq 0$, we have 
\[
 \det \left( \begin{array}{cc}
z_{p+1} ^{(w)} & z_{p+2} ^{(w)}  \\
z_{p} ^{(w)} & z_{p+1} ^{(w)}
 \end{array}\right) = {z_{p+1} ^{(w)}} ^2 - z_{p+2} ^{(w)} \cdot z_{p} ^{(w)}= 1.
\]
\end{lemma}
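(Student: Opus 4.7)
The plan is a direct induction on $p \geq 0$. The base case $p = 0$ is immediate: $(z_1^{(w)})^2 - z_2^{(w)}\cdot z_0^{(w)} = 1 - w \cdot 0 = 1$. For the inductive step, assume the identity holds at $p$, and use the defining recurrence $z_{p+3}^{(w)} = w\, z_{p+2}^{(w)} - z_{p+1}^{(w)}$ (and the equivalent form $z_{p+2}^{(w)} - w\, z_{p+1}^{(w)} = -z_p^{(w)}$) to rewrite
\begin{align*}
(z_{p+2}^{(w)})^2 - z_{p+3}^{(w)} z_{p+1}^{(w)} &= (z_{p+2}^{(w)})^2 - (w\, z_{p+2}^{(w)} - z_{p+1}^{(w)}) z_{p+1}^{(w)} \\
&= z_{p+2}^{(w)}\bigl(z_{p+2}^{(w)} - w\, z_{p+1}^{(w)}\bigr) + (z_{p+1}^{(w)})^2 \\
&= -z_{p+2}^{(w)} z_p^{(w)} + (z_{p+1}^{(w)})^2.
\end{align*}
By the inductive hypothesis, the right-hand side equals $1$, completing the step.

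A conceptually cleaner alternative is to introduce the transfer matrix $M_w = \begin{pmatrix} w & -1 \\ 1 & 0 \end{pmatrix}$ of the recurrence, note that $\det M_w = 1$, and verify by a short induction that
\[
M_w^{p+1} = \begin{pmatrix} z_{p+2}^{(w)} & -z_{p+1}^{(w)} \\ z_{p+1}^{(w)} & -z_p^{(w)} \end{pmatrix}.
\]
Taking determinants then gives the identity instantly and uniformly in $w$, without any use of the closed-form expressions involving $\lambda_w$; in particular the special case $w = 2$ needs no separate treatment. There is no real obstacle: this is just the Cassini identity for the generalized Fibonacci sequence $z^{(w)}$.
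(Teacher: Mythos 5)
Your induction is correct and is essentially the same argument as the paper's: identical base case, and the same use of the recurrence in the inductive step (the paper shifts indices slightly, reducing the determinant at $p$ to the one at $p-1$, while you go from $p$ to $p+1$, but the algebra is the same). The transfer-matrix remark is a nice conceptual alternative, though it plays no role in the induction you actually carry out.
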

\begin{proof}
 The proof is by induction on $p$. If $p = 0$, then 
\[
 \det \left( \begin{array}{cc}
z_{p+1} ^{(w)} & z_{p+2} ^{(w)}  \\
z_{p} ^{(w)} & z_{p+1} ^{(w)} \end{array}\right) = \det \left( \begin{array}{cc}
1 & w  \\
0 & 1
 \end{array}\right) = 1.
\]
For $p \geq 1$, note that
\[
\det \left( \begin{array}{cc}
z_{p+1} ^{(w)} & z_{p+2} ^{(w)}  \\
z_{p} ^{(w)} & z_{p+1} ^{(w)} \end{array}\right) = 
\det \left[\left( \begin{array}{cc}
w & -1  \\
1 & 0 \end{array}\right) \left( \begin{array}{cc}
z_{p} ^{(w)} & z_{p+1} ^{(w)}  \\
z_{p-1} ^{(w)} & z_{p} ^{(w)} \end{array}\right)\right] = \det \left( \begin{array}{cc}
z_{p} ^{(w)} & z_{p+1} ^{(w)}  \\
z_{p-1} ^{(w)} & z_{p} ^{(w)} \end{array}\right).
\qedhere
\]
\end{proof}

We also prove the following technical result, which appears in \cite{Kac:InfiniteRootSystemsRepGraphsInvariantTheory} without proof. 
\begin{lemma}\label{lemma: arithmetic a b p}
Let $w \geq 2$ and let $a,b \in \mathbb{N}$ with $\lambda_{w} a <  b$. Then there exist unique integers $\alpha,\beta \geq 0$, $\beta \neq 0$, and $p \geq 1$ such that 
 \begin{equation}\label{eq:alpha-beta-linear}
  \begin{cases}
  a &= z^{(w)}_p \alpha + z^{(w)}_{p+1} \beta\\
  b &= z^{(w)}_{p+1} \alpha + z^{(w)}_{p+2} \beta.
  \end{cases}
 \end{equation}
\end{lemma}
\begin{proof}
The equations~\eqref{eq:alpha-beta-linear} are equivalent to
\[
\left( \begin{array}{c}
\alpha  \\
\beta \end{array}\right) = 
\left( \begin{array}{cc}
z_{p+1} ^{(w)} & z_{p+2} ^{(w)}  \\
z_{p} ^{(w)} & z_{p+1} ^{(w)} \end{array}\right)^{-1}
\left( \begin{array}{c}
b  \\
a \end{array}\right) =
\left( \begin{array}{cc}
z_{p+1} ^{(w)} & -z_{p+2} ^{(w)}  \\
-z_{p} ^{(w)} & z_{p+1} ^{(w)} \end{array}\right)
\left( \begin{array}{c}
b \\
a \end{array}\right) = 
\left( \begin{array}{c}
-z_{p+2}^{(w)} a + z_{p+1}^{(w)} b \\
z_{p+1}^{(w)} a - z_{p}^{(w)} b \end{array}\right).
\]
It is clear that $\alpha, \beta$ are integer for every $p$.
The conditions $\alpha, \beta \geq 0$, $\beta \neq 0$ can now be rewritten as
\[
 \frac{z_{p}^{(w)}}{z_{p + 1}^{(w)}} < \frac{a}{b} \leq \frac{z_{p + 1}^{(w)}}{z_{p + 2}^{(w)}}.
\]
Since $\left(z_{p + 1}^{(w)}\right)^2 - z_{p+2}^{(w)} z_{p}^{(w)} = 1 > 0$, the sequence $\frac{z_{p}^{(w)}}{z_{p + 1}^{(w)}}$ strictly increases.
Therefore, there exists a unique index $p$ for which the conditions $\alpha, \beta \geq 0$, $\beta \neq 0$ hold, as long as the value $\frac{a}{b}$ lies between $\frac{z_{0}^{(w)}}{z_{1}^{(w)}} = 0$ and $\lim_{p \to \infty} \frac{z_{p}^{(w)}}{z_{p + 1}^{(w)}}$.
From the explicit formulas for $z_{p}^{(w)}$ we see that the limit is equal to $\lambda_w^{-1}$, so the required $\alpha$, $\beta$, and $p$ exist and are unique if $b > \lambda_w a$.
\end{proof}

\autoref{lemma: z's have det 1} and \autoref{lemma: arithmetic a b p} allow us to show that if $(a,b) \in \bfX_w \cup \bfY_w$, then the tensor space $\bbC^a \otimes \bbC^b\otimes \bbC^w$ is prehomogeneous for the action of $\GL_a\times \GL_b$:

\begin{proposition}\label{prop: dense orbit}
Let $w \geq 2$ and let $a,b$ be integers such that $\lambda _w a < b$. Then, $\mathbb{C}^a \otimes \mathbb{C}^b\otimes \mathbb{C}^w$ has a dense $(\GL_a \times \GL_b)$-orbit.
\end{proposition}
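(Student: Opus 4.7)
The plan is to proceed by induction on $a + b$, using the castling transform to reduce the case $(a,b) \in \mathbf{X}_w$ to the base case $(a,b) \in \mathbf{Y}_w$. For the base case $b \geq wa$ I would exhibit an explicit tensor with dense orbit: take $T_0 = \sum_{i=1}^a \sum_{k=1}^w e_i \otimes f_{(i-1)w + k} \otimes g_k$, which is well-defined since $aw \leq b$ and corresponds to an injective linear map $\phi_0 : \mathbb{C}^a \otimes \mathbb{C}^w \to \mathbb{C}^b$ with image $V = \mathrm{span}(f_1, \ldots, f_{aw})$. A pair $(g_a, g_b) \in \GL_a \times \GL_b$ stabilizes $T_0$ iff $g_b$ preserves $V$ with restriction to $V$ equal to $g_a \otimes \id_{\mathbb{C}^w}$ under the iso $V \cong \mathbb{C}^a \otimes \mathbb{C}^w$ induced by $\phi_0$. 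Writing $g_b$ in block form with respect to a splitting $\mathbb{C}^b = V \oplus V^c$, the block on $V$ is determined by $g_a$, while the block on $V^c$ and the off-diagonal block $V^c \to V$ are free. A direct count gives $\dim \mathrm{Stab}(T_0) = a^2 + aw(b - aw) + (b - aw)^2 = a^2 + b^2 - abw$, so the orbit of $T_0$ has dimension $abw$, equal to $\dim (\mathbb{C}^a \otimes \mathbb{C}^b \otimes \mathbb{C}^w)$, and is therefore dense.

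For the inductive step $(a,b) \in \mathbf{X}_w$, i.e., $\lambda_w a < b < wa$, the castling correspondence is the key tool. Since $b < wa$, a generic tensor $T$ yields an injective flattening $(\mathbb{C}^b)^* \to \mathbb{C}^a \otimes \mathbb{C}^w$, and its image (which depends only on the $\GL_b$-orbit of $T$) is a $b$-dimensional subspace of $\mathbb{C}^a \otimes \mathbb{C}^w$. This identifies the generic $(\GL_a \times \GL_b)$-orbits on $\mathbb{C}^a \otimes \mathbb{C}^b \otimes \mathbb{C}^w$ with $\GL_a$-orbits on the Grassmannian $\mathrm{Gr}(b, \mathbb{C}^a \otimes \mathbb{C}^w)$. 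Passing to annihilators yields a $\GL_a$-equivariant isomorphism $\mathrm{Gr}(b, \mathbb{C}^a \otimes \mathbb{C}^w) \cong \mathrm{Gr}(wa - b, (\mathbb{C}^a \otimes \mathbb{C}^w)^*)$, and running the correspondence in reverse translates the problem to the prehomogeneity of $\mathbb{C}^{wa - b} \otimes \mathbb{C}^a \otimes \mathbb{C}^w$ under $\GL_{wa - b} \times \GL_a$. Since the correspondence is induced by algebraic isomorphisms of Grassmannians, it sends dense orbits to dense orbits.

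To close the induction I verify two numerical facts about the new pair $(a', b') = (wa - b, a)$. First, $\lambda_w a' < b'$ is equivalent to $\lambda_w(wa - b) < a$, which rearranges, using the identity $\lambda_w^2 = w\lambda_w - 1$, to the hypothesis $\lambda_w a < b$; in particular $(a', b') \in \mathbf{X}_w \cup \mathbf{Y}_w$. Second, $a' + b' = (w+1)a - b < a + b$ reduces to $b > \tfrac{w}{2} a$ and follows from $b > \lambda_w a \geq \tfrac{w}{2} a$ (equality only at $w = 2$). The induction hypothesis then applies to $(a', b')$, yielding the dense orbit needed to conclude for $(a, b)$. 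The main technical obstacle I expect is pinning down the castling correspondence rigorously and verifying that it really preserves openness of orbits through the Grassmannian duality; this is classical in the Sato--Kimura theory of prehomogeneous vector spaces, but the bookkeeping must be handled carefully. The rest of the argument reduces to explicit dimension counting.
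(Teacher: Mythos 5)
Your proof is correct and uses the same key idea as the paper, namely reduction via the castling transform $(a,b)\mapsto(wa-b,a)$, but with a different choice of base case and induction variable. The paper inducts on $a$ with base case $a=1$, where $\bbC^1\otimes\bbC^b\otimes\bbC^w$ trivially has a dense $\GL_b$-orbit. You induct on $a+b$ with base case $b\geq wa$, exhibiting the ``staircase'' tensor $T_0$ and computing $\dim\Stab(T_0)=a^2+b^2-abw$ directly, so that $\dim\bigl(\GL_a\times\GL_b\bigr)\cdot T_0 = abw$ and the orbit is full-dimensional. Your arithmetic checks are correct: $\lambda_w(wa-b)<a$ is equivalent to $\lambda_w a<b$ via $\lambda_w^2=w\lambda_w-1$, and $(w+1)a-b<a+b$ follows from $b>\lambda_w a\geq \tfrac{w}{2}a$ with the boundary case $w=2$ still strict because there $\lambda_2 a = a < b$. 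Your base case is in fact slightly more robust: the paper's inductive step sets $a'=wa-b$ and applies the induction hypothesis to $(a',b')$, which tacitly requires $a'\geq 1$, i.e.\ $b<wa$; the case $b>wa$ is not separately addressed there (it is easy, but strictly speaking it is a gap), whereas your base case covers all of $\bfY_w$ at once. The one place your write-up leans on standard but unproved facts is, as you note, the rigorous statement that castling preserves prehomogeneity; the paper handles this by citing Sato--Kimura together with the observation that the dual of a representation of a general linear group has the same orbit dimensions, which is exactly the annihilator/Grassmannian duality you sketch, so you are on the intended path. One small thing to verify if you write this up: the stabilizer description should be derived from the identity $g_b\circ T_0 = T_0\circ\bigl((g_a^{-1})^*\otimes\id_W\bigr)$ (viewing $T_0$ as the injective flattening $(\bbC^a\otimes\bbC^w)^*\to\bbC^b$), which shows both that $g_b$ preserves $V=\im T_0$ and that $g_b|_V$ is determined by $g_a$, so that the count $a^2+aw(b-aw)+(b-aw)^2$ really is the whole stabilizer.
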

\begin{proof}
We prove the statement by induction on $a$.
 
If $a = 1$, we have $b > \lambda_w$, and since $\lfloor \lambda_w \rfloor = w-1$, we deduce $b \geq w$. The space $\mathbb{C}^1 \otimes \mathbb{C}^b\otimes \mathbb{C}^w$ has a dense orbit for the action of $\GL_1 \times \GL_b \simeq \GL_b$. This proves the base of the induction.

If $a \geq 2$, define $b' = a$ and $a' =  w a  - b$. Note that $\lambda_w a' < b'$. Indeed
\[
 \lambda_w a' = \lambda_w w a - \lambda_w b < \lambda_w w a -\lambda_w^2 a = a = b',
\]
where we used the condition $\lambda_w w - \lambda_w^2 = 1$.

By the induction hypothesis, the space $\mathbb{C}^{a'} \otimes \mathbb{C}^{b'}\otimes \mathbb{C}^w$ has a dense $(\GL_{a'} \otimes \GL_{b'})$-orbit. Via the castling transform, the dense orbit for the action of $\GL_{a'} \times \GL_{b'}$ on $\mathbb{C}^{a'} \otimes \mathbb{C}^{b'} \otimes \mathbb{C}^w$ corresponds to a dense orbit for the action of $\GL_{w b' - a'} \times \GL_{b'}$ on $\mathbb{C}^{w b' - a'} \otimes (\mathbb{C}^{b'} \otimes \mathbb{C}^w)^*$. Therefore the latter has a dense orbit as well. 

Since $b' = a$ and $wb' -a' = b$, we conclude that $(\bbC^{a})^* \otimes \bbC^{b} \otimes (\bbC^w)^*$ has a dense $(\GL_a \times \GL_b)$-orbit. Composing the action via any isomorphism between $(\bbC^{a})^*$ and $\bbC^a$ and between $ (\bbC^w)^*$ and $ (\bbC^w)$ concludes the proof.

\end{proof}

\subsection{Representation theory of quivers and invariants}\label{sec: quiver intro}
A quiver is a finite directed graph $Q = (Q_0,Q_1)$; $Q_0$ is the set of vertices and $Q_1$ is the set of arrows. A quiver representation of $Q$ is a pair $( (V_i)_{i \in Q_0}, (V_{e})_{e \in Q_1})$ where $V_i$ is a finite dimensional vector space and if $e = (i,j)$ is an arrow in $Q_1$ from vertex $i$ to vertex $j$, then $V_{e} : V_i \to V_j$ is a linear map. The dimension vector of the representation $V$ is $\dim V = (\dim V_i)_{i \in Q_0}$. We refer to \cite{derksen2017introduction} for a comprehensive exposition of the theory of quiver representation. In the present work, we are only interested in a series of results from \cite{Sch:SemiInvQuivers,domokos,DerkWey:SemiInvSaturationLR} applied to the case of the Kronecker quiver with $w$ arrows, that is 
\begin{center}
\begin{tikzpicture}
\draw[->-] (0,0) -- (3,0);
\draw[->-] (0,0) .. controls (1.5,-.3) .. (3,0);
\draw[->-] (0,0) .. controls (1.5,-1.2) .. (3,0);
\draw[fill =black] (0,0) circle (3pt) node[above left] {$\alpha$};  
\draw[fill =black] (3,0) circle (3pt) node[above right] {$\beta$};  
\node[] () at (-1,0) {$\calK_w:$};
\node[anchor=south] () at (1.5,0) {$1$};
\node[anchor=north] () at (1.5,-.05) {$\vdots$};
\node[anchor=north] () at (1.5,-.9) {$w$};
\end{tikzpicture}
\end{center}
Let $a,b \geq 1$. The space of representations of $\calK_w$ with dimension vector $(a,b)$ can be identified with the space of $w$-tuples of linear maps $\bbC^a \to \bbC^b$; in other words, such space is $A \otimes B \otimes W$ with $\dim A = a$, $\dim B = b$ and $\dim W = w$; for a fixed basis $e_1 \vvirg e_w$ of $W^*$, the linear map on the arrow $j$ is $T(e_j) \in A \otimes B$. The ring of semi-invariants of $\calK_w$ with dimension vector $(a,b)$ is $\SI(\calK_w, (a,b)) \simeq \bbC[(A \otimes B \otimes W)^*]^{\SL(A) \times \SL(B)}$. A great deal of research is devoted to the study of the ring of semi-invariants for quivers. We record here a fundamental result for the Kronecker quiver, summarizing some of the results of \cite{DerkWey:SemiInvSaturationLR,domokos,SCHOFIELD2001125}. 
\begin{proposition}\label{prop: quiver basics}
Let $w \geq 3$ and let $(a,b)$ be a dimension vector for the Kronecker quiver $\calK_w$. Let $\SI(\calK_w,(a,b))^{[\delta]}$ be the component of the ring of semi-invariants $\SI(\calK_w, (a,b))$ of degree $\delta$. Then
\begin{itemize}
\item $\SI(\calK_w,(a,b))^{[\delta]}$ is spanned by the polynomials on $A \otimes B \otimes W$ of the form
\[
T \mapsto \det ( T(e_1) \boxtimes Z_1 + \cdots + T(e_w) \boxtimes Z_w) 
\]
where $T \in A \otimes B \otimes W$, and $Z_1 \vvirg Z_w$ are matrices of size $b_1 \times a_1$ such that $\delta = ab_1 = ba_1$. In particular, if $\SI(\calK_w,(a,b))_{\delta} \neq 0$, then $\delta$ is a multiple of $\lcm(a,b)$.
\item If $\SI(\calK_w,(a,b))^{[\delta]} \neq 0$ for some $\delta$, then $\SI(\calK_w,(a,b))^{[\lcm(a,b)]} \neq 0$.
\end{itemize}
\end{proposition}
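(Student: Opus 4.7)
The plan is to identify the polynomials in the statement as Schofield-type determinantal semi-invariants for the Kronecker quiver, and to combine the Derksen--Weyman spanning theorem with the saturation theorem for quiver semi-invariants.

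First I would verify directly that each polynomial appearing in the statement is a semi-invariant of weight $(b_1,a_1)$ under $\GL(A)\times\GL(B)$: under a pair $(g,h)$, each slice $T(e_j)\in A\otimes B$ is acted upon by $g\otimes h$, so $\sum_j T(e_j)\boxtimes Z_j$ is conjugated by $g\otimes I_{b_1}$ on one side and by $h\otimes I_{a_1}$ on the other, and its determinant picks up the factor $(\det g)^{b_1}(\det h)^{a_1}$. This determinant is defined only when the matrix $\sum_j T(e_j)\boxtimes Z_j$ is square, which forces $ab_1=ba_1$; the total degree in $T$ is then $\delta = ab_1$, automatically divisible by both $a$ and $b$, hence by $\lcm(a,b)$. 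For the spanning part of the first bullet I would invoke the Derksen--Weyman spanning theorem: the ring $\SI(\calK_w,(a,b))$ is spanned by Schofield semi-invariants $c^Z$, one attached to each auxiliary representation $Z$ with $\langle\dim Z,(a,b)\rangle=0$. Specialising to $\calK_w$, using the explicit projective indecomposables $P_\alpha = (\bbC,\bbC^w)$ and $P_\beta = (0,\bbC)$ to resolve $Z$ and applying $\Hom(-,V)$ yields exactly the determinantal expression in the statement, with the matrices $Z_j$ encoding the auxiliary data.

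For the second bullet, the plan is to invoke the saturation theorem for quiver semi-invariants (Derksen--Weyman, and its specialisation to $\calK_w$ in the works of Schofield and Domokos): the semigroup of characters $\sigma$ for which $\SI(\calK_w,(a,b))^{\sigma}\neq 0$ is saturated in the character lattice. Writing $d=\gcd(a,b)$ and $(a_0,b_0)=(a/d,b/d)$, the characters parameterising the semi-invariants of the first bullet are exactly the positive integer multiples of the primitive pair $(b_0,a_0)$, which corresponds to the minimal degree $ab_0=\lcm(a,b)$. Saturation then upgrades any non-zero element of degree $k\lcm(a,b)$ to a non-zero element of degree $\lcm(a,b)$, as required.

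The main obstacle I foresee is not conceptual but organisational: matching the abstract Schofield semi-invariant attached to an auxiliary representation with the concrete determinantal form of the proposition requires pinning down a dictionary between $Z$ viewed as a representation of $\calK_w$ (or of its opposite) and the tuple of matrices $(Z_1,\dots,Z_w)$, and being careful about the convention under which the $w$ slices of a tensor are transposed. Once this bookkeeping is done, both bullets reduce to direct applications of the cited results.
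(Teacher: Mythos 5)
Your proposal is correct and matches the paper's proof in all essential respects. For the first bullet you invoke the Derksen--Weyman spanning theorem together with the dictionary to projective presentations for $\calK_w$, while the paper cites the essentially equivalent Schofield--Van den Bergh result and phrases the matching via morphisms $\alpha^{\oplus b_1}\to\beta^{\oplus a_1}$ in the additive category generated by $\calK_w$; these are the same specialisation in different language. For the second bullet both you and the paper reduce to the saturation theorem for quiver semi-invariants and observe that the relevant weights are positive integer multiples of the primitive one attached to $(a/\gcd(a,b),\,b/\gcd(a,b))$, so that non-vanishing in some degree forces non-vanishing in degree $\lcm(a,b)$.
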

\begin{proof}

The first statement is a specialization of \cite[Theorem 2.3]{SCHOFIELD2001125} to the case of Kronecker quivers. In the terminology of~\cite{SCHOFIELD2001125}, the matrices $Z_1, \dots, Z_w \in \mathbb{C}^{b_1 \times a_1}$ define an object $Z = \sum_{i = 1}^w {Z_i p_i} \colon \alpha^{\oplus b_1} \to \beta^{\oplus a_1}$ in the additive category generated by the quiver $\mathcal{K}_w$, where $p_i$ is the path from $\alpha$ to $\beta$ along the $i$-th arrow.
For a representation of $\mathcal{K}_w$ given by a tensor $T \in A \otimes B \otimes W$, the corresponding map $(A^*)^{\oplus b_1} \to B^{\oplus a_1}$ is represented by the matrix $T(e_1) \boxtimes Z_1 + \dots + T(e_w) \boxtimes Z_w$. Whenever this matrix is square of size $\delta$, its determinant is a semi-invariant of $\mathcal{K}_w$ in $\SI(\calK_w,(a,b))^{[\delta]}$. We conclude, because \cite[Theorem 2.3]{SCHOFIELD2001125} guarantees that $\SI(\calK_w,(a,b))^{[\delta]}$ is spanned by these determinants.

The proof of the second statement follows from \cite[Theorem 3]{DerkWey:SemiInvSaturationLR}; see also \cite[Theorem 10.7.8]{derksen2017introduction}. Write $\delta = ab_1 = a_1b$ for uniquely determined $a_1,b_1$; in particular, there exists $\delta_1$ such that $a_1 = \delta_1 a_2$ and $b_1 = \delta_1 b_2$ where $a_2 = \frac{a}{\gcd(a,b)}$ and $b_2 =  \frac{b}{\gcd(a,b)}$. In the notation of \cite{derksen2017introduction}, it is easy to verify that $\SI(\calK_w,(a,b))^{[\delta]} = \SI(\calK_w,(a,b))_{\sigma}$ for $\sigma = \langle (3a_1-b_1,a_1), -\rangle $. By linearity, $\sigma = \langle (3a_1-b_1,a_1), -\rangle = \delta_1 \langle (3a_2-b_2,a_2), -\rangle = \delta_1 \sigma_2$ for $\sigma_2 = \langle (3a_2-b_2,a_2), -\rangle$. \cite[Theorem 10.7.8]{derksen2017introduction} guarantees that if $\SI(\calK_w,(a,b))_{\sigma} \neq 0$ then $\SI(\calK_w,(a,b))_{\sigma_2} \neq 0$. Passing to the degrees, this guarantees $\SI(\calK_w,(a,b))^{[\delta_2]} \neq 0$, where $\delta_2 = a b_2 = a_2 b = \lcm(a,b)$.
\end{proof}


\section{The QMaxFlow in the bridge graph}\label{sec:qmaxflowinbridge}
In this section we will compute the quantum max-flow for the bridge graph for a wide range of the parameters. In \autoref{subsec:castlingandqmaxflow}, we characterize the behaviour of the quantum max-flow under castling transform. This yields the main result of this section, \autoref{thm:QMaxFlowcastling}, which will allow us to deduce the results described in \autoref{table}.

\subsection{The castling transform and the quantum max-flow}\label{subsec:castlingandqmaxflow}

In this section we will see that the quantum max-flow behaves well under castling transform. In particular, we obtain the following result, whose proof is given at the end of this section.
\begin{theorem}\label{thm:QMaxFlowcastling}
Let $a,b,a',b',w$ be natural numbers such that $a \leq bw$ and  $a' \leq b'w.$ Then, 
\begin{align*}
	a\cdot b' - \qmaxflow \left( 
	\bridgegraphsmall{a}{b}{w}{b'}{a'} 
\right)  = 
	b\cdot (b'w - a')-\qmaxflow \left( 
	\bridgegraphbig{b}{bw-a}{w}{b'w-a'}{b'}
\right) .
\end{align*}
Moreover, if $b \leq aw$ and $b' \leq a'w$, then 
\begin{align*}
	a\cdot b' - \qmaxflow \left( \bridgegraphsmall{a}{b}{w}{b'}{a'} \right) = (wa - b)\cdot  a' - \qmaxflow \left( \bridgegraphbig{wa-b}{a}{w}{a'}{wa'-b'} \right) .
\end{align*}
\end{theorem}

For a fixed $w$, we say that two pairs $(a,b)$ and $(a_0,b_0)$ are \emph{castling equivalent} if there is a finite sequence $(a_0,b_0) \vvirg (a_N,b_N)$ such that either $(a_i,b_i) = (wa_{i-1}-b_{i-1},a_{i-1})$ or $(a_i,b_i) = (b_i,w b_i-a_i)$, and $(a,b) = (a_N,b_N)$.

In the following, assume $a \leq bw$ and $a' \leq b'w$. Let $\Gr(k,V)$ denote the Grassmannian of $k$-planes in a vector space $V$. Given $(E,E') \in \Gr(a,B \otimes W) \times \Gr(a' , B' \otimes W^*)$, consider the linear map $F^{1} _{E,E'} $ defined by
\begin{align*}
 F^{1} _{E,E'} : E \otimes B'^* &\to [E' \otimes B^*]^* \\
 e \otimes \beta' &\mapsto (e' \otimes \beta \mapsto \beta(e) \contract \beta'(e')),
\end{align*}
and extended by linearity. 
\begin{proposition}\label{prop: F and F1 and F2}
 Let $ a \leq bw$ and $a' \leq b' w$ and let $T \in A \otimes B \otimes W$ and $T' \in A' \otimes B' \otimes W^*$ be such that the induced maps
\begin{equation*}
T \colon A^* \rightarrow B \otimes W \text{ and } T' \colon A'^{*} \rightarrow B' \otimes W^*
\end{equation*}
are injective. Write $E_T = \im(T: A^* \to B \otimes W)$, $E_{T'} = \im(T: A'^* \to B' \otimes W^*)$. Then,
 \[
  \rank (F_{T,T'}) = \rank (F^1_{E_T,E_{T'}})
 \]
and 
\begin{equation*}
	\dim \ker (F_{T,T'}) = \dim \ker (F^{1}_{E_T,E_{T'}}).
\end{equation*}
\end{proposition}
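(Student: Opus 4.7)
The plan rests on the observation that, when $T$ and $T'$ are injective, the maps $F_{T,T'}$ and $F^1_{E_T,E_{T'}}$ differ only by pre- and post-composition with linear isomorphisms, so they must share the same rank and the same kernel dimension.

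First, I would use the injectivity of $T\colon A^* \to B \otimes W$ to identify $A^*$ with its image $E_T$ via an isomorphism $\phi_T\colon A^* \xrightarrow{\sim} E_T$, and analogously obtain $\phi_{T'}\colon A'^* \xrightarrow{\sim} E_{T'}$. Tensoring $\phi_T$ with $\id_{B'^*}$ yields an isomorphism $(A \otimes B')^* = A^* \otimes B'^* \xrightarrow{\sim} E_T \otimes B'^*$, while dualising $\phi_{T'}^{-1} \otimes \id_{B^*}\colon E_{T'}\otimes B^* \xrightarrow{\sim} A'^* \otimes B^*$ gives an isomorphism $(E_{T'} \otimes B^*)^* \xrightarrow{\sim} A' \otimes B$. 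In particular, the sources and targets of $F_{T,T'}$ and $F^1_{E_T,E_{T'}}$ have matching dimensions.

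The core step is then to verify that the square
\[
\begin{array}{ccc}
(A \otimes B')^* & \xrightarrow{F_{T,T'}} & A' \otimes B \\
\big\downarrow & & \big\uparrow \\
E_T \otimes B'^* & \xrightarrow{F^1_{E_T,E_{T'}}} & (E_{T'} \otimes B^*)^*
\end{array}
\]
with the vertical isomorphisms just constructed commutes. Writing $T = \sum_i a_i \otimes b_i \otimes w_i$ and $T' = \sum_j a'_j \otimes b'_j \otimes \eta_j$, this is an unwinding of the definitions: both ways of transporting $\alpha \otimes \beta' \in A^* \otimes B'^*$ to $A' \otimes B$ and then pairing with $\alpha' \otimes \beta \in A'^* \otimes B^*$ produce the same scalar
\[
\sum_{i,j} \eta_j(w_i)\, \alpha(a_i)\, \beta(b_i)\, \alpha'(a'_j)\, \beta'(b'_j),
\]
since $F_{T,T'}$ is by definition the contraction of $T$ with $T'$ along $W \otimes W^* \to \bbC$, and $F^1_{E_T,E_{T'}}$ is defined by exactly the same pairing restricted to $E_T$ and $E_{T'}$. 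Once commutativity is known, the two horizontal maps are conjugate through isomorphisms, and the equalities of rank and kernel dimension follow at once. I do not foresee a genuine obstacle here: the content is essentially bookkeeping, and the only point requiring attention is the dualisation used in identifying $(E_{T'} \otimes B^*)^*$ with $A' \otimes B$, so that the two $W$–$W^*$ contractions really are the same map.
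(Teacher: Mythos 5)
Your proof is correct and follows essentially the same route as the paper: both realise $F_{T,T'}$ as $F^1_{E_T,E_{T'}}$ pre- and post-composed with the isomorphisms induced by $T\colon A^*\xrightarrow{\sim}E_T$ and the transpose of $T'\colon A'^*\xrightarrow{\sim}E_{T'}$, and the rank and kernel-dimension equalities follow. (The only cosmetic slip is the direction of the dualised arrow: dualising $\phi_{T'}^{-1}\otimes\mathrm{id}_{B^*}$ gives a map $A'\otimes B\to(E_{T'}\otimes B^*)^*$, not the reverse; you implicitly take its inverse, as your diagram shows.)
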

\begin{proof}
The map $F_{T,T'}$ factors as follows:
\begin{equation*}
  A^* \otimes B'^* \xto{T \otimes \id_{B'^*}} E_T \otimes B'^* \xto{F^1_{E_T,E_{T'}}} [E_{T'} \otimes B^*]^* \simeq E_{T'}^* \otimes B \xto{T'^{\textbf{\textit{t}}} \otimes \id_B} A' \otimes B,
\end{equation*}
 where $T'^{\textbf{\textit{t}}} : [B' \otimes W]^* \to A'$ induces naturally the map $E_{T'}^*\to A'$ because $E_{T'}^* \simeq [B' \otimes W]^* / E_{T'}^\perp$ and $E_{T'}^\perp = \ker( T'^{\textbf{\textit{t}}})$. From the injectivity assumption on the maps induced by $T$ and $T'$ we deduce that the maps $T: A^* \to E_T$ and $T'^{\textbf{\textit{t}}} : E_{T'} \to A'$ are isomorphisms. This guarantees $ \rank (F_{T,T'}) = \rank (F^1_{E_T,E_{T'}})$. The conditions on the dimension of the kernels is an immediate consequence. 
\end{proof}

The following result shows that the construction of the map $F^1_{E,E'}$ is, in some sense, equivariant under castling.
\begin{theorem}\label{thm: kernel preserving castling stuff}
For $a \leq bw$ and $a' \leq b'w$ let $(E,E') \in \Gr(a,B \otimes W) \times \Gr(a' , B' \otimes W^*)$, so that $(E'^\perp,E^\perp) \in \Gr(b'w-a' , B'^* \otimes W) \times \Gr(bw-a,B^* \otimes W^*)$. Then 
\[
 \ker ( F^1_{E,E'}) = \ker ( {F^1_{E^\perp, E'^\perp}}^{\textbf{\textit{t}}}),
\]
regarded as a subspace of $(B \otimes W) \otimes B'^* \simeq B \otimes (W^* \otimes B')^*$.
\end{theorem}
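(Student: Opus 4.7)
The strategy is to reinterpret both kernels as the same geometric intersection inside the ambient tensor space $B \otimes W \otimes B'^*$, obtained from a single perfect pairing. Let
\[
P : (B \otimes W \otimes B'^*) \times (B^* \otimes B' \otimes W^*) \to \mathbb{C}
\]
be the canonical nondegenerate pairing given by contracting $B$ with $B^*$, $B'^*$ with $B'$, and $W$ with $W^*$. Directly from the formula $(e \otimes \beta') \mapsto (e' \otimes \beta \mapsto \beta(e) \contract \beta'(e'))$ the map $F^1_{E,E'}$ is nothing but the restriction of $P$ to the pair of subspaces $E \otimes B'^* \subseteq B \otimes W \otimes B'^*$ and $B^* \otimes E' \subseteq B^* \otimes B' \otimes W^*$, viewed as a map into a dual.

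First I would compute $\ker(F^1_{E,E'})$ as the intersection $(E\otimes B'^*) \cap (B^*\otimes E')^{\perp_P}$, and identify the annihilator. Writing an element of $B \otimes W \otimes B'^* \cong B \otimes (B'^* \otimes W)$ as $\sum_m b_m \otimes \xi_m$ with $\xi_m \in B'^* \otimes W$, the condition of pairing to zero with every $\beta \otimes e'$ ($\beta \in B^*$, $e' \in E'$) forces $\xi_m \in E'^\perp$ for all $m$. Hence $(B^* \otimes E')^{\perp_P} = B \otimes E'^\perp$, and so
\[
\ker(F^1_{E,E'}) = (E \otimes B'^*) \cap (B \otimes E'^\perp)
\]
inside $B \otimes W \otimes B'^* \cong B \otimes (W^* \otimes B')^*$.

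Next I would run the same analysis for $F^1_{E^\perp, E'^\perp}$. Here $E^\perp \subseteq B^*\otimes W^*$ plays the role of ``$E$'' (with $B \rightsquigarrow B^*$, $W \rightsquigarrow W^*$) and $E'^\perp \subseteq B'^* \otimes W$ plays the role of ``$E'$'' (with $B' \rightsquigarrow B'^*$, $W^* \rightsquigarrow W$), so
\[
F^1_{E^\perp, E'^\perp} : E^\perp \otimes B' \longrightarrow (B \otimes E'^\perp)^*,
\]
and its transpose is a map ${F^1_{E^\perp,E'^\perp}}^t : B \otimes E'^\perp \to (E^\perp \otimes B')^*$. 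By the same pairing argument, this transpose is the restriction of the (analogous) perfect pairing between $B \otimes B'^* \otimes W$ and $B^* \otimes W^* \otimes B'$, so its kernel is $(B \otimes E'^\perp) \cap (E^\perp \otimes B')^{\perp}$. The double annihilator computation $(E^\perp \otimes B')^{\perp} = E \otimes B'^*$ inside $(B \otimes W) \otimes B'^*$ — which is just $(E^\perp)^\perp \otimes (B')^{\perp} = E \otimes B'^*$ applied factor by factor — then yields
\[
\ker\bigl({F^1_{E^\perp,E'^\perp}}^t\bigr) = (B \otimes E'^\perp) \cap (E \otimes B'^*),
\]
the same subspace as before, proving the claim.

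The only real friction in writing this out is bookkeeping: keeping the identifications $W \otimes B'^* \cong B'^* \otimes W$, $B^* \otimes W^* \cong W^* \otimes B^*$, etc.\ consistent so that both $\ker(F^1_{E,E'})$ and $\ker({F^1_{E^\perp,E'^\perp}}^t)$ are genuinely being read as subspaces of the same ambient space $B \otimes W \otimes B'^*$. Once the single pairing $P$ is fixed and both maps are exhibited as its restrictions (on mutually $P$-dual pairs of subspaces), the equality is formal and no hypothesis beyond $a \leq bw$, $a' \leq b'w$ (needed only so that the spaces $E,E'$ and their complements are nontrivial) is used.
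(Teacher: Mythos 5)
Your proof is correct, and it is essentially the same argument as the paper's: both proofs identify $\ker(F^1_{E,E'})$ with the intersection $(E\otimes B'^*)\cap(B\otimes E'^\perp)$ inside $B\otimes W\otimes B'^*$ via annihilator computations, and observe by symmetry that $\ker\bigl({F^1_{E^\perp,E'^\perp}}^{\textbf{\textit{t}}}\bigr)$ is the same intersection. The paper presents this as an explicit double inclusion (take $\theta\in\ker(F^1_{E,E'})$, show $\theta\in E'^\perp\otimes B$, then show ${F^1_{E^\perp,E'^\perp}}^{\textbf{\textit{t}}}(\theta)=0$ since $\theta\in E\otimes B'^*=(E^\perp\otimes B')^\perp$, then swap roles), whereas your packaging of both maps as restrictions of a single perfect pairing $P$ makes the symmetry visible at once and replaces the double inclusion with a single computation of two intersections — a slightly cleaner formulation of the identical underlying linear algebra.
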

\begin{proof}
Notice $F^1_{E^\perp,E'^\perp} : E^\perp \otimes B' \to [E'^\perp \otimes B]^*$, so ${F^1_{E^\perp,E'^\perp}} ^{\textbf{\textit{t}}} : E'^\perp \otimes B \to [E^\perp \otimes B']^*$. In particular, the domain $E \otimes B'^*$ of $F^1_{E,E'}$ and the domain $E'^\perp \otimes B$ of ${F^1_{E^\perp,E'^\perp}} ^{\textbf{\textit{t}}}$ can both be regarded as subspaces of $(B \otimes W) \otimes B'^* \simeq (B' \otimes W^*)^* \otimes B$.

Let $\theta \in \ker ( F^1_{E,E'}) \subseteq (B \otimes W) \otimes B'^*$. First, we show that $\theta \in E'^\perp \otimes B$ regarded as a subspace of $(B' \otimes W^*)^* \otimes B$. Indeed, since $\theta \in \ker ( F^1_{E,E'}) $, the element $ F^1_{E,E'} (\theta) \in [E' \otimes B^*]^*$ is identically $0$ as a map $  F^1_{E,E'} (\theta): E' \otimes B^* \to \mathbb{C}$. Since $F^1_{E,E'} (\theta)$ is defined by the contraction of $\theta$ against the elements of $E' \otimes B^*$, we deduce $\theta \in (E' \otimes B^*)^\perp = E'^\perp \otimes B$. In particular ${F^1_{E^\perp,E'^\perp}}^{\textbf{\textit{t}}}$ is well defined on $\theta$.

Now observe ${F^1_{E^\perp,E'^\perp}}^{\textbf{\textit{t}}}(\theta) = 0$. Indeed, ${F^1_{E^\perp,E'^\perp}}^{\textbf{\textit{t}}}(\theta) \in [E^\perp \otimes B']^*$ is the map $E^\perp \otimes B' \to \mathbb{C}$ defined by contraction of $\theta$ against the elements of $E^\perp \otimes B'$. By assumption $\theta \in E \otimes B'^* = (E^\perp \otimes B')^\perp$, hence this contraction is identically $0$.

This shows the inclusion $\ker ( F^1_{E,E'}) \subseteq \ker ( {F^1_{E^\perp, E'^\perp}}^{\textbf{\textit{t}}})$. Reversing the roles of $(E,E')$ with $(E^\perp,E'^\perp)$, one has the other inclusion, hence equality.
\end{proof}

\autoref{thm: kernel preserving castling stuff} allows us to prove the main result of this section, namely \autoref{thm:QMaxFlowcastling}.
\begin{proof}[Proof of \autoref{thm:QMaxFlowcastling}]
It is clear that we can choose tensors $T \in A \otimes B \otimes W$ and $T' \in B' \otimes A' \otimes W^*$ with the property that they maximize the quantum flow and that the maps 
\begin{equation*}
T \colon A^* \rightarrow B \otimes W \text{ and } T' \colon A'^{*} \rightarrow B' \otimes W^*
\end{equation*}
are injective; this follows from the fact that these are open conditions in the Zariski topology. By \autoref{prop: F and F1 and F2} and \autoref{thm: kernel preserving castling stuff}, we obtain
\begin{align*}
	ab'-\qmaxflow \left( 
		\bridgegraphsmall{a}{b}{w}{b'}{a'}
	\right)  &= \text{dim}(\text{ker}(F_{T,T'})) 
				 \\&= \text{dim}(\text{ker}(F^{1}_{E_T,E_{T'}})) 
				 \\&=  \text{dim}(\text{ker}((F^{1}_{E_T^{\perp},E_{T'}^{\perp}})^\mathbf{t})) 
				 \\&=  \text{dim}(\text{ker}(F_{S,S'})) \geq b(b'w-a')-\qmaxflow \left( \bridgegraphbig{b}{bw-a}{w}{b'w-a'}{b'} \right)  
\end{align*}
where the tensors $S \in \mathbb{C}^{bw-a}\otimes \mathbb{C}^{b} \otimes \mathbb{C}^w $ and $S' \in \mathbb{C}^{b'w-a'}\otimes \mathbb{C}^{b'}\otimes \mathbb{C}^w $ are chosen so that the images of the associated linear maps $S \colon (\mathbb{C}^{bw-a})^* \rightarrow \bbC^b \otimes \bbC^w$ and $S' \colon (\mathbb{C}^{b'w-a})^* \rightarrow \bbC^{b'} \otimes \bbC^w$ are $E_T^\perp$ and $E_{T'}^\perp$, respectively. The same argument, in the opposite direction, yields equality. 

Now, assume $b \leq wa$ and $b' \leq wa'$. Set $x = wa - b, y = a , x' = wa'-b'$ and $y' = a'$ so that $x \leq wy$ and $x' \leq wy'$. Then, we obtain 
\begin{align*}
	(wa-b)a' - \qmaxflow \left( \bridgegraphbig{wa-b}{a}{w}{a'}{wa'-b'} \right)
	&=xy' - \qmaxflow \left( \bridgegraphsmall{x}{y}{w}{y'}{x'} \right) =\\
	= y(y'w-x') - \qmaxflow \left( \bridgegraphbig{y}{yw-x}{w}{y'w-x'}{y'} \right)
	&= ab' -\qmaxflow \left( \bridgegraphsmall{a}{b}{w}{b'}{a'} \right).
& \qedhere
\end{align*}
\end{proof}

\subsection{Calculating the quantum max-flow}\label{subse:calcqmaxflow} 
\autoref{thm:QMaxFlowcastling} enables us to calculate the quantum max-flow in a wide range of cases. In this section, we compute the max-flow in a series of results:~\autoref{thm:maxfloweasyregion},~\autoref{thm:wxvsuv},~\autoref{thm:xvswx}, and~\autoref{cor:v versus u}. These are obtained via arithmetic arguments: starting from pairs $(a,b),(a',b')$, one reduces via castling to ``easier pairs'' and obtains the result using~\autoref{thm:QMaxFlowcastling}.  More precisely, we will use the following observations about the behavior of tuples $(a,b)$ under castling operations which we visualize in~\autoref{fig:Regionsmaintext}. Recall the regions $\mathbf{U}_w, \mathbf{V}_w, \mathbf{W}_w, \mathbf{X}_w$ and $\mathbf{Y}_w$ from~\autoref{subsec:Tensornetworks and the quantum max-flow}.

\begin{lemma}\label{lem:behaviorundercastling}
Consider a pair $(a,b) \in \mathbb{N}^2$. As long as $a_nw \geq b_n$, define recursively
\begin{equation}\label{eq:castlingsteps}
\begin{aligned}
	(a_0,b_0) &= (a,b), \\
 (a_{n + 1}, b_{n +1} ) &= (wa_n - b_n, a_n).
 \end{aligned}
\end{equation}
\begin{enumerate}[(a)]
	\item\label{castlinstepsitema}   Let $(a,b) \in \mathbf{Y}_w$. Then, $(b,bw-a) \in \bfX_w$, that is, with one castling step, we move to the region $\mathbf{X}_w$.
	\item\label{castlinstepsitemb} Let $(a,b) \in \mathbf{X}_w$ and write 
		\begin{equation*}
		  a = z^{(w)}_p \alpha + z^{(w)}_{p+1} \beta, \;
  b = z^{(w)}_{p+1} \alpha + z^{(w)}_{p+2} \beta. 
		\end{equation*}
		as in~\autoref{lemma: arithmetic a b p}. Then, for all $n = 1, \dots , p-1$, we have $(a_n,b_n) \in \mathbf{X}_w$ as well as $(a_p,b_p) \in \mathbf{Y}_w$. That is, with $p$ castling steps, we reach $\mathbf{Y}_w$ and stay until then in $\mathbf{X}_w$.
	\item\label{castlinstepsitemc} Let $(a,b) \in \mathbf{W}_w$ and consider the recursive sequence in~\eqref{eq:castlingsteps}. Then, for some $n \in \mathbb{N}$, we have $(a_n,b_n) \in \mathbf{U}_w \cup \mathbf{V}_w$. Moreover, we have $(b,bw-a)\in \mathbf{W}_w$, in other words, castling ``in the other direction'' lets us stay in $\mathbf{W}_w$.
	\item\label{castlinstepsitemd} Let $(a,b) \in \mathbf{U}_w \cup \mathbf{V}_w$. Then, $(b,wb-a) \in \mathbf{W}_w$ and $(a,wa-b) \in \mathbf{U}_w \cup \mathbf{V}_w$. In particular, for $w \geq 4$ and $(a,b) \in \bfV_w$, it holds $(a,wa-b) \in \mathbf{U}_w$ and similarly, for $(a,b) \in \bfU_w$, it holds $(a,wa-b) \in \bfV_w$.
\end{enumerate}
\end{lemma}
For any fixed $w$, we say that two pairs $(a,b)$ and $(a',b')$ are \emph{castling equivalent} if there is a sequence as in~\autoref{eq:castlingsteps} with $(a_0,b_0) = (a,b)$ and $(a_n,b_n) = (a',b')$. 
\begin{proof}
Note that in all cases, $a \neq 0$. Start with~\autoref{castlinstepsitema} and let $(a,b) \in \mathbf{Y}_w$. Then, $bw-a < bw$ as well as $(w-\lambda_w)b \geq (w^2 - w\lambda_w)a \geq a$ which shows $(bw-a,b) \in \mathbf{X}_w$. For~\autoref{castlinstepsitemb}, we observe that if $(a,b)\in \mathbf{X}_w$ can be written as
\begin{equation*}
		  a = z^{(w)}_p \alpha + z^{(w)}_{p+1} \beta, \;
  b = z^{(w)}_{p+1} \alpha + z^{(w)}_{p+2} \beta
\end{equation*}
as in~\autoref{lemma: arithmetic a b p}, then $wa-b = z^{(w)}_{p-1} \alpha + z^{(w)}_{p} \beta$ and $a = z^{(w)}_{p} \alpha + z^{(w)}_{p+1} \beta$.
Since $z_{i}^{(w)} \neq 0$ as long as $i \neq 0$, the claim follows by applying this observation $p$ times. For~\autoref{castlinstepsitemc}, assume $(a,b) \in \mathbf{W}_w$.  We have  $a = (\lambda_w w - \lambda_w^2 )a  = \lambda_w (wa - \lambda_w a) < \lambda_w (wa -b)$ and $wa - b \leq wa - (w-1)a = a$, in other words, $(wa-b,a) \in \mathbf{U}_w \cup \mathbf{V}_w \cup \mathbf{W}_w$. Moreover, we see that, by assumption, the second coordinate strictly decreases in the castling step, that is, $a <b$. This guarantees that the sequence cannot stay in $\mathbf{W}_w$ forever and, in particular, that there is an $n$ such that $(a_n,b_n) \in \mathbf{U}_w \cup \mathbf{V}_w$. Finally, for $(a,b)\in \mathbf{U}_w\cup\mathbf{V}_w$,  we clearly have $(w-1)b = wb-b \leq wb-a$. From~\autoref{castlinstepsitema} and~\autoref{castlinstepsitemb}, we know that $(b,wb-a)\not\in\mathbf{X}_w \cup \mathbf{Y}_w$ and therefore, $(b,wb-a)\in \mathbf{W}_w$. Moreover, if $w \geq 4$ and  $(a,b) \in \mathbf{V}_w$, we have $a = wa - (w-1)a \leq wa - b$ as well as $wa - b \leq wa - \lambda_{w-1} a = (w - \lambda_{w-1})a \leq \lambda_{w-1} a$ where the last inequality follows from the fact $\lambda_{w-1} \geq w-2$. The case $(a,b) \in \mathbf{U}_w$ is similar. This finishes~\autoref{castlinstepsitemd}.
\end{proof}

\definecolor{Xpurple}{rgb}{0.54, 0.17, 0.89}
\definecolor{Wgreen}{rgb}{0.13, 0.55, 0.13}
\begin{figure}
	     \centering
        \begin{tikzpicture}[scale = 1.3]
    \fill[color = {rgb:black,1;white,10}, rounded corners] (0,5) -- (5,5) -- (5,4) -- cycle;
    \fill[color = {rgb:black,1;white,10}, rounded corners] (0,5) -- (5,3) -- (5,2) -- cycle;
    \fill[color = {rgb:black,1;white,10}, rounded corners] (0,5) -- (5,1) -- (5,0) -- cycle;
    \fill[color = {rgb:black,1;white,4}, rounded corners] (0,5) -- (5,4) -- (5,3) -- cycle;
    \fill[color = {rgb:black,1;white,4}, rounded corners] (0,5) -- (5,2) -- (5,1) -- cycle;
    \fill[color = {rgb:black,1;white,10}, rounded corners] (0,5) -- (5,0) -- (4,0) -- cycle;
    \fill[color = {rgb:black,1;white,10}, rounded corners] (0,5) -- (3,0) -- (2,0) -- cycle;
    \fill[color = {rgb:black,1;white,10}, rounded corners] (0,5) -- (1,0) -- (0,0) -- cycle;
    \fill[color = {rgb:black,1;white,4}, rounded corners] (0,5) -- (4,0) -- (3,0) -- cycle;
    \fill[color = {rgb:black,1;white,4}, rounded corners] (0,5) -- (2,0) -- (1,0) -- cycle;
    \fill[color = white] (4.95,5) -- (5.1,5) -- (5.1,0) -- (4.95,0) -- cycle;
    \draw[thick] (-0.5,5) -- (5,5);
    \draw[thick] (0,0) -- (0,5.5);
    \draw (0,5) -- (5,4);
    \draw (0,5) -- (5,3);
    \draw (0,5) -- (5,2);
    \draw (0,5) -- (5,1);
    \draw[thick] (0,5) -- (5,0);
    \draw (0,5) -- (1,0);
    \draw (0,5) -- (2,0);
    \draw (0,5) -- (3,0);
    \draw (0,5) -- (4,0);
    \def\x{1};
    \node[scale = \x] at (-.4,5.18) {$a$};
    \node[scale = \x] at (-.17,5.4) {$b$};
    \node[scale = \x] at (5.75,4) {$wa = b$};
    \node[scale = \x] at (5.7,3) {$\lambda_wa = b$};
    \node[scale = \x] at (6.1,2) {$(w-1)a = b$};
    \node[scale = \x] at (5.9,1) {$\lambda_{w-1} a = b$};
    \node[scale = \x] at (5.6,0) {$a = b$};
    \node[scale = \x] at (4,4.5) {$\mathbf{Y}_w$};
    \node[scale = \x] at (4,3.8) {$\mathbf{X}_w$};
        \node[scale = \x] at (4,2.2) {$\mathbf{V}_w$};
    \node[scale = \x] at (4,1.4) {$\mathbf{U}_w$};
       \node[scale = \x] at (3.6,1) {$\mathbf{U}'_w$};
    \node[scale = \x] at (2.8,1) {$\mathbf{V}'_w$};
    
    \node[scale = \x] at (1.2,1) {$\mathbf{X}'_w$};
    \node[scale = \x] at (.5,1) {$\mathbf{Y}'_w$};
    \draw[fill = Xpurple,color = Xpurple] (1.1,4.9) circle (.06);
    \draw[fill = Xpurple,color = Xpurple] (1.8,4.45) circle (.06);
    \draw[fill = Xpurple,color = Xpurple] (3,3.9) circle (.06);
    \draw[fill = Xpurple,color = Xpurple] (4.5,3.3) circle (.06);
    \draw[color = Xpurple] (1.1,4.9) -- (1.8,4.45) -- (3,3.9) -- (4.8,3.18);
    \draw[color = Wgreen,fill = Wgreen] (4.6,3) circle (.06);
    \draw[color = Wgreen,fill = Wgreen] (3,3) circle (.06);
    \draw[color = Wgreen,fill = Wgreen] (2,2.8) circle (.06);
    \draw[color = Wgreen,fill = Wgreen] (1.7,1.8) circle (.06);
    \draw[color = Wgreen,fill = Wgreen] (2.1,0.2) circle (.06);
    \draw[color = Wgreen] (4.8,2.97) -- (4.6,3) --(3,3) -- (2,2.8) -- (1.7,1.8) -- (2.1,0.2) -- (2.15,0.1);

\fill[color = {rgb:black,1;white,10}, rounded corners] (2,1) circle (0.3);
\fill[color = {rgb:black,1;white,10}, rounded corners] (4,3) circle (0.3);
    \node[scale = \x] at (2,1) {$\mathbf{W}'_w$};
    \node[scale = \x] at (4,3) {$\mathbf{W}_w$};
    \end{tikzpicture}
    \caption{The regions $\mathbf{U}_w, \mathbf{V}_w, \mathbf{W}_w, \mathbf{X}_w$ and $\mathbf{Y}_w$ with their counterparts ``on the other side'' of the line $a = b$. In purple, we visualize the cases~\eqref{castlinstepsitema} and~\eqref{castlinstepsitemb} in~\autoref{lem:behaviorundercastling}: Any pair in $\mathbf{Y}_w$ castles directly to the $\mathbf{X}_w$ region. For any pair in the $\mathbf{X}_w$ region, we eventually land in the $\mathbf{Y}_w$ region by castling. Cases~\eqref{castlinstepsitemc} and~\eqref{castlinstepsitemd} are visualized in green. Here, we see that any pair in $\mathbf{W}_w$ castles eventually to a pair in  $\mathbf{U}_w \cup \mathbf{V}_w$ and then ``flips side'' to $\mathbf{U}_w' \cup \mathbf{V}_w'$.}\label{fig:Regionsmaintext} 
    \end{figure}

\begin{definition}\label{def:castlingdepth}
	Let $(a,b) \in \mathbf{U}_w \cup \mathbf{V}_w \cup \mathbf{W}_w$ and consider the sequence $(a_n,b_n)$ in~\autoref{eq:castlingsteps}. Call the minimal $n_0$ such that $(a_{n_0},b_{n_0}) \in \mathbf{U}_w \cup \mathbf{V}_w$ the \emph{castling depth} of $(a,b)$, denoted $\mathrm{depth}(a,b)$. 
\end{definition}

The existence of such a minimal $n_0$ is guaranteed by~\autoref{lem:behaviorundercastling}. In particular, for all $(a,b) \in \mathbf{U}_w \cup \mathbf{V}_w$, we have $\mathrm{depth}(a,b) = 0$.

We will now calculate the quantum max-flow in the bridge graph. We start with the easiest case when $(a,b) \in \mathbf{Y}_w$, that is, $ b \geq aw$. This corresponds to the orange cells in~\autoref{table}.

\begin{theorem}\label{thm:maxfloweasyregion}
Let $a,b,a',b',w$ be natural numbers such that $aw \leq b$. We have 
\begin{align*}
	\qmaxflow \left( \bridgegraphsmall{a}{b}{w}{b'}{a'} \right)= \qmincut \left( \bridgegraphsmall{a}{b}{w}{b'}{a'} \right)= \left\{\begin{array}{ll}
		a b' & \text{if } (a',b') \notin \mathbf{Y}_w ,\\
		aa' w  &  \text{if }(a',b') \in \mathbf{Y}_w.
	\end{array} \right.
\end{align*}
In other words, if $T$ and $T'$ are tensors realizing the quantum max-flow, the resulting linear map $F_{T,T'}$ has zero kernel if $b' \leq a'w$ and a kernel of dimension $a\cdot (b'-a'w)$ if $b' \geq a'w$.
\end{theorem}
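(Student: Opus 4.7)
The plan is to compute $\qmincut$ explicitly by enumeration of cuts, invoke the general inequality $\qmaxflow \leq \qmincut$ from \autoref{prop:basicfactsaboutmincut}, and then exhibit an explicit pair of tensors $(T, T')$ whose flow map achieves the matching lower bound by a direct application of \autoref{lem:reformulationof qmaxflow}.

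For the min-cut, the bridge graph admits exactly four cuts, parametrized by whether each of the two middle vertices lies on the $\calS$- or $\calT$-side of the partition; their capacities are $ab'$, $a'b$, $aa'w$, and $bb'w$. The hypothesis $aw \leq b$ immediately yields $a'b \geq a'\cdot aw = aa'w$, and (using $w\geq 1$) $bb'w \geq aw \cdot b'\cdot w \geq ab'$. Hence $\qmincut = \min(ab', aa'w) = a \cdot \min(b', a'w)$, which equals $ab'$ when $(a',b') \notin \bfY_w$ (i.e. $b' \leq a'w$) and $aa'w$ when $(a',b') \in \bfY_w$, matching the stated formula.

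For the max-flow lower bound, I would use \autoref{lem:reformulationof qmaxflow} to work with $\sum_{k=1}^w M_k \boxtimes N_k$, where the $M_k$ are $a \times b$ and the $N_k$ are $b' \times a'$. Since $aw \leq b$, choose $M_k$ concentrated on a single diagonal by setting $(M_k)_{i,\ell} = 1$ if $\ell = (i-1)w + k$ and zero otherwise; this places $a$ ones in $M_k$, at columns $\{k, w+k, 2w+k, \dots, (a-1)w+k\} \subseteq \{1, \dots, b\}$. With this choice, $\sum_k M_k \boxtimes N_k$ is a $(ab') \times (ba')$ block matrix in which the $a$ row-blocks of size $b' \times ba'$ have pairwise disjoint column supports: the $i$-th row-block contains the concatenation $[N_1 \mid N_2 \mid \cdots \mid N_w]$ in column-block positions $(i-1)w+1, \dots, iw$ and zeros elsewhere (the last $(b-aw)a'$ columns of the whole matrix being identically zero). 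Therefore
\[
\rank\Bigl( \sum_k M_k \boxtimes N_k \Bigr) \;=\; a \cdot \rank[N_1 \mid N_2 \mid \cdots \mid N_w].
\]

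Choosing the $N_k$'s generically makes $[N_1 \mid \cdots \mid N_w]$, a $b' \times a'w$ matrix, have full rank $\min(b', a'w)$; this gives $\rank F_{T,T'} = a \cdot \min(b', a'w) = \qmincut$, establishing equality. The kernel dimension claim is then immediate from comparing this rank with the source dimension $ab'$, yielding $a \cdot \max(0, b' - a'w)$. There is no significant obstacle: the construction is elementary, and the only step with any content is the short cut enumeration, where the hypothesis $aw \leq b$ is used in exactly two inequalities to eliminate the $a'b$ and $bb'w$ cuts from the minimum.
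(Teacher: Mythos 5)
Your proof is correct and follows essentially the same approach as the paper's (very terse) proof: the explicit matrices $M_k$ with disjoint column supports are a coordinate realization of the paper's abstract choice of a tensor $T$ with $T\colon (A\otimes W)^*\to B$ injective, while the generic $N_k$ realize the paper's $T'$ with $T'\colon B'^*\to W^*\otimes A'$ of maximal rank. You additionally spell out the min-cut enumeration, which the paper leaves implicit, but this does not change the underlying strategy.
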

\begin{proof}
Let $T$ be a tensor such that the map $T: (A \otimes W)^* \rightarrow B$ is injective. If $(a',b') \notin \mathbf{Y}_w$, let $T'$  be a tensor such that the map $T' : {B'} ^* \rightarrow W^* \otimes A'$ is injective. If $(a',b') \in \mathbf{Y}_w$, let $T'$ be a tensor such that the map $T' : {B'}^* \rightarrow W^* \otimes A'$ is surjective. In both cases, we obtain the desired result.
\end{proof}

Next we consider the case $(a,b) \in \mathbf{W}_w \cup \mathbf{X}_w$ and $(a',b') \in \mathbf{U}_w \cup \mathbf{V}_w$, corresponding to the cyan cells in~\autoref{table}.
\begin{theorem}\label{thm:wxvsuv}
Let  $(a,b) \in \mathbf{W}_w \cup \mathbf{X}_w$ and $(a',b') \in \mathbf{U}_w \cup \mathbf{V}_w$, that is, $(w-1)a \leq b \leq wa$ and $a' \leq b' \leq (w-1)a'$. Then, we have
\begin{equation*}
\qmaxflow \left( \bridgegraphsmall{a}{b}{w}{b'}{a'} \right)= \qmincut \left( \bridgegraphsmall{a}{b}{w}{b'}{a'} \right)= ab'.
\end{equation*}
\end{theorem}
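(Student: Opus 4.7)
My plan is to apply the second castling identity of \autoref{thm:QMaxFlowcastling} exactly once, in order to reduce the problem to the easy regime $a\le b$, $b'\le a'$ recorded in \autoref{subsec:Tensornetworks and the quantum max-flow} (where the quantum max-flow on the bridge graph is known to equal $ab'$).

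First I would check that $\qmincut=ab'$. There are exactly four cuts between $\calS$ and $\calT$ in the bridge graph, with capacities $ab'$, $a'b$, $bb'w$ and $aa'w$. Under the hypotheses $(w-1)a\le b\le wa$ and $a'\le b'\le (w-1)a'$, each of the three comparisons $ab'\le a'b$ (because $b/a\ge w-1\ge b'/a'$), $ab'\le bb'w$ (because $a\le bw$) and $ab'\le aa'w$ (because $b'\le a'w$) is immediate. Hence $\qmincut=ab'$, and \autoref{prop:basicfactsaboutmincut} already gives the upper bound $\qmaxflow\le ab'$.

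For the matching lower bound, I would apply the second castling of \autoref{thm:QMaxFlowcastling}; its hypotheses $b\le wa$ and $b'\le wa'$ are part of our assumption. The identity reads
\begin{equation*}
ab'-\qmaxflow\left(\bridgegraphsmall{a}{b}{w}{b'}{a'}\right) = (wa-b)\cdot a' - \qmaxflow\left(\bridgegraphbig{wa-b}{a}{w}{a'}{wa'-b'}\right).
\end{equation*}
In the castled bridge graph the new top bonds $(wa-b,\,a)$ satisfy $wa-b\le a$ (equivalent to $b\ge(w-1)a$), and the new bottom bonds $(a',\,wa'-b')$ satisfy $a'\le wa'-b'$ (equivalent to $b'\le(w-1)a'$). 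This is precisely the easy configuration of \autoref{subsec:Tensornetworks and the quantum max-flow}, so the quantum max-flow of the castled graph equals $(wa-b)\cdot a'$. Substituting into the castling identity forces $\qmaxflow=ab'$ on the original graph, which matches the upper bound.

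The only obstacle is essentially bookkeeping: one has to verify that a single application of the castling already lands in the known easy regime, and to double-check the cut comparisons of the first step. The argument works because the castling simultaneously reverses both ``asymmetries'' $(w-1)a\le b$ and $b'\le(w-1)a'$ that cut out the regions $\bfW_w\cup\bfX_w$ and $\bfU_w\cup\bfV_w$; because of this precise compatibility, no iteration of the castling and no delicate construction of tensors is needed.
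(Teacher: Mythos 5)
Your proof is correct and follows essentially the same route as the paper: verify that the min-cut is $ab'$, apply the second castling identity of Theorem~\ref{thm:QMaxFlowcastling} once, and observe that the castled bond dimensions $(wa-b,a)$ and $(a',wa'-b')$ land in the elementary regime where the max-flow is immediate. The only cosmetic difference is that you enumerate all four cut capacities explicitly, whereas the paper verifies the single nontrivial comparison $ab'\le a'b$ via $ab'\le aa'(w-1)\le a'b$ and takes the other two as obvious.
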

\begin{proof}
Since $ab' \leq aa'(w-1) \leq ba'$, it is clear the quantum min-cut, in this case, is $ab'$. Moreover since $b \leq aw$ and $b' \leq a'w$,  we can apply~\autoref{thm:QMaxFlowcastling} yielding
\begin{align}\label{eq:proofwxvsuv}
	ab' - \qmaxflow \left( \bridgegraphsmall{a}{b}{w}{b'}{a'} \right) = (wa - b) a' - \qmaxflow \left( \bridgegraphbig{wa-b}{a}{w}{a'}{wa'-b'} \right) .
\end{align}
We notice that $wa-b\leq wa - (w-1)a = a$ and $wa'-b' \geq wa' - (w-1)a' = a'$ (see also~\autoref{fig:Regionsmaintext}). This guarantees that the quantum max-flow with bond dimensions $(wa-b,a)$ and $(wa'-b',a')$ equals $(wa - b) a'$. In particular, the right-hand side of~\autoref{eq:proofwxvsuv} is $0$, showing that the left-hand side is $0$, as well. This concludes the proof.
\end{proof}

Let us turn our attention to the case $(a,b) \in \mathbf{X}_w$. This is the most delicate case and the only one where we can prove the existence of sets of bond dimensions for which the quantum max-flow is strictly smaller than the quantum min-cut. From~\autoref{thm:wxvsuv}, we already know the quantum max-flow for $(a',b') \in \mathbf{U}_w \cup \mathbf{V}_w$. From~\autoref{thm:maxfloweasyregion}, we know the quantum max-flow when $(a',b') \in \mathbf{Y}_w$. The following result provides an answer when $(a',b')\in \mathbf{W}_w \cup \mathbf{X}_w$. These are the purple cases of~\autoref{table}.

\begin{theorem}\label{thm:xvswx}
Let $(a,b) \in \mathbf{X}_w$, that is, $\lambda_w a < b \leq wa$, and write
\begin{equation*}
a = z^{(w)}_p \alpha + z^{(w)}_{p+1} \beta, \;\;b = z^{(w)}_{p+1} \alpha + z^{(w)}_{p+2} \beta 
\end{equation*}
as in~\autoref{lemma: arithmetic a b p}.  
\begin{enumerate}
	\item If $(a',b') \in \mathbf{W}_w$, then 
		\begin{equation*}
\qmaxflow \left( \bridgegraphsmall{a}{b}{w}{b'}{a'} \right)= \qmincut \left( \bridgegraphsmall{a}{b}{w}{b'}{a'} \right)= ab'.
\end{equation*}
	\item If $(a',b') \in \mathbf{X}_w$, write 
\begin{equation*}
	 a' = z^{(w)}_{p'} \alpha' + z^{(w)}_{{p'}+1} \beta', \;\;b' = z^{(w)}_{{p'}+1} \alpha' + z^{(w)}_{{p'}+2} \beta'.
\end{equation*}
 Then,
\begin{equation*}
\qmaxflow \left( \bridgegraphsmall{a}{b}{w}{b'}{a'} \right)= 
\left\{ \begin{array}{ll}
	a b' & \text{if } p'>p,\\
	a b' - \beta \alpha' & \text{if } p = p', \\
	a' b & \text{if } p'<p.\\
        \end{array}\right.
\end{equation*}
\end{enumerate}
\end{theorem}
\begin{proof}
Consider the sequence $(a_n,b_n)$ in~\autoref{eq:castlingsteps} consisting of castling equivalent pairs and define a similar sequence $(a'_n,b'_n)$. We know that $a_p = \beta$ and $b_p = \alpha + w\beta$. If $(a',b') \in \mathbf{W}_w$, we certainly have $b_p' \leq w a_p'$ and consequently
\begin{equation*}\label{eq:proofxw}
	ab' - \qmaxflow \left( \bridgegraphsmall{a}{b}{w}{b'}{a'} \right) = a_p b_p' - \qmaxflow \left( \bridgegraphsmallsub{a_p}{b_p}{w}{b_p'}{a_p'} \right) = 0
\end{equation*}
by~\autoref{thm:maxfloweasyregion}. This shows the first claim.

Now, let $(a',b')\in \mathbf{X}_w$ and assume first that $p' > p$. By~\autoref{lem:behaviorundercastling}, we see that in this case, $wa'_p \geq b_p'$ holds. The same argument as for $(a',b')\in \mathbf{W}_w$ finishes this case. We note that the case $p' <p$ follows by symmetry. 

Finally, assume $(a',b')\in \mathbf{X}_w$ and $p = p'$. In this case, applying~\autoref{thm:QMaxFlowcastling} $p$ times yields 
\[
	ab' - \qmaxflow \left( \bridgegraphsmall{a}{b}{w}{b'}{a'} \right) = \beta(\alpha'+w\beta') - \qmaxflow \left( \bridgegraphbig{\beta}{\alpha+w\beta}{w}{\alpha'+w\beta'}{\beta'} \right)  = \beta \alpha'
\]
where the last equality is~\autoref{thm:maxfloweasyregion}. This finishes the proof. 
\end{proof}

We conclude this section with an immediate consequence of the previous results, which allows us to compute the quantum max-flow when $(a,b) \in \mathbf{V}_w$ and $(a',b') \in \mathbf{U}_w$. These are the green cases of~\autoref{table}.
\begin{corollary}\label{cor:v versus u}
Let $(a,b) \in \mathbf{V}_w$ and $(a',b') \in \mathbf{U}_w$. Then,  
\begin{equation*}
\qmaxflow \left( \bridgegraphsmall{a}{b}{w}{b'}{a'} \right)= \qmincut \left( \bridgegraphsmall{a}{b}{w}{b'}{a'} \right)= ab'.
\end{equation*}
\end{corollary}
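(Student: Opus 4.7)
My plan is to mirror the strategy of \autoref{thm:wxvsuv}: first compute $\qmincut$, then saturate the bound by applying the castling identity of \autoref{thm:QMaxFlowcastling}. The four candidate cut capacities are $ab'$, $a'b$, $waa'$, and $wbb'$. From $(a,b)\in\bfV_w$ I get $b/a > \lambda_{w-1}$, while $(a',b')\in\bfU_w$ gives $b'/a' \leq \lambda_{w-1}$, so $ab' < a'b$; the bounds $ab' \leq waa'$ and $ab' \leq wbb'$ follow from $b' \leq (w-1)a' \leq wa'$ and from $a\leq b$ respectively, so $\qmincut = ab'$.

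Since $b < (w-1)a \leq wa$ and $b' \leq (w-1)a' \leq wa'$, I can apply the second castling identity of \autoref{thm:QMaxFlowcastling}, yielding
\[
ab' - \qmaxflow\!\left(\bridgegraphsmall{a}{b}{w}{b'}{a'}\right) = (wa-b)a' - \qmaxflow\!\left(\bridgegraphbig{wa-b}{a}{w}{a'}{wa'-b'}\right).
\]
It therefore suffices to show the castled max-flow equals $(wa-b)a'$. Using the chain-swap symmetry of the bridge graph, this coincides with $\qmaxflow(a', wa'-b', w, wa-b, a)$, which is in standard form because $a' \leq wa'-b'$ and $a \leq wa-b$ both hold. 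The new top pair $(a', wa'-b')$ has ratio $w - b'/a' \in [w-\lambda_{w-1}, w-1]$, while the new bottom pair $(a, wa-b)$ has ratio $w - b/a \in (1, w-\lambda_{w-1})$. In the boundary case $b'=a'$ -- which is automatic when $w=3$, since $\bfU_3$ consists of pairs $(a',a')$ -- the top ratio equals $w - 1$, placing $(a', wa'-b')$ on the boundary of $\bfW_w$, and \autoref{thm:wxvsuv} applied to the chain-swapped castled problem delivers the required value $(wa-b)a'$, closing the argument for $w=3$.

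The main obstacle is the remaining case $w \geq 4$ with $b' > a'$, where the chain-swapped castled problem has both pairs in $\bfU_w\cup\bfV_w$ and so falls outside the direct hypothesis of \autoref{thm:wxvsuv}. My plan here is to iterate: applying castling formula~$1$ of \autoref{thm:QMaxFlowcastling} to the chain-swapped castled problem produces new pairs whose ratios lie in $[w-1, 1+\lambda_{w-1}] \subset [w-1, \lambda_w)$ -- in particular inside $\bfW_w$ -- after which a further chain-swap fits the hypotheses of \autoref{thm:wxvsuv}. Carefully propagating the resulting castling identities backwards recovers the equality $\qmaxflow = ab'$ for the original problem. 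The hardest part is precisely this bookkeeping: tracking the ratio estimates through two rounds of castling together with the intervening symmetries, verifying that each intermediate pair lands in the advertised region, and combining the resulting identities without losing tightness.
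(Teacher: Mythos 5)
Your computation of $\qmincut = ab'$ is correct, and your castling argument does close the $w=3$ case cleanly: after applying the second identity of \autoref{thm:QMaxFlowcastling} and a top--bottom swap, the top pair $(a',\,wa'-b')$ has ratio exactly $w-1$ when $b'=a'$ (forced for $w=3$, since $\bfU_3$ degenerates), landing in $\bfW_w$, while the bottom pair $(a,\,wa-b)$ has ratio $w-b/a\in(1,2)=\bfV_3$, so \autoref{thm:wxvsuv} applies. This is a genuinely different route from the paper's, which avoids castling here entirely.

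However, your plan for $w\geq 4$ with $b'>a'$ has a gap you don't resolve. After one castling you land in a situation where the top pair $(a',\,wa'-b')$ has ratio in $[\,w-\lambda_{w-1},\,w-1)\subset \bfU_w\cup\bfV_w$ and the bottom pair $(a,\,wa-b)$ has ratio in $(1,\,w-\lambda_{w-1})\subset\bfU_w$ --- i.e., both pairs are still in the difficult $\bfU_w\cup\bfV_w$ zone, exactly the same $\bigstar$-type situation we started from. Applying formula~1 once more does push both pairs into $\bfW_w$ (and, incidentally, the claimed interval $[\,w-1,\,1+\lambda_{w-1}\,]$ is too narrow: for $w=4$, $r'\to 1^+$ gives a ratio near $11/3>1+\lambda_3$; the correct upper bound is $w-\tfrac{1}{w-1}$, which is still $<\lambda_w$, so the conclusion that the pairs land in $\bfW_w$ survives). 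But \emph{both} pairs landing in $\bfW_w$ does not fit the hypotheses of \autoref{thm:wxvsuv}, which needs one pair in $\bfU_w\cup\bfV_w$; a chain-swap does not change region membership, so "a further chain-swap fits the hypotheses" is simply false. Worse, in this situation both new pairs have castling depth $1$, so \autoref{thm:diffcastlingdegree} does not apply either, and you are stuck in the open $\clubsuit$/$\bigstar$ territory. In contrast, the paper's proof is a one-liner that you should not miss: since $\bfV_w=\bfX_{w-1}$ and $\bfU_w=\bfU_{w-1}\cup\bfV_{w-1}\cup\bfW_{w-1}$, the already-proved \autoref{thm:wxvsuv} and \autoref{thm:xvswx} with bridge dimension $w-1$ give $\qmaxflow\bigl(\bridgegraphsmall{a}{b}{w-1}{b'}{a'}\bigr)=ab'$; then the elementary monotonicity of $\qmaxflow$ in the bridge dimension (set the extra slice to $0$ in \autoref{lem:reformulationof qmaxflow}) gives $\qmaxflow\bigl(\bridgegraphsmall{a}{b}{w}{b'}{a'}\bigr)\geq ab'=\qmincut$, forcing equality. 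Reducing the bridge dimension, rather than castling within a fixed $w$, is the key idea your proposal is missing.
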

\begin{proof}
Let $T \in \bbC^a \otimes \bbC^b \otimes \bbC^{w-1}$ be generic and regard it as a tensor in $\bbC^a \otimes \bbC^b \otimes \bbC^{w}$. Similarly, let $T' \in \bbC^{a'} \otimes \bbC^{b'} \otimes \bbC^{w-1}$ be a generic element, regarded as a tensor in $\bbC^{a'} \otimes \bbC^{b'} \otimes \bbC^{w}$. The quantum max-flow is bounded from below by the rank of the flow map $F_{T,T'}$. 

By genericity of $T,T'$, the rank of $F_{T,T'}$ is the max-flow of the bridge graph in the case $w-1$. Notice $\mathbf{V}_w = \mathbf{X}_{w-1}$ and $\mathbf{U}_w = \mathbf{U}_{w-1}\cup \mathbf{V}_{w-1} \cup \mathbf{W}_{w-1}$, therefore in this case the result follows from \autoref{thm:wxvsuv} and~\autoref{thm:xvswx} applied to the case $w-1$. 
\end{proof}

\subsection{Partial solution of the remaining cases}\label{sec:partial cases}

In this section, we see a series of partial results that cover a great number of cases in which $(a,b)$ and $(a',b')$ are both in $\mathbf{W}_w$, in $\mathbf{V}_w$ or in $\mathbf{U}_w$. 

Our first observation is that for pairs with different castling depths as defined in~\autoref{def:castlingdepth}, the quantum max-flow and the quantum min-cut in the bridge graph coincide. 

\begin{theorem}\label{thm:diffcastlingdegree}
Let $(a,b),(a',b')\in \mathbf{W}_w$ be such that $\mathrm{depth}(a,b) > \mathrm{depth}(a',b')$. Then, 
\begin{equation*}
\qmaxflow \left( \bridgegraphsmall{a}{b}{w}{b'}{a'}\right) = \qmincut \left( \bridgegraphsmall{a}{b}{w}{b'}{a'}\right) =ab'.
\end{equation*}
\end{theorem}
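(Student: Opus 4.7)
The plan is to apply the second formula of \autoref{thm:QMaxFlowcastling} iteratively $k = \depth(a',b')$ times, and then reduce to a case covered by \autoref{thm:wxvsuv}. I would define the castling iterates by $(a_0, b_0) = (a, b)$, $(a'_0, b'_0) = (a', b')$, and
\[
(a_n, b_n) = (w a_{n-1} - b_{n-1}, a_{n-1}), \qquad (a'_n, b'_n) = (w a'_{n-1} - b'_{n-1}, a'_{n-1}).
\]
By \autoref{lemma: w castling equivalent to V}, every iterate of a pair in $\bfW_w$ remains in $\bfU_w \cup \bfV_w \cup \bfW_w$, and in particular has ratio $b_n / a_n \leq \lambda_w < w$; the hypothesis $b_n \leq w a_n$ required by \autoref{thm:QMaxFlowcastling} therefore holds automatically at each step, and similarly on the primed side.

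For $0 \leq n < k$, both iterates still have positive castling depth and so remain in $\bfW_w$. Applying \autoref{thm:QMaxFlowcastling} a total of $k$ times would yield
\[
ab' - \qmaxflow\left( \bridgegraphsmall{a}{b}{w}{b'}{a'} \right) = a_k b_k' - \qmaxflow\left( \bridgegraphsmallsub{a_k}{b_k}{w}{b_k'}{a_k'} \right).
\]
By the definition of castling depth, $(a'_k, b'_k) \in \bfU_w \cup \bfV_w$, while the assumption $\depth(a, b) > k$ guarantees that $(a_k, b_k) \in \bfW_w$. This is exactly the hypothesis of \autoref{thm:wxvsuv}, which gives $\qmaxflow(a_k, b_k; b_k', a_k') = a_k b_k'$; substituting back produces the claimed value $\qmaxflow = ab'$. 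For the quantum min-cut, the cut $\calA = \calS$, $\calB = V \setminus \calS$ has capacity $ab'$, so $\qmincut \leq ab'$; combined with the universal inequality $\qmaxflow \leq \qmincut$, equality holds throughout.

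The only subtle point I anticipate is verifying that operation 1 strictly decreases the castling depth by one at every step, so that the primed pair leaves $\bfW_w$ at exactly step $k$ while the unprimed pair does not. This is however immediate from the definition of $\depth$ together with \autoref{lemma: w castling equivalent to V}, and no further estimates are required; the heart of the argument is simply the bookkeeping of applying \autoref{thm:QMaxFlowcastling} the correct number of times before invoking \autoref{thm:wxvsuv}.
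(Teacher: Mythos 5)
Your proof is correct and takes essentially the same approach as the paper: apply the second castling formula of \autoref{thm:QMaxFlowcastling} exactly $\depth(a',b')$ times, after which the primed pair lies in $\bfU_w\cup\bfV_w$ while the unprimed pair is still in $\bfW_w$, and then invoke \autoref{thm:wxvsuv}. The only difference is that you spell out the verification that the hypotheses $b_n \leq w a_n$ and $b'_n \leq w a'_n$ persist along the iteration, which the paper leaves implicit.
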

\begin{proof}
Let $p = \mathrm{depth}(a',b')$. We apply iteratively~\autoref{thm:QMaxFlowcastling} to obtain 
\begin{equation*}
ab' - \qmaxflow \left( \bridgegraphsmall{a}{b}{w}{b'}{a'}\right)
= a_{p} b_{p}' -  \qmaxflow \left( \bridgegraphsmallsub{a_{p}}{b_{p}}{w}{b_{p}'}{a_{p}'}\right).
\end{equation*}
By assumption, $(a_{p} , b_{p}) \in \mathbf{W}_w$ and $(a_p',b_p') \in \mathbf{U}_w \cup \mathbf{V}_w$. By~\autoref{thm:wxvsuv}, we deduce that the right-hand side of the equation above is $0$. So is the left-hand side, which provides the assertion.
\end{proof}

We can also calculate the quantum max-flow for some instances of $(a,b)$ and $(a',b')$ where both have the same castling depth by generalizing~\autoref{cor:v versus u}. 
\begin{lemma}\label{lem:easywhalflemma}
Fix $a,b,w,a',b'$ such that $(a,b),(a',b') \in \mathbf{U}_w \cup \mathbf{V}_w$ and let $\overline{w} \in \lbrace 2, \dots , w-1 \rbrace$. Assume that $a\overline{w} \leq b$ and $a'\overline{w} \geq b'$. Then, 
\begin{equation}\label{eq:omegahalf} 
	\qmaxflow \left( \bridgegraphsmall{a}{b}{w}{b'}{a'}\right) = \qmincut \left( \bridgegraphsmall{a}{b}{w}{b'}{a'}\right) = ab'.
\end{equation}
\end{lemma}
\begin{proof}
It follows from~\autoref{thm:maxfloweasyregion} that~\autoref{eq:omegahalf} holds even when we replace the bridge width $w$ by $\overline{w}$. This implies the claim.   
\end{proof}

For example, recall the recursively defined sequence $(a_n,b_n)$ in~\autoref{eq:castlingsteps} and define $(a',b') = (a, wa-b)$. If $(a,b)\in \mathbf{U}_w \cup \mathbf{V}_w$, then so is $(a', b')$.
According to~\autoref{lem:behaviorundercastling}, after one castling step the reversed tuples $(b_1, a_1), (b'_1, a'_1)$ again lie $\mathbf{U}_w \cup \mathbf{V}_w$, and the subsequent reversed tuples stay in $\mathbf{W}_w$.
Reversing order of the tuple elements also reverses the castling steps, so $\mathrm{depth}(b_{n+1}, a_{n+1}) = \mathrm{depth}(b'_{n+1}, a'_{n+1}) = n$ for all $n \in \mathbb{N}$.
It turns out that we can calculate the quantum max-flow for these pairs. 

\begin{corollary}\label{thm:samecastlingdegree}
	Let $w$ be even, $(a,b) \in \mathbf{U}_w \cup \mathbf{V}_w$ and consider $(a_n,b_n)$ and $(a'_{n},b'_{n})$ as before. Then, 
\begin{equation*}
	\qmaxflow \left( \bridgegraphsmall{a_n}{b_n}{w}{b'_{n}}{a'_{n}}\right) = \qmincut \left( \bridgegraphsmall{a_n}{b_n}{w}{b'_{n}}{a'_{n}}\right) =\min  (a_n b'_n, a'_n b_n ).
\end{equation*}
\end{corollary}
\begin{proof}
Using~\autoref{thm:QMaxFlowcastling} it is clear that it suffices to show the claim for $n = 0$. There, both pairs are in $\mathbf{U}_w \cup \mathbf{V}_w$. Assuming $\frac{w}{2}a \leq b$, that is, $aw-b \leq \frac{w}{2}a$, the claim is exactly~\autoref{lem:easywhalflemma}. The case $\frac{w}{2}a \geq b$ follows by symmetry. 
\end{proof}

We conclude this section with a result that determines the quantum max-flow for proportional pairs of bond dimensions in the region $\mathbf{U}_w \cup \mathbf{V}_w \cup \mathbf{W}_w$. 
\begin{theorem}\label{thm: UVW quiver}
Let $w \geq 3$ and let $(a,b),(a',b')\in \mathbf{U}_w \cup \mathbf{V}_w \cup \mathbf{W}_w$ and suppose there exists $q \in \mathbb{Q}$ with $(a,b) = q(a',b')$. Then,
\[
\qmaxflow \left( \bridgegraphsmall{a}{b}{w}{b'}{a'}\right) = \qmincut \left( \bridgegraphsmall{a}{b}{w}{b'}{a'}\right) =ab'.
\]
\end{theorem}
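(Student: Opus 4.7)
The plan has three steps: reduce to a primitive case via a block-diagonal construction, reformulate the problem as the nonvanishing of a semi-invariant of the Kronecker quiver $\calK_w$, and produce such a semi-invariant via Schofield's determinantal construction.

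\textbf{Step 1 (Reduction to the primitive case).} From $(a,b)=q(a',b')$ with $q\in\bbQ$, an elementary divisibility argument yields positive integers $g,g'$ and a primitive pair $(\alpha,\beta)$ with $\gcd(\alpha,\beta)=1$ such that $(a,b)=g(\alpha,\beta)$ and $(a',b')=g'(\alpha,\beta)$. The region $\bfU_w\cup\bfV_w\cup\bfW_w$ is closed under positive scaling, so $(\alpha,\beta)$ also lies in it, and $ab'=a'b=gg'\alpha\beta$, which equals $\qmincut$. By \autoref{lem:reformulationof qmaxflow}, it suffices to exhibit $M_i\in\bbC^{a\times b}$ and $N_i\in\bbC^{b'\times a'}$ with $\sum_i M_i\boxtimes N_i$ of rank $ab'$. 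Take $M_i=I_g\otimes \tilde M_i$ and $N_i=I_{g'}\otimes \tilde N_i$ with $\tilde M_i\in\bbC^{\alpha\times\beta}$, $\tilde N_i\in\bbC^{\beta\times\alpha}$; standard manipulation of Kronecker products gives
\[
{\textstyle \sum_i} M_i\boxtimes N_i \;\cong\; I_{gg'}\otimes {\textstyle \sum_i}\tilde M_i\boxtimes\tilde N_i,
\]
so the problem reduces to finding $\tilde M_i,\tilde N_i$ with $\sum_i\tilde M_i\boxtimes\tilde N_i$ invertible as an $\alpha\beta\times\alpha\beta$ matrix.

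\textbf{Step 2 (Reformulation via quiver semi-invariants).} By the first bullet of \autoref{prop: quiver basics}, for any fixed $\tilde N_i$, the polynomial $\tilde T\mapsto\det(\sum_i\tilde T(e_i)\boxtimes\tilde N_i)$ on $\bbC^\alpha\otimes\bbC^\beta\otimes\bbC^w$ lies in $\SI(\calK_w,(\alpha,\beta))^{[\alpha\beta]}$, and every element of this component arises as such a sum. Hence existence of the desired $\tilde M_i,\tilde N_i$ is equivalent to $\SI(\calK_w,(\alpha,\beta))^{[\alpha\beta]}\neq 0$. Since $\gcd(\alpha,\beta)=1$ gives $\lcm(\alpha,\beta)=\alpha\beta$, the second bullet of \autoref{prop: quiver basics} reduces this further to showing that $\SI(\calK_w,(\alpha,\beta))$ contains some nonzero element of positive degree.

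\textbf{Step 3 (Schofield construction).} Consider the dimension vector $\gamma=(\beta,\,w\beta-\alpha)$; the hypotheses $\beta\geq\alpha$ and $w\geq 3$ ensure $w\beta-\alpha>0$. The Euler form of $\calK_w$ satisfies
\[
\langle\gamma,(\alpha,\beta)\rangle = \beta\alpha+(w\beta-\alpha)\beta-w\beta^2=0,
\]
so the Schofield map
\[
d_{W,V}\colon\Hom(W_1,V_1)\oplus\Hom(W_2,V_2)\to\Hom(W_1,V_2)^{\oplus w},\qquad(\phi_1,\phi_2)\mapsto(V_{e_i}\phi_1-\phi_2 W_{e_i})_{i=1}^w,
\]
is a square linear map of size $w\beta^2$, and $V\mapsto \det(d_{W,V})$ (with $W$ fixed and generic) is a semi-invariant of $\calK_w$ on representations of dimension $(\alpha,\beta)$ of degree exactly $\alpha\beta$.

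\textbf{Main obstacle.} The crucial point is the nonvanishing of $\det(d_{W,V})$, equivalently $\Hom(W,V)=0$ for some $W$ of dimension $\gamma$ and $V$ of dimension $(\alpha,\beta)$. For $w\geq 3$, both $\gamma$ and $(\alpha,\beta)$ are Schur roots of the wild Kronecker quiver with orthogonal Euler form; the nonvanishing then follows from the general results of \cite{Sch:SemiInvQuivers,SCHOFIELD2001125} that characterize generic $\Hom$-vanishing between Schur roots. In the degenerate case $\alpha=\beta=1$ one can verify everything by hand: $d_{W,V}$ is a generic $w\times w$ matrix whose first column carries the entries of $V$ and whose remaining $w-1$ columns carry those of $W$, so its determinant is visibly a nonzero linear semi-invariant in $V$, serving as a sanity check for the Schofield argument in the general case.
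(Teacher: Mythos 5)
Your proposal tackles the theorem by a genuinely different route from the paper, but Step 3 has a real gap.

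The paper's proof goes: cite that a generic tensor of format $(a,b,w)$ with $(a,b)\in\bfU_w\cup\bfV_w\cup\bfW_w$ is $\SL(A)\times\SL(B)$-semistable (references to Derksen--Makam and the algebraic-statistics literature); semistability immediately gives non-triviality of $\SI(\calK_w,(a,b))$; then the two bullets of \autoref{prop: quiver basics} convert this into a nonzero determinant $\det(T(e_1)\boxtimes Z_1+\dots+T(e_w)\boxtimes Z_w)$ with $Z_j$ of size $b'\times a'$, which is precisely $\det(F_{T,T'})\neq 0$. Your Steps 1--2 re-derive the reduction to $\SI(\calK_w,(\alpha,\beta))^{[\alpha\beta]}\neq 0$ with $\gcd(\alpha,\beta)=1$; that is a correct and explicit reformulation of what the paper achieves implicitly via the saturation statement in the second bullet of \autoref{prop: quiver basics}, and the block-Kronecker identity $\sum_i M_i\boxtimes N_i\cong I_{gg'}\otimes\sum_i\tilde M_i\boxtimes\tilde N_i$ is a clean way to see it. The real divergence is in how the non-triviality of $\SI(\calK_w,(\alpha,\beta))$ is established.

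The gap: in Step 3 you choose $\gamma=(\beta,w\beta-\alpha)$, verify $\langle\gamma,(\alpha,\beta)\rangle=0$, and assert that the nonvanishing of Schofield's semi-invariant $\det(d_{W,V})$ ``follows from the general results \ldots that characterize generic $\Hom$-vanishing between Schur roots.'' But $\langle\gamma,(\alpha,\beta)\rangle=0$ only gives $\hom(\gamma,(\alpha,\beta))=\ext(\gamma,(\alpha,\beta))$; it does not by itself force both to be zero, and ``both are Schur roots with orthogonal Euler form'' is not, as stated, a sufficient condition for generic $\Hom$-vanishing. What you would actually need to invoke is Schofield's characterization of $\hom(\gamma,\delta)=0$ in terms of generic subdimension vectors (equivalently King's semistability criterion for $\sigma=\langle\gamma,-\rangle$), and verifying that criterion for $(\alpha,\beta)$ in the region $\bfU_w\cup\bfV_w\cup\bfW_w$ is exactly the content of the semistability statement the paper cites. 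So you have not replaced the hard external input with something more elementary; you have re-expressed the same fact as the nonvanishing of $c^W$ and then left the decisive part as an assertion. To make Step 3 airtight, either carry out the subdimension-vector check for $\sigma=\langle(\beta,w\beta-\alpha),-\rangle$, or simply cite the generic semistability result as the paper does and feed it into Schofield's spanning theorem to produce the nonzero determinant. The $\alpha=\beta=1$ sanity check at the end is fine, but it does not carry the general case.
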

\begin{proof}
The hypothesis guarantees $\dim (A \otimes B') = \dim (B \otimes A')$. In particular, the statement is equivalent to the fact that for some $T \in A \otimes B \otimes W$ and $T' \in A' \otimes B' \otimes W^*$, the flow map $F_{T,T'}$ has nonzero determinant. 

If $(a,b) \in \mathbf{U}_w \cup \mathbf{V}_w \cup \mathbf{W}_w$, a generic tensor $T \in A \otimes B \otimes W$ is \emph{semistable} for the action of $\mathrm{GL}(A) \times \mathrm{GL}(B)$ in the sense of geometric invariant theory, see, for example,~\cite[Corollary~4.10]{Invarianttheoryandmaximumlikelihood},~\cite[Proposition~1.3]{DrtKurHof:MLEKroneckerCovariance}, or~\cite[Theorem~1.2]{DerMak:MLEMatrix}. In particular, semistability of generic elements guarantees that the ring of invariants 
\[
\mathbb{C}[A \otimes B \otimes W]^{\mathrm{SL}(A) \times \mathrm{SL}(B)}
\]
is nontrivial. By the discussion of~\autoref{sec: quiver intro}, this ring of invariants coincides with $\mathrm{SI}(\mathcal{K}_w,(a,b))$. By~\autoref{prop: quiver basics}, this ring has nontrivial elements in degree $ab'$ and they arise as determinants 
\[
\det(T(\mathrm{e}_1) \boxtimes Z_1 + \cdots + T(\mathrm{e}_w) \boxtimes Z_w)
\]
for certain, generic enough, matrices $Z_j$ of size $b' \times a'$. Regarding the $w$-tuple $(Z_1  , \dots ,  Z_w)$ as a tensor in $A' \otimes B' \otimes W^*$, we see that $T(\mathrm{e}_1) \boxtimes Z_1 + \cdots + T(\mathrm{e}_w) \boxtimes Z_w$ can be identified with the flow map $F_{T,T'}$. Since this determinant is nonzero for generic enough $T$ and $Z_1  , \dots ,  Z_w$, we conclude that $F_{T,T'}$ has full rank. This yields the desired result.
\end{proof}

\subsection{Conjectural behaviour of the quantum max-flow in the bridge graph and a reduction to $w = 3$}\label{sec:conjebehaviour} 

We expect the quantum max-flow to be always equal to quantum min-cut with the only exception discussed in case $\diamondsuit$ of \autoref{thm:maintheoremintro}, which was analyzed in \autoref{thm:xvswx}. This has been proved for a large number of cases, as shown in \autoref{table}. In the cases remained open, both $(a,b)$ and $(a',b')$ belong to $\bfU_w$, $\bfV_w$ or $\bfW_w$ and the results of \autoref{sec:partial cases} do not apply. Therefore, we propose the following conjecture:
\begin{conjecture}\label{conj:generalconj}
Let $(a,b),(a',b') \in \bfU_w \cup \bfV_w \cup \bfW_w$. Then
\begin{equation*}
\qmaxflow \left( \bridgegraphsmall{a}{b}{w}{b'}{a'}\right) = \qmincut \left( \bridgegraphsmall{a}{b}{w}{b'}{a'}\right) =\min\{a'b,ab'\}.
\end{equation*}
\end{conjecture}

In this section, we prove a reduction result showing that \autoref{conj:generalconj} is equivalent to its specialization to the case $w=3$:
\begin{conjecture}\label{conj: case w = 3}
Let $(a,b), (a',b') \in \bfU_3 \cup \bfV_3 \cup \bfW_3$. Then
\[
	\qmincut \left( \bridgegraphsmall{a}{b}{3}{b'}{a'} \right) = \qmaxflow \left( \bridgegraphsmall{a}{b}{3}{b'}{a'} \right) = \min \lbrace a' b, ab'\rbrace.
\]
\end{conjecture}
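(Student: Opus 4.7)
The easy direction is immediate: \autoref{conj: case w = 3} is the specialization of \autoref{conj:generalconj} to $w = 3$, so it follows from the general conjecture.

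For the non-trivial direction, I plan to argue by induction on $w \geq 3$, with base case $w = 3$ supplied by \autoref{conj: case w = 3}. For the inductive step, fix $w \geq 4$, assume \autoref{conj:generalconj} for $w-1$, and let $(a,b),(a',b') \in \bfU_w \cup \bfV_w \cup \bfW_w$ satisfy $ab' \leq a'b$; the goal is to show $\qmaxflow(a,b,w,b',a') = ab'$. The driving tool will be the defect invariance coming from \autoref{thm:QMaxFlowcastling}: the quantity $ab' - \qmaxflow(a,b,w,b',a')$ is preserved by Castle 1. The strategy is to apply Castle 1 iteratively, restoring the convention $a \leq b$, $a' \leq b'$ after each step via the symmetry $\qmaxflow(a,b,w,b',a') = \qmaxflow(b,a,w,a',b')$, until we land in a configuration covered by an already-established result.

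The first key observation is that when both pairs sit in $\bfU_w = \bfU_{w-1} \cup \bfV_{w-1} \cup \bfW_{w-1}$, the induction hypothesis together with a padding argument (set the $w$-th slice of the middle tensors to zero to embed a $(w-1)$-flow into the $w$-flow) delivers $\qmaxflow(a,b,w,b',a') \geq \min\{ab',a'b\}$, and the min-cut upper bound gives equality (using $\lambda_{w-1} < w$ to rule out the middle-edge cut). The second observation is arithmetic: for $w \geq 4$, the inequality $1 + \lambda_{w-1}^{-1} < \lambda_{w-1}$ holds, so a single Castle 1 sends any pair from $\bfV_w$ into $\bfU_w$; and by \autoref{lemma: w castling equivalent to V}, finitely many Castle 1 steps send any pair in $\bfW_w$ into $\bfU_w \cup \bfV_w$. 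Applying Castle 1 simultaneously to $(a,b)$ and $(a',b')$, the iteration must terminate in one of two situations: either at some intermediate step one pair has already reached $\bfU_w \cup \bfV_w$ while the other is still in $\bfW_w$ (in which case \autoref{thm:wxvsuv} or \autoref{cor:v versus u}, after possibly a relabeling, yields defect zero), or both pairs reach $\bfU_w \cup \bfV_w$ simultaneously (then at most one further Castle 1 puts both into $\bfU_w$ and the padding argument applies). In every case, defect invariance transfers $\qmaxflow = ab'$ back to the original data.

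The main obstacle will be the bookkeeping of the simultaneous castling: the two pairs can have different castling depths (cf.\ \autoref{thm:diffcastlingdegree}), and at every step one must track the region of each pair and, if the standard-form convention is violated, perform the swap $\qmaxflow(a,b,w,b',a') = \qmaxflow(b,a,w,a',b')$ before invoking the next theorem. A brief computation shows that this castling-plus-swap interchanges the roles of $ab'$ and $a'b$: indeed $(wa-b)a' - a(wa'-b') = ab' - a'b$, so $ab' \leq a'b$ in the old configuration becomes the reversed inequality after the relabeled castling. Keeping track of which of $a\cdot b'$ or $a'\cdot b$ plays the role of the minimum at each intermediate configuration, and invoking the appropriate side of the asymmetric results in \autoref{subse:calcqmaxflow}, is the most delicate part of the argument; once that is settled, defect invariance does the rest.
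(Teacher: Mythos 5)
There is a fundamental gap: the statement you were asked to prove is \autoref{conj: case w = 3}, which is an \emph{open conjecture} in this paper --- the authors do not prove it, they only prove (in \autoref{prop:conjs equ}, via \autoref{lem:vwequtow} and \autoref{lemma: full rank from u to vw}) that it \emph{implies} the general \autoref{conj:generalconj}. Your proposal does not close this gap; it is circular as a proof of the statement. Your ``easy direction'' derives \autoref{conj: case w = 3} from \autoref{conj:generalconj}, which is equally unproven (indeed the whole point of the paper's reduction is that the two are equivalent), and your ``non-trivial direction'' is an induction on $w$ whose base case $w=3$ is, in your own words, ``supplied by \autoref{conj: case w = 3}'' --- i.e.\ by the very statement to be proven. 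What you actually sketch is a reconstruction of the paper's reduction argument: simultaneous castling of the two pairs, tracking of castling depth, the swap symmetry, the identity $\bfU_w = \bfU_{w-1}\cup\bfV_{w-1}\cup\bfW_{w-1}$, and appeals to \autoref{thm:wxvsuv}, \autoref{cor:v versus u} and \autoref{thm:diffcastlingdegree}. That is essentially \autoref{prop:conjs equ}, and as such it is fine (your padding observation for lowering $w$ is the implicit monotonicity the paper also uses), but it contributes nothing toward the case $w=3$ itself.

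To actually prove the statement you would need new input for $w=3$ with both pairs in $\bfU_3\cup\bfV_3\cup\bfW_3$: for instance an explicit construction of tensors $T,T'$ whose flow map has rank $\min\{ab',a'b\}$, or an invariant-theoretic argument going beyond the proportional case handled in \autoref{thm: UVW quiver} (where $(a,b)=q(a',b')$ and semistability plus nonvanishing of semi-invariants of the Kronecker quiver $\calK_w$ gives full rank of a square flow map). When $ab'<a'b$ the flow map is not square, so the determinantal semi-invariant argument does not apply directly, and the known prehomogeneity results (\autoref{prop: dense orbit}) only cover $\bfX_w\cup\bfY_w$; none of this is addressed in your proposal.
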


It is clear that \autoref{conj:generalconj} implies \autoref{conj: case w = 3}. In order to prove the other implication, we prove two preliminary results showing that one can reduce to pairs in $\bfU_w$ whenever $w \geq 4$.
\begin{lemma}\label{lem:vwequtow}
Let $w \geq 4$ and let $(a,b)\in \bfV_w$. Then $(a,b)$ is castling equivalent to a pair $(a_m,b_m) \in \bfU_w$. 
\end{lemma}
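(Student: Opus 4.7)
The plan is to show that a single castling step already suffices, so I will take $m=1$. Let $(a,b) \in \bfV_w$, meaning $\lambda_{w-1} a < b < (w-1)a$, and apply the castling operation $(a,b) \mapsto (wa-b, a)$. The lower bound $b < (w-1)a$ gives $wa-b > a$, so the two entries of the image pair are $a$ and $wa-b$ with $a$ smaller; reindexing so that the smaller entry comes first (as is done, for example, in \autoref{lemma: w castling equivalent to V}), I obtain the pair $(\tilde a, \tilde b) := (a,\, wa-b)$, which already satisfies $\tilde a \leq \tilde b$.

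Next I would verify $(\tilde a, \tilde b) \in \bfU_w$. Since $\tilde a \leq \tilde b$ is built-in, the only nontrivial condition is $\tilde b \leq \lambda_{w-1} \tilde a$, i.e.\ $wa - b \leq \lambda_{w-1} a$, or equivalently $b \geq (w - \lambda_{w-1})a$. Using that $\lambda_{w-1}$ is a root of $x^2 - (w-1)x + 1 = 0$ gives $\lambda_{w-1} + \lambda_{w-1}^{-1} = w-1$, and hence $w - \lambda_{w-1} = 1 + \lambda_{w-1}^{-1}$. Since the hypothesis on $\bfV_w$ gives $b > \lambda_{w-1} a$, it is enough to prove the numerical inequality $\lambda_{w-1} \geq 1 + \lambda_{w-1}^{-1}$, i.e.\ $\lambda_{w-1}^2 \geq \lambda_{w-1} + 1$. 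Substituting $\lambda_{w-1}^2 = (w-1)\lambda_{w-1} - 1$ reduces this to $(w-2)\lambda_{w-1} \geq 2$, which holds whenever $w \geq 4$, since then $\lambda_{w-1} \geq \lambda_3 = (3+\sqrt{5})/2 > 1$ and $w - 2 \geq 2$.

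The main subtlety I expect is conceptual rather than computational: the castling operation as literally written does \emph{not} preserve the normalization $a \leq b$ when applied in $\bfV_w$ (because $wa - b > a$ there), so the argument requires an implicit reindexing step to match the conventions of the regions $\bfU_w, \bfV_w, \ldots$ Once that bookkeeping is settled, the remaining content is the short algebraic check above. Finally, the hypothesis $w \geq 4$ is sharp for this one-step argument: for $w = 3$ one has $\lambda_2 = 1$, and the required inequality $(w-2)\lambda_{w-1} \geq 2$ degenerates to $1 \geq 2$, which fails; in that case a single castling can no longer land in $\bfU_3$ and the logic of \autoref{conj: case w = 3} reflects the remaining difficulty.
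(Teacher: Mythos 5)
Your proof is correct and streamlines the paper's argument. The paper applies the same castling step $(a,b)\mapsto(a,wa-b)$, verifies that the result stays in $\bfU_w\cup\bfV_w$, then shows the second coordinate strictly decreases (using $\lambda_{w-1}>w/2$, valid for $w\geq 4$) and iterates until some iterate falls into $\bfU_w$. You instead observe that the same inequality $\lambda_{w-1}>w/2$ --- which you derive in the equivalent form $(w-2)\lambda_{w-1}\geq 2$ via the relation $\lambda_{w-1}^2=(w-1)\lambda_{w-1}-1$ --- already forces $wa-b\leq\lambda_{w-1}a$ directly from the hypothesis $b>\lambda_{w-1}a$, so the very first castling step lands in $\bfU_w$ and the iteration is unnecessary; in other words, $m=1$ always works. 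This is a genuine (if small) sharpening: both proofs turn on the same bound for $\lambda_{w-1}$, but you use it to close the argument in one step rather than to drive a descent. Your remark that at $w=3$ one has $\lambda_2=1$ and the bound degenerates is also on the mark, and indeed for $w=3$ the castling $(a,b)\mapsto(a,3a-b)$ maps $\bfV_3$ to itself, so not only does the one-step argument fail but castling never reaches $\bfU_3=\{(a,a)\}$ at all --- which is exactly why $w=3$ is treated as the irreducible base case in \autoref{sec:conjebehaviour}.
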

\begin{proof}
Define $(\tilde{a},\tilde{b}) = (a,aw-b)$, which is castling equivalent to $(a,b)$. Since $(a,b) \in \bfV_w$, we have
\begin{align*}
&    \tilde{b} = aw - b \geq aw - (w-1)a = a = \tilde{a}, \\
&    (w-1) \tilde{a} = (w-1)a = wa - a > wa - b = \tilde{b}.
\end{align*}
This guarantees $(\tilde{a},\tilde{b}) \in \bfU_w \cup \bfV_w$.

Since $\lambda_{w} \in (w-1,w)$ for $w \geq 3$, we have  
\begin{align*}
    \lambda_{w-1} > w-2 = \frac{w-4}{2} + \frac{w}{2} \geq \frac{w}{2}
\end{align*}
and therefore 
\begin{equation*}
    \tilde{b} = aw - b = aw - 2b + b \leq aw - 2\lambda_{w-1}a + b = 2a (\frac{w}{2} - \lambda_{w-1})<b.
\end{equation*}
Recursively define $(a_0,b_0) = (a,b)$ and $(a_{i},b_{i}) = (\tilde{a}_{i-1},\tilde{b}_{i-1})$. This defines a sequence of castling equivalent pairs $(a_i,b_i) \in \bfU_w \cup \bfV_w$. Moreover, the sequence $b_j$ is strictly decreasing, whereas $a_j = a$ for every $j$. Therefore, there exists $m \geq 0$ such that $b_m < \lambda_{w-1} a_m$, namely $(a_m,b_m) \in \bfU_w$. 
\end{proof}

Using \autoref{lem:vwequtow}, one can reduce the computation of the max-flow for pairs in $\bfV_w$ and $\bfW_w$ to pairs in $\bfU_w$. This is recorded in the following result:
\begin{lemma}\label{lemma: full rank from u to vw}
Let $w \geq 4$. If for all $(a,b),(a',b') \in \bfU_w$
\begin{equation}\label{eq: full rank from u to vw}
    \qmincut \left( \bridgegraphsmall{a}{b}{w}{b'}{a'} \right) = \qmaxflow \left( \bridgegraphsmall{a}{b}{w}{b'}{a'} \right) = \min \lbrace a' b, ab'\rbrace
\end{equation}
then \eqref{eq: full rank from u to vw} the same holds for $(a,b), (a',b') \in \bfV_w \cup \bfW_w$.
\end{lemma}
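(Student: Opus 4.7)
The plan is to apply \autoref{thm:QMaxFlowcastling} iteratively to both pairs simultaneously, reducing to configurations that are already handled. Two invariants guide the reduction: the castling identity preserves $ab' - \qmaxflow$, and a direct computation shows that the difference $ab' - a'b$ is also preserved under the castling $(x, y) \mapsto (wx - y, x)$, since $(wa - b)a' - (wa' - b')a = ab' - a'b$. By the top-bottom symmetry of the bridge graph I may therefore assume $ab' \leq a'b$, and this inequality survives every castling step; it suffices to reach a configuration where I can conclude $\qmaxflow = ab'$ at that stage.

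My first step is to set $d = \depth(a, b)$ and $d' = \depth(a', b')$ and apply Version 2 of the castling identity exactly $\min(d, d')$ times. By \autoref{lemma: w castling equivalent to V}, as long as both pairs remain in $\bfW_w$ the castling preserves the orientation $x \leq y$ and lowers the castling depth of each pair by one, so all intermediate pairs stay in $\bfU_w \cup \bfV_w \cup \bfW_w$ in the correct orientation.

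Next I split on the comparison of $d$ and $d'$. If $d > d'$, after $d'$ iterations the bottom pair lies in $\bfU_w \cup \bfV_w$ while the top is still in $\bfW_w \subseteq \bfW_w \cup \bfX_w$, so \autoref{thm:wxvsuv} gives $\qmaxflow_{d'} = a_{d'} b_{d'}'$, and unwinding the preserved identity yields $\qmaxflow = ab'$. The symmetric case $d < d'$ is ruled out by the preserved inequality: it would put top in $\bfU_w \cup \bfV_w$ (hence $b_d/a_d < w - 1$) and bottom in $\bfW_w$ (hence $b_d'/a_d' \geq w - 1$), forcing $a_d b_d' > a_d' b_d$ and contradicting $a_k b_k' \leq a_k' b_k$. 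When $d = d'$, both pairs land in $\bfU_w \cup \bfV_w$ after $d$ iterations; the same ratio comparison excludes the configuration where the top is in $\bfU_w$ and the bottom in $\bfV_w$, leaving three sub-cases: both in $\bfU_w$ (apply the hypothesis); top in $\bfV_w$ and bottom in $\bfU_w$ (apply \autoref{cor:v versus u}); or both in $\bfV_w$, in which case one further application of castling followed by the bridge's left-right symmetry acts as $(x, y) \mapsto (x, wx - y)$ on both pairs and, by \autoref{lem:vwequtow} (here $w \geq 4$ is essential), sends both into $\bfU_w$, where the hypothesis applies.

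The main obstacle I anticipate is the careful bookkeeping of orientations under the simultaneous castling of the two pairs, and verifying that the preserved inequality systematically excludes those mixed configurations for which no previously-established result applies. Once those ratio arguments are in place, unwinding through the preserved identity $a_k b_k' - \qmaxflow_k = ab' - \qmaxflow$ to conclude $\qmaxflow = ab'$ is a routine telescoping argument.
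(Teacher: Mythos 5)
Your proof is correct, and it follows essentially the same reduction chain as the paper: apply the castling identity of \autoref{thm:QMaxFlowcastling} until the configuration falls under the hypothesis, \autoref{thm:wxvsuv}, \autoref{cor:v versus u}, or \autoref{lem:vwequtow}, then unwind. The genuine addition in your version is the explicit observation that $ab'-a'b$ is a castling invariant of the pair of pairs; you use this, together with a WLOG assumption on its sign, to rule out the asymmetric configurations ($d<d'$; top in $\bfU_w$, bottom in $\bfV_w$) that the paper instead disposes of by a separate symmetry argument, by the auxiliary reduction from ``both in $\bfV_w$'' to ``both in $\bfW_w$'', and by citing \autoref{thm:diffcastlingdegree}. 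This unifies the case structure and shortens the argument, though it re-derives \autoref{thm:diffcastlingdegree} rather than quoting it. One point that deserves an explicit sentence when this is written out fully: in the final sub-case where both castled pairs lie in $\bfV_w$, the operation you describe -- a castling step followed by the left-right relabeling $\bridgegraphsmall{x}{y}{w}{y'}{x'} = \bridgegraphsmall{y}{x}{w}{x'}{y'}$ -- \emph{negates} the quantity $ab'-a'b$ in the relabeled coordinates, so after that step the hypothesis gives $\qmaxflow = \min\{\hat a\hat b',\hat a'\hat b\} = \hat a'\hat b = (wa_d-b_d)a_d'$; this is exactly the term appearing on the right-hand side of the castling identity, so the telescoping still closes. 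Your phrase ``this inequality survives every castling step'' hides that sign flip, but you correctly flagged the orientation bookkeeping as the delicate point, and the argument does go through.
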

\begin{proof}
Let  $(a,b), (a',b') \in \bfV_w \cup \bfW_w$. If one of them is in $\bfV_w$ and one of them in $\bfW_w$, the result follows from \autoref{thm:wxvsuv}. 

Therefore, we may assume $(a,b),(a',b') \in \bfV_w$ or $(a,b),(a',b') \in \bfW_w$ and by \autoref{lem:behaviorundercastling}, case (c), we can reduce to the case where $(a,b),(a',b') \in \bfW_w$. If the two pairs have different castling depth, the result follows from \autoref{thm:diffcastlingdegree}. 

If $(a,b),(a',b')$ have the same castling depth, apply exactly $\depth(a,b)$ castling steps and obtain two pairs in $\bfU_w \cup \bfV_w$. If one of them is in $\bfU_w$ and one in $\bfV_w$, the result follows by \autoref{cor:v versus u}. If they both belong to $\bfU_w$, the result follows from the hypothesis. If they both belong to $\bfV_w$, apply the reduction of \autoref{lem:vwequtow} until (at least) one of the two pairs belongs to $\bfU_w$. This concludes the proof, either by \autoref{cor:v versus u} or by assumption.
\end{proof}

We can now easily complete the proof of the equivalence between \autoref{conj: case w = 3} and \autoref{conj:generalconj}:

\begin{proposition}\label{prop:conjs equ}
If 
\begin{equation*}
\qmincut \left( \bridgegraphsmall{a}{b}{3}{b'}{a'} \right)
= \qmaxflow \left( \bridgegraphsmall{a}{b}{3}{b'}{a'} \right) 
\end{equation*}
 for all $(a,b),(a',b') \in \bfU_3 \cup \bfV_3 \cup \bfW_3$ then 
\begin{equation*}
\qmincut \left( \bridgegraphsmall{a}{b}{w}{b'}{a'} \right) 
= \qmaxflow \left( \bridgegraphsmall{a}{b}{w}{b'}{a'} \right) 
\end{equation*}
for all $(a,b),(a',b') \in \bfU_w \cup \bfV_w \cup \bfW_w$. 
\end{proposition}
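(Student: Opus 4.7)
The plan is a straightforward induction on $w \geq 3$. The base case $w = 3$ is exactly the hypothesis (Conjecture~\ref{conj: case w = 3}), and I will deduce the statement at each $w \geq 4$ from the statement at $w-1$.

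Given the inductive hypothesis at parameter $w-1$, I would first invoke Lemma~\ref{lemma: full rank from u to vw} at parameter $w$: it reduces the task to establishing $\qmaxflow = \qmincut$ only for pairs $(a,b), (a',b') \in \bfU_w$, because the lemma then propagates the equality to $\bfV_w \cup \bfW_w$. Next, I would exploit the set-theoretic identity $\bfU_w = \bfU_{w-1} \cup \bfV_{w-1} \cup \bfW_{w-1}$ (both sides describe $\{(a,b) \in \bbN^2 : a \leq b \leq \lambda_{w-1}a\}$), noted immediately after the definitions of the regions. This places our pairs in exactly the domain where the inductive hypothesis at parameter $w-1$ applies.

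The link between the parameters $w-1$ and $w$ is the monotonicity of the quantum max-flow in the middle bond dimension: by \autoref{lem:reformulationof qmaxflow}, any expression $\sum_{i=1}^{w-1} M_i \boxtimes N_i$ achieving the max-flow with middle dimension $w-1$ is also admissible for middle dimension $w$ (take $M_w = N_w = 0$), hence
\[
\qmaxflow \left( \bridgegraphsmall{a}{b}{w}{b'}{a'} \right) \geq \qmaxflow \left( \bridgegraphsmall{a}{b}{w-1}{b'}{a'} \right).
\]
By the inductive hypothesis, the right-hand side equals the corresponding min-cut. For $(a,b), (a',b') \in \bfU_w$, the inequalities $b \leq \lambda_{w-1}a$ and $b' \leq \lambda_{w-1}a'$, together with $\lambda_{w-1} < w - 1$ (immediate from the explicit formula $\lambda_k = (k + \sqrt{k^2-4})/2$), force $ab', a'b < aa'(w-1) < aa'w$, so the ``middle cut'' $aa'k$ is non-binding at both $k = w-1$ and $k = w$, and both min-cuts collapse to $\min\{ab', a'b\}$. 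Chaining
\[
\min\{ab', a'b\} = \qmaxflow \left( \bridgegraphsmall{a}{b}{w-1}{b'}{a'} \right) \leq \qmaxflow \left( \bridgegraphsmall{a}{b}{w}{b'}{a'} \right) \leq \qmincut \left( \bridgegraphsmall{a}{b}{w}{b'}{a'} \right) = \min\{ab', a'b\}
\]
forces equality and closes the induction.

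I anticipate no real obstacle: once Lemma~\ref{lemma: full rank from u to vw}, the set identity, and the trivial monotonicity of max-flow in $w$ are in hand, the argument is essentially automatic. The only verification needed is the elementary inequality $\lambda_{w-1} < w - 1$, which ensures that the middle cut is non-binding for pairs in $\bfU_w$ at both parameters $w-1$ and $w$, and so permits the min-cut at these two parameters to be identified.
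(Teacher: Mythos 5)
Your proof is correct and follows the paper's intended route: induction on $w$, reduction to $\bfU_w$ via \autoref{lemma: full rank from u to vw}, and then the set identity $\bfU_w = \bfU_{w-1} \cup \bfV_{w-1} \cup \bfW_{w-1}$. But you supply a step that the paper's proof actually skips: the induction hypothesis concerns the bridge graph with middle bond dimension $w-1$, so it cannot directly give any information about the bridge graph with middle bond dimension $w$. The paper's sentence ``Therefore the statement holds by the induction hypothesis'' implicitly relies on exactly the monotonicity argument you spell out, namely $\qmaxflow(\cdots,w,\cdots) \geq \qmaxflow(\cdots,w-1,\cdots)$ from setting $M_w = N_w = 0$ in the reformulation of \autoref{lem:reformulationof qmaxflow}, combined with the check that for $(a,b),(a',b') \in \bfU_w$ the inequality $b \leq \lambda_{w-1}a < (w-1)a$ makes both $aa'w$ and $aa'(w-1)$ non-binding in the min-cut, so $\qmincut$ equals $\min\{ab', a'b\}$ at both parameters. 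This sandwich argument is genuinely needed to close the induction, and making it explicit tightens a spot where the paper's proof is too terse to be self-contained. The remaining implicit step — that the mixed cases $(a,b) \in \bfU_w$, $(a',b') \in \bfV_w \cup \bfW_w$ (or vice versa) are already covered by \autoref{thm:wxvsuv} and \autoref{cor:v versus u} — is left unstated in both your write-up and the paper's, and should be noted when invoking ``it suffices to assume both pairs lie in $\bfU_w$.''
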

\begin{proof}
We use induction on $w$. The case $w=3$ coincides with the hypothesis of the claim. Suppose $w \geq 4$. Let $(a,b),(a',b') \in \bfU_w \cup \bfV_w \cup \bfW_w$. By \autoref{lemma: full rank from u to vw}, it suffices to assume $(a,b),(a',b') \in \bfU_w$.

Now, $\bfU_w = \bfU_{w-1} \cup \bfV_{w-1} \cup \bfW_{w-1}$. Therefore the statement holds by the induction hypothesis and this concludes the proof.
\end{proof}


\subsection*{Conflict of Interest} The authors have no conflict of interest to declare that is relevant to this article.

{
\bibliographystyle{alphaurl}
\bibliography{BridgeGraph.bib}
}

\begin{appendix}
\section{Pedestrian-style proofs for the bridge 2 case}
When $w = 2$, the regions $\bfU_2,\bfV_2$ are empty and the region $\bfW_2$ reduces to cases where $a=b$. Therefore, the only interesting behaviour in this case is for $(a,b),(a',b') \in \bfX_2$; this is covered by \autoref{thm:xvswx} whose proof relies on the castling transform and on \autoref{thm:QMaxFlowcastling}. In this Appendix, we propose two alternative proofs for this specific case, which do not rely on castling transform. A deep understanding of this initial case might provide insights on how to obtain a proof for the $w=3$ case of \autoref{conj: case w = 3} and in turn of \autoref{conj:generalconj}. 

The first proof that we present relies on the fact that when $w = 2$, we know explicit examples of tensors $T \in \bbC^a \otimes \bbC^b \otimes \bbC^2$ having $(\GL_a \times \GL_b)$-orbit of maximal dimension. This relies on the Kronecker classification of matrix pencils \cite[Ch.XIII]{Gant:TheoryOfMatrices}, and more precisely on the results of \cite{Pok:PerturbationsEquivalenceOrbitMatrixPencil}.

The second proof is an enhancement of the first one, where the explicit calculation of the rank of certain maps relies on the representation theory of $\SL_2$. 

\subsection{A proof using explicit tensors with dense orbit}\label{app:explicitorbit}

The main tool for this section is the following result by Pokrzywa~\cite{Pok:PerturbationsEquivalenceOrbitMatrixPencil}. Recall that $z^{(2)}_p =p$.

\begin{lemma}\label{lemma: dense orbit 2 a b}
Let $(a,b) \in \bfX_2$, namely $a < b \leq 2a$. Write 
\begin{align*}
a &= p \alpha + (p+1) \beta \\ 
b &= (p+1) \alpha + (p+2) \beta  
\end{align*}
as in \autoref{lemma: arithmetic a b p}. Let $T \in \mathbb{C}^a \otimes \mathbb{C}^b \otimes \mathbb{C}^2$ be regarded as a $2$-dimensional subspace of $\mathbb{C}^a \otimes \mathbb{C}^b \simeq \text{Mat}_{a \times b}$ as 
\[
 T((\mathbb{C}^{2})^*) = \lbrace (\mathbf{I}_\alpha \boxtimes R_p (\xi_1,\xi_2)  \oplus ( \mathbf{I}_\beta \boxtimes R_{p+1}(\xi_1,\xi_2)): \xi_1,\xi_2 \in \mathbb{C}\rbrace
\]
where 
\[
R_p (\xi_1,\xi_2) = \left[ 
\begin{array}{ccccc}
\xi_1 & \xi_2 & & & \\    
& \xi_1 & \xi_2  & & \\ 
& & \ddots & \ddots & \\
& & & \xi_1 & \xi_2 
\end{array}
\right] \in \mathbb{C}^{p} \otimes \mathbb{C}^{p+1}
\]
and $\textbf{I}_\alpha$ is an $\alpha \times \alpha$ identity matrix. Then the $(GL_a \times GL_b)$-orbit of $T$ is dense in $\mathbb{C}^a \otimes \mathbb{C}^b \otimes \mathbb{C}^2$.
\end{lemma}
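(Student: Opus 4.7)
The plan is to apply \autoref{prop: dense orbit} with $w = 2$: since $\lambda_2 = 1$ and $(a,b) \in \bfX_2$ gives $a < b$, the tensor space $\bbC^a \otimes \bbC^b \otimes \bbC^2$ admits a unique dense $(\GL_a \times \GL_b)$-orbit. Since group orbits are locally closed and the ambient space is irreducible, it suffices to exhibit an orbit of maximal dimension $2ab = \dim(\bbC^a \otimes \bbC^b \otimes \bbC^2)$; such an orbit must be open, and hence coincide with the dense one.

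By the orbit-stabilizer identity, this reduces to showing $\dim \Stab_{\GL_a \times \GL_b}(T) = a^2 + b^2 - 2ab = (b-a)^2$. I would proceed by viewing $T$ as a representation $V$ of the Kronecker quiver $\calK_2$ with dimension vector $(a, b)$: the block-diagonal construction of $T$ gives a decomposition $V \simeq P_p^{\oplus \alpha} \oplus P_{p+1}^{\oplus \beta}$, where $P_n$ denotes the indecomposable preprojective representation of $\calK_2$ with dimension vector $(n, n+1)$ realized by the Kronecker block $R_n$. The stabilizer of $T$ then equals $\Aut_{\calK_2}(V)$, which is open in $\End_{\calK_2}(V)$, and in particular shares the same dimension.

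The Hom dimensions between the preprojective indecomposables can be computed by elementary linear algebra on the explicit matrices defining $R_n$, yielding
\[
    \dim \End(P_n) = 1, \qquad \dim \Hom(P_n, P_{n+1}) = 2, \qquad \dim \Hom(P_{n+1}, P_n) = 0.
\]
Decomposing $\End(V)$ as a sum of $\Hom$-blocks between indecomposable summands then gives $\dim \End(V) = \alpha^2 + 2\alpha\beta + \beta^2 = (\alpha + \beta)^2$. Combined with the identity $\alpha + \beta = b - a$, which follows directly from the defining equations $a = p\alpha + (p+1)\beta$ and $b = (p+1)\alpha + (p+2)\beta$, this yields $\dim \Stab(T) = (b-a)^2$ and hence $\dim \text{orbit}(T) = 2ab$, as required.

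The main technical obstacle is the vanishing $\Hom(P_{n+1}, P_n) = 0$: compatibility with both arrows of $\calK_2$ forces any morphism $(f_1, f_2) \colon P_{n+1} \to P_n$ to be constant along diagonals of the matrices representing $f_1, f_2$, and then the $n \times (n+1)$ size of $f_1$ together with the boundary constraints at the first row and last column forces every diagonal to be zero. The other two Hom computations are analogous but shorter. Once these dimensions are established, the rest of the argument is purely formal.
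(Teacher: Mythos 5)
Your proof is correct, and it takes a more self-contained route than the paper's. The paper's own argument is a brief sketch: it says to compute the stabilizer of $T$ directly, reducing via the direct-sum structure to the single-block case $\alpha = 0,\ \beta = 1$, and refers to an external source for that computation. You instead recognize $T$ as the Kronecker-quiver representation $V \simeq P_p^{\oplus\alpha} \oplus P_{p+1}^{\oplus\beta}$ and compute $\dim \End_{\calK_2}(V) = (\alpha+\beta)^2 = (b-a)^2$ from the $\Hom$-dimensions between the two adjacent preprojective indecomposables. This makes the ``reduce to a single block'' step explicit: the cross terms $\Hom(P_p, P_{p+1})$ and $\Hom(P_{p+1}, P_p)$ contribute $2\alpha\beta + 0$, which is exactly what makes the count come out to the perfect square $(\alpha+\beta)^2$, something the paper's reduction glosses over. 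Combined with $\alpha + \beta = b - a$ and the orbit--stabilizer count $a^2 + b^2 - (b-a)^2 = 2ab$, this shows the orbit is open in the irreducible space $\bbC^a\otimes\bbC^b\otimes\bbC^2$, hence dense. Note that once you have an open orbit you do not actually need to invoke \autoref{prop: dense orbit}: openness in an irreducible variety already gives density, so that citation is redundant rather than load-bearing. One small caveat: the identification $\Stab_{\GL_a\times\GL_b}(T) = \Aut_{\calK_2}(V)$ holds only up to the automorphism $(g,h) \mapsto (g, h^{-t})$ of $\GL_a\times\GL_b$, depending on how one matches tensor slices with quiver arrows; this does not affect dimensions, but it would be worth stating precisely if this were to be written out in full. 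The $\Hom$-dimensions themselves ($\dim\End(P_n)=1$, the two cross-$\Hom$s of dimensions $2$ and $0$) are standard for the Kronecker quiver and are consistent with the Euler form $\langle -,-\rangle$ on $\calK_2$, so the computation is sound.
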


The proof of \autoref{lemma: dense orbit 2 a b} can be easily obtained by computing directly the stabilizer of the tensor $T$ in $\GL_a \times \GL_b$. The direct sum structure allows one to reduce to the case $\alpha = 0$, $\beta = 1$, for which the calculation is straightforward; see e.g. \cite[Theorem 4.1]{ConGesLanVenWan:GeometryStrassenAsyRankConj}.

\autoref{lemma: dense orbit 2 a b}, together with semicontinuity of matrix rank, guarantees that the quantum max-flow in the bridge graph equals the rank of the flow map for tensors $T,T'$ having dense orbit. The rest of this section performs the calculation of this rank. 

Fix $(a,b),(a',b') \in \bfX_2$ and write 
\begin{equation*}
  \begin{array}{lccl}
	 a = p \alpha + _(p+1) \beta, & &  &b = ({p+1}) \alpha + ({p+2}) \beta \\
	 a' = {p'} \alpha' + ({p'}+1) \beta', & & &b' = ({p'}+1) \alpha' + ({p'}+2) \beta',
  \end{array}
\end{equation*}
as in \autoref{lemma: arithmetic a b p}. Let $T \in \mathbb{C}^{a}\otimes \mathbb{C}^{b}\otimes \mathbb{C}^{2}$ and $T' \in \mathbb{C}^{b'}\otimes \mathbb{C}^{a'}\otimes \mathbb{C}^{2}$ be the tensors described in \autoref{lemma: dense orbit 2 a b}.

Fix a basis $e_1,\dots ,e_k$ of $\mathbb{C}^{k}$ for any $k$. For $j = 1,2$, let $M_j = T(e_j^*) \in \mathbb{C}^{a}\otimes \mathbb{C}^{b}$ and $M_j' = T'(e_j^*) \in \mathbb{C}^{b'} \otimes \mathbb{C}^{a'}$. By \autoref{lem:reformulationof qmaxflow}, we have
\[
 F_{T,T'} = M_1 \boxtimes M_1'+ M_2 \boxtimes M_2'.
\]
A direct calculation shows 
\begin{align*}
M_1 &= I_\alpha \boxtimes R_p(1,0) \oplus I_\beta \boxtimes R_{p+1}(1,0) , \\
M_2 &= I_\alpha \boxtimes R_p(0,1) \oplus I_\beta \boxtimes R_{p+1}(0,1) , \\
M_1' &= I_\alpha \boxtimes R_{q}(1,0)^T \oplus I_\beta \boxtimes R_{q+1}(1,0)^T , \\
M_2' &= I_\alpha \boxtimes R_{q}(0,1)^T \oplus I_\beta \boxtimes R_{q+1}(0,1)^T. 
\end{align*}

After reordering the Kronecker factors, we obtain that $F_{T,T'}$ can be represented in diagonal block form with 
\[
 \begin{array}{lc}
  \alpha \alpha' \text{ blocks} & R_p(1,0) \boxtimes R_q(1,0)^T  + R_p(0,1) \boxtimes R_q(0,1)^T, \\
  \alpha \beta' \text{ blocks} & R_p(1,0) \boxtimes R_{q+1}(1,0)^T  + R_p(0,1) \boxtimes R_{q+1}(0,1)^T, \\
  \beta \alpha' \text{ blocks} & R_{p+1}(1,0) \boxtimes R_q(1,0)^T  + R_{p+1}(0,1) \boxtimes R_q(0,1)^T, \\
  \beta \beta' \text{ blocks} & R_{p+1}(1,0) \boxtimes R_{q+1}(1,0)^T  + R_{p+1}(0,1) \boxtimes R_{q+1}(0,1)^T.
 \end{array}
\]
Consequently, the rank of $F_{T,T'}$ coincides with the sum of the ranks of these blocks.

Define
\[
K_{x, y} = R_x(1,0) \boxtimes R_y(1,0)^T  + R_x(0,1) \boxtimes R_y(0,1)^T \in (\mathbb{C}^x \otimes \mathbb{C}^{y + 1}) \otimes (\mathbb{C}^{x + 1} \otimes \mathbb{C}^y).
\]
The four block types appearing in $F_{T,T'}$ are $K_{p,p'}$, $K_{p, p'+ 1}$, $K_{p + 1, p'}$ and $K_{p+1, p'+1}$.

Consider $K_{x, y}$ as a linear map $K_{x, y} \colon\mathbb{C}^x \otimes \mathbb{C}^{y + 1} \to \mathbb{C}^{x+1} \otimes \mathbb{C}^y$.
On the basis vector it acts as follows:
\[
 K_{x, y}(e_i \otimes e_j) = \begin{cases}
 e_i \otimes e_j & j = 1 \\
	 e_i \otimes e_j + e_{i+1} \otimes e_{j-1} , & 2 \leq j \leq y \\
 e_{i+1} \otimes e_{j -1} & j = y + 1.
 \end{cases}
\]

Denote by $U_c$ the subspace of $\mathbb{C}^{x} \otimes \mathbb{C}^{y+1}$ spanned by the basis vectors $e_i \otimes e_j$ with $i + j = c + 1$, and by $V_c$ the analogous subspace of $\mathbb{C}^{x + 1} \otimes \mathbb{C}^y$.
The map $K_{x, y}$ maps $U_c$ to $V_c$.
The matrix of $K_{x , y}$ restricted to $U_c$ has a simple structure with ones on two diagonals.
Depending on the relation between $x$, $y$ and $c$ we have the following $4$ cases.
\[
\begin{cases}
\rank K_{x, y}|_{U_c} = \dim U_c = \dim V_c = c, & \text{if $c \leq x$, $c \leq y$}, \\
\rank K_{x, y}|_{U_c} = \dim U_c = x,\ \dim V_c = y + 1, & \text{if $x < c \leq y$}, \\
\rank K_{x, y}|_{U_c} = \dim V_c = y,\ \dim U_c = y + 1, & \text{if $y < c \leq x$}, \\
\rank K_{x, y}|_{U_c} = \dim U_c = \dim V_c = x + y + 1 - c, & \text{if $c > x$, $c > y$}. \\
\end{cases}
\]
If $x \leq y$, then $\rank (K_{x, y}|_{U_c}) = \dim U_c$ for all $c$ and, therefore, $\rank K_{x, y} = \sum_c \dim U_c = x(y + 1)$. Similarly, if $x \geq y$, then $\rank K_{x, y} = y(x + 1)$.

With these considerations, we can now compute $\rank(F_{T,T'})$. Clearly if $p < q$ then $ p+1 \leq q$. Consequently, all appearing maps $K_{x,y}$ have rank $x(y+1)$ and we obtain
\begin{align*}
	\rank F_{T,T'} &= \alpha \alpha' \rank K_{p,p'} +  
	\alpha \beta' \rank K_{p,p'+1} +  
\beta \alpha' \rank K_{p+1,p'} +  
	\beta \beta' \rank K_{p+1,p'+1} \\
&= \alpha \alpha' p (p'+1) + \alpha \beta' p (p'+2) + 
\beta \alpha' (p+1) (p'+1) + 
\beta \beta' (p+1) (p'+2) = ab'. 
\end{align*}
This shows that if $p < p'$, then the quantum max-flow equals the quantum min-cut, and both are $ab'$. Similarly, if $p > p'$, the quantum max-flow and quantum min-cut coincide and are equal to $a'b$.

If $p = q$, we calculate the rank of $F_{T,T'}$ as 
\begin{align*}
	\rank F_{T,T'} &= \alpha \alpha' \rank K_{p,p'} +  
	\alpha \beta' \rank K_{p,p'+1} +  
\beta \alpha' \rank K_{p+1,p'} +  
	\beta \beta' \rank K_{p+1,p'+1} \\
&= \alpha \alpha' p (p'+1) + \alpha \beta' p(p'+2) + 
\beta \alpha' (p+2)p' + 
\beta \beta' (p+1) (p'+2)
\end{align*}

This quantity differs from 
\begin{equation*}
ab' =  \alpha \alpha' p (p'+1) + \alpha \beta' p(p'+2) +
\beta \alpha' (p+1)(p'+1) + 
\beta \beta' (p+1) (p'+2)
\end{equation*}
by $(p - p' + 1) \beta \alpha' = \beta \alpha'$. Hence, we have reproduced the result in \autoref{thm:xvswx}
\begin{equation*}
\qmaxflow \left( \bridgegraphsmall{a}{b}{2}{b'}{a'} \right) = 
\begin{cases}
	ab' & \text{if } p < p',\\
	ab'-\beta \alpha' & \text{if } p = p', \\
	a'b & \text{if } p > p'.
\end{cases}
\end{equation*}


\subsection{Bridge 2 via representation theory }\label{app:rep theory}

In this section, we demonstrate how one can use the representation theory of $\GL_2$ to perform the rank calculation of the \autoref{app:explicitorbit}.
We will also see that this idea lets us calculate quantum max-flows for higher bridge dimensions in particular cases.

First, we recall some basic notions from representation theory. We refer to \cite{FulHar:RepTh} for an in-depth introduction. Throughout this section, let $V$ be a complex finite dimensional vector space with $\bfv = \dim V$ and fix a basis $e_1 \dots e_\bfv$ of $V$. The \textit{symmetric subspace} $S^d V \subset V^{\otimes d}$ is the subspace spanned by vectors 
\begin{equation*}
	e_{i_1}\dots e_{i_d} = \sum_{\sigma \in S_d} e_{\sigma(i_1)} \otimes \dots \otimes e_{\sigma(i_d)}
\end{equation*}
where $S_d$ is the permutation group on $d$ symbols. For all $1 \leq i_1 \leq \dots \leq i_d \leq \bfv $, the vectors $e_{i_1}\dots e_{i_d}$ are linearly independent and consequently, 
\begin{equation*}
	\dim (S^dV ) = \binom{\bfv + d - 1}{d}.
\end{equation*}
The \textit{antisymmetric subspace} $\Lambda^dV \subset V^{\otimes d}$ is spanned by elements of the form 
\begin{equation*}
	e_{i_1}\wedge \dots\wedge e_{i_d} = \sum_{\sigma \in S_d} \sgn (\sigma)  e_{\sigma(i_1)} \otimes \dots \otimes e_{\sigma(i_d)}.
\end{equation*}
Denote by $x_1 \vvirg x_{\bfv}$ the basis of $V^*$ dual to $e _{1} \dots e _\bfv$. The space $S^d V^*$ can be identified with the space of homogeneous polynomials of degree $d$ on $V$; the basis elements $x_1 \vvirg x_\bfv$ can be thought as coordinates on $V$. In a similar way, $S^kV$ can be thought of as the space of order $k$ differential operators with constant coefficients, via the natural contraction map
\begin{align*}
	S^d V^* \otimes S^k V &\rightarrow S^{d-k}V^* \\
	f \otimes e_{i_1 } \dots e_{i_k} &\mapsto \frac{\partial^k f}{\partial x_{i_1} \cdots \partial x_{i_k}}
\end{align*}
for any $d \geq k$. 

The group $\GL(V)$ acts on $V^{\otimes d}$ and its elements can be represented as matrices in the fixed basis. Let $\bfT$ be the abelian subgroup of invertible diagonal matrices (called a \textit{torus}) and $\bfB$ be the subgroup of invertible upper triangular matrices (called the \textit{Borel subgroup}). A nonzero tensor $v \in V^{\otimes d}$ is a \textit{weight vector} with \textit{weight} $(p_1 \dots p_{\bfv})$ if for all elements in $\textbf{T}$, one has 
\begin{equation*}
\begin{pmatrix}
	t_1 & & \\
	    &\ddots & \\
	        &&t_{\bfv}
\end{pmatrix}v = t_1^{p_1} \dots t_{\textbf{v}}^{p_{\textbf{v}}} v.
\end{equation*}
Moreover, $v$ is a \textit{highest weight vector} if $\bfB$ preserves the line through $v$; in this case the weight $(p_1 \dots p_{\bfv})$ is called a \textit{highest weight} .

Two irreducible representations of $\GL(V)$ are isomorphic if and only if they have the same highest weight, that is, contain a highest weight vector for the same highest weight. The irreducible representations appearing in the tensor algebra $V^{\otimes } = \bigoplus_{d = 0}^{\infty} V^{\otimes d}$ are labeled by Young diagrams or, equivalently, by tuples of natural numbers $\pi = (p_1 \dots p_k)$ with $p_i \leq p_{i + 1}$. Write $\bbS_{\pi} V$ for the irreducible representation corresponding to $\pi$. One can show that $\bbS_\pi V$ has highest weight $\pi$; moreover, $\bbS_\pi V$ is an irreducible representation appearing in $V^{\otimes d}$ if and only if $p_1 + \dots +  p_k = d$, with $k \leq \bfv$. Let $\pi' = (q_1 \dots q_{p_1})$ be the conjugate Young diagram to $\pi$, that is the Young diagram whose $i$-th column has $p_i$ boxes. Then, 
\begin{equation*}
	(e_1 \wedge \dots \wedge e_{q_1})\otimes \dots \otimes (e_1 \wedge \dots \wedge e_{q_{p_1}}) \in V^{\otimes d}
\end{equation*}
is a highest weight vector of highest weight $\pi$ in $V^{\otimes d}$. 

Every finite-dimensional representation of $\GL(V)$ splits into a direct sum of irreducible representations. The Pieri formula controls which irreducible representations appear in $\bbS_\pi V \otimes S^dV$. 

\begin{theorem}\label{thm:pieriformula}
The representation $\bbS_\pi V \otimes S^dV$ decomposes into irreducible representations of $\GL(V)$ as 
\begin{equation*}
	\bbS_\pi V \otimes S^dV = \bigoplus_{\mu} \bbS_{\mu}V
\end{equation*}
where the direct sum ranges over Young diagrams $\mu$ obtained by adding $d$ boxes to $\pi$ with no two boxes added to the same column. In particular, this decomposition is multiplicity free.
\end{theorem}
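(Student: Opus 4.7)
The plan is to reduce the theorem to a character identity and then prove that identity combinatorially. Since finite-dimensional polynomial representations of $\GL(V)$ are semisimple and completely determined by their $\bfT$-characters, it suffices to verify the stated decomposition at the level of characters. Now the character of $\bbS_\pi V$ with respect to $\bfT$ is the Schur polynomial $s_\pi(t_1,\ldots,t_\bfv)$, and the character of $S^d V$ is the complete homogeneous symmetric polynomial $h_d(t_1,\ldots,t_\bfv)=s_{(d)}(t_1,\ldots,t_\bfv)$. Consequently the theorem reduces to the classical Pieri identity for symmetric polynomials:
\[
s_\pi \cdot h_d \;=\; \sum_\mu s_\mu,
\]
where $\mu$ ranges over Young diagrams obtained from $\pi$ by adding $d$ boxes, no two in the same column.

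To prove this identity, I would use the combinatorial description of Schur polynomials as weight generating functions of semistandard Young tableaux: $s_\nu(t_1,\ldots,t_\bfv)=\sum_T t^{\mathrm{wt}(T)}$, summing over SSYT $T$ of shape $\nu$ with entries in $\{1,\ldots,\bfv\}$, and the analogous formula $h_d = \sum_w t^{\mathrm{wt}(w)}$ summing over weakly increasing words $w=(w_1\le\cdots\le w_d)$ in $\{1,\ldots,\bfv\}$. Expanding both sides of the Pieri identity, the claim becomes the existence of a weight-preserving bijection between pairs $(T,w)$, with $T$ a SSYT of shape $\pi$ and $w$ a weakly increasing word of length $d$, and semistandard tableaux of shape $\mu$ with $\mu/\pi$ a horizontal strip of size $d$. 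The bijection is furnished by \emph{row insertion}, a special case of the RSK correspondence: inserting the entries $w_1,\ldots,w_d$ of $w$ successively into $T$ produces a SSYT whose underlying shape is $\mu$.

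The main obstacle is establishing the correctness of the row insertion: one needs the bumping lemma, which asserts that inserting a weakly increasing word into a SSYT yields a SSYT whose added boxes form a horizontal strip, together with invertibility of the algorithm. This is a standard but nontrivial combinatorial argument and is the only technical step required. Finally, the multiplicity-freeness of the decomposition claimed in the last sentence of the theorem is immediate from the fact that each $s_\mu$ appears with coefficient exactly $1$ on the right-hand side of the Pieri identity, so no nontrivial cancellation or grouping occurs when passing back to representations.
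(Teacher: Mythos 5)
The paper does not prove this theorem; it is stated as a classical background result, with a reference earlier in the section to Fulton--Harris for the representation theory. Your argument is correct and follows one of the standard textbook proofs of the Pieri rule: reduce the decomposition of a semisimple polynomial $\GL(V)$-representation to an identity of $\bfT$-characters, identify those characters as the Schur polynomial $s_\pi$ and the complete homogeneous polynomial $h_d$, and then prove the resulting symmetric-function identity $s_\pi \cdot h_d = \sum_\mu s_\mu$ via the semistandard-tableau model and row insertion (the bumping lemma plus invertibility of the insertion). The only place where you defer to ``standard but nontrivial'' facts is exactly where a textbook treatment would, and the multiplicity-one conclusion is indeed immediate once each $s_\mu$ is shown to occur with coefficient one. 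In short, your approach is the expected one and there is no gap relative to what the paper relies on.
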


For example, let $W$ be a $2$-dimensional space and consider the action of $\mathrm{GL}(W^*)$ on the space $S^k( W^*) \otimes S^m (W^*)$ with $k \geq m$. By Pieri's formula, this decomposes as  
\begin{equation*}
	S^{k}(W^*) \otimes S^{m}(W^*) = \bigoplus_{r = 0}^m \mathbb{S}_{(k+r, m-r)} (W^*),
\end{equation*}
because the Young diagram $(k+r, m-r)$ is obtained by adding $m$ boxes to the Young diagram $(k)$, with $r$ boxes in the first row and $m-r$ boxes in the second row.

Define $h_{k,m}^r = (x_0 y_1 - x_1 y_0)^{m - r} x_0^{k-m  + r} y_0^{r}$. Identify elements of $S^{k}(W^{*}) \otimes S^m (W^*)$ as polynomials in the variables $x_0,x_1,y_0,y_1$ which are bi-homogeneous of bi-degree $(k,m)$ in the two sets of variables $\{x_0,x_1\},\{y_0,y_1\}$. In particular, $h_{k,m}^r$ is an element of $S^{k}(W^*) \otimes S^{m}(W^*)$. One can verify that this is, in fact, a highest weight vector for highest weight $(k + r, m - r)$. 

Similarly, for $k \leq m$, the Pieri formula gives 
\begin{equation*}
	S^{k}(W^*) \otimes S^{m}(W^*) = \bigoplus_{r = 0}^m \mathbb{S}_{(m + r, k - r)} (W^*).
\end{equation*}
Define $h'^r_{k,m} = (x_0 y_1 - x_1 y_0)^{k - r} x_0^{m - k + r} y_0^{r}$, which turns out to be a highest weight vector in $\mathbb{S}_{(m +r,k - r)}(W^*)$ in $S^{k}(W^*) \otimes S^{m}(W^*)$. 

In the study of the bridge graph, with $w=2$, we may interpret the spaces $A,B,A',B'$ as representations for the group $\mathrm{GL}(W^*)$. Suppose preliminarily $\dim (A) = p+1$, $\dim (B) = p+2$, $\dim (A') = p'+1$ and $\dim (B') = p'+2$, and consider 
\begin{equation*}
	A \cong S^p(W),\; B \cong S^{p+1}(W^*),\; A' \cong S^{p'}(W^*),\; B' \cong S^{p'+1}(W).
\end{equation*}

By Schur's lemma, there is an (up to scaling) unique $\mathrm{GL}(W^*)$-equivariant projection 
\[
S^{p} (W^*) \otimes W^* \to S^{p+1} (W^*).
\]
 This corresponds to a tensor $T \in A \otimes B \otimes W$ that can be interpreted as the multiplication map $T(\ell \otimes g) = \ell g$.

By Pieri's formula, $W^* \otimes S^{p'}(W^*)$ contains a copy of $S^{p'+1}(W^*)$. By Schur's Lemma, there is a unique, up to scaling, $\mathrm{GL}(W^*)$-equivariant embedding $S^{p'+1}(W^*)  \to  W^* \otimes S^{p'} (W^*)$. Let $T' \in A' \otimes B' \otimes W^*$ be the tensor corresponding to this embedding, regarded as an element of $S^{p'}(W^*) \otimes S^{p'+1} (W) \otimes W^*$. This map is called \textit{polarization} in~\cite[Section~2.6.4]{Lan:TensorBook} and is explicitly given by $T' (f)  = x_0 \otimes \frac{\partial }{\partial x_0} f +  x_1 \otimes \frac{\partial }{\partial x_1}f$.

Note that in the tensors $T$ and $T'$ are isomorphic to the tensors $R_p$ and $R_{p'}$ from~\autoref{lemma: dense orbit 2 a b}, which can be seen by writing them in the monomial basis. We will now show that the explicit rank computation in~\autoref{lemma: dense orbit 2 a b} can be replaced by a representation-theoretic argument.

The flow map $F_{T,T'}$ in the bridge graph with this choice of $T$ and $T'$ is  
\begin{equation}\label{eq:appendixFTT'}
\begin{aligned}
	F_{T,T'} : S^p(W^*) \otimes S^{p'+1}(W^*) &\rightarrow S^{p+1}(W^*) \otimes S^{p'}(W^*) \\
	f \otimes g &\mapsto {\textstyle \sum}_{i = 0,1} f \cdot x_i \cdot \frac{\partial}{\partial y_i} g
\end{aligned}
\end{equation}
where, again, the tensor product of two symmetric powers is interpreted as a space of bi-homogeneous polynomials. 

Using Pieri's formula and Schur's Lemma, we compute the rank of this map for every choice of $p$ and $p'$. 
\begin{lemma}\label{lem:appendixpieribridgeformula}
The flow map $F_{T,T'}$ from~\autoref{eq:appendixFTT'} is $\mathrm{GL}(W^*)$-equivariant. Moreover, the map is injective if $p \leq p'$ and surjective if $p \geq p'$. In particular, $F_{T,T'}$ is an isomorphism if and only if $p= p'$. 
\end{lemma}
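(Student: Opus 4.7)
The plan is to use $\GL(W)$-equivariance together with the Pieri formula to reduce the claim, via Schur's lemma, to a single explicit calculation on highest-weight vectors. First I would observe that $F_{T,T'}$ is $\GL(W)$-equivariant: it is obtained by contracting the two equivariant tensors $T$ and $T'$, which are, respectively, (scalar multiples of) the unique $\GL(W)$-equivariant polarization $S^{p+1}W \hookrightarrow W \otimes S^pW$ and multiplication $W \otimes S^{p'}W \twoheadrightarrow S^{p'+1}W$ guaranteed by Schur's lemma.

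Next I would apply \autoref{thm:pieriformula} to decompose
\[
S^p W^* \otimes S^{p'+1} W^* = \bigoplus_{r = 0}^{\min(p,\, p'+1)} \bbS_{(p+p'+1-r,\, r)} W^*, \qquad S^{p+1} W^* \otimes S^{p'} W^* = \bigoplus_{r = 0}^{\min(p+1,\, p')} \bbS_{(p+p'+1-r,\, r)} W^*.
\]
Both decompositions are multiplicity-free, so by Schur's lemma $F_{T,T'}$ acts by a scalar $c_r \in \bbC$ on the $r$-th summand whenever it appears in both the domain and the codomain, and by zero on summands that appear only in the domain. The key step is computing $c_r$. I would take $h_r = (x_0 y_1 - x_1 y_0)^r x_0^{p-r} y_0^{p'+1-r}$ and $g_r = (x_0 y_1 - x_1 y_0)^r x_0^{p+1-r} y_0^{p'-r}$ as highest-weight vectors of weight $(p+p'+1-r,\, r)$ in the domain and codomain, respectively, and apply the operator $F_{T,T'} = x_0 \partial_{y_0} + x_1 \partial_{y_1}$. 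A direct product-rule expansion shows that the two contributions from differentiating $(x_0 y_1 - x_1 y_0)^r$ cancel, leaving $F_{T,T'}(h_r) = (p'+1-r)\, g_r$, and hence $c_r = p'+1-r$.

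The proof then finishes by comparing the two index ranges. The scalar $c_r$ is nonzero precisely when $r \leq p'$. If $p \leq p'$, every $r$ in the domain range $\{0,\dots,p\}$ satisfies $r \leq p'$, so every $c_r$ is nonzero and $F_{T,T'}$ is injective. If $p \geq p'$, every $r$ in the codomain range $\{0,\dots,p'\}$ lies in the domain range as well (since $r \leq p' \leq p$) and yields $c_r \neq 0$, so $F_{T,T'}$ is surjective. Both conditions hold simultaneously exactly when $p = p'$, giving the isomorphism. The only step with real content is the highest-weight computation; the main thing to be careful about is matching the Pieri index ranges correctly, since the degrees $p$ and $p'+1$ (respectively $p+1$ and $p'$) are asymmetric, and it is precisely this asymmetry that forces the injectivity/surjectivity dichotomy.
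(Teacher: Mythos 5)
Your proof is correct and follows essentially the same route as the paper: decompose domain and codomain via Pieri's formula, invoke Schur's lemma on the multiplicity-free pieces, and compute the action on extreme weight vectors. The paper only asserts that the image is a ``non-zero multiple'' and then handles the surjective case $p \geq p'$ by repeating the argument for the transpose map, whereas you go a little further by pinning down the scalar $c_r = p'+1-r$ explicitly; this lets you treat injectivity and surjectivity in one stroke (the unique domain summand with $r = p'+1$, which occurs exactly when $p > p'$, is precisely the one killed by the vanishing scalar and precisely the one absent from the codomain), which is a pleasant refinement but not a different method.
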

\begin{proof}
Since the maps defined by $T$ and $T'$ are $\mathrm{GL}(W^*)$-equivariant, the flow map $F_{T,T'}$ is as well. Suppose $p \leq p'$. By Pieri's formula, the domain is 
\begin{equation*}
    S^p (W^*) \otimes S^{p'+1} (W^*) = \bigoplus_{r = 0}^p \mathbb{S}_{(p'+1+i,p-i)}(W^*).
\end{equation*}
Since the decomposition is multiplicity free, by Schur's Lemma, it suffices to check that the highest weight vectors of the domain are not mapped to $0$. The highest weight vectors of the domain are 
$h'^r _{p,p'+1} \in \mathbb{S}_{(p'+1+r,p-r)}(W^*) \subset S^p (W^*) \otimes S^{p'+1} (W^*)$.

It is easy to verify that $F_{T,T'}(h'^r_{p,p'+1})$ is a nonzero multiple of $h^r_{p+1,p'}$. This guarantees that $F_{T,T'}$ is injective. 

If $p \geq p'$, the same calculation can be performed to show that the transpose map $F_{T,T'}^{\mathbf{\textit{t}}}$ is injective. Hence $F_{T,T'}$ is surjective. 

We conclude that if $p=p'$, $F_{T,T'}$ is an isomorphism. On the other hand, if $p \neq p'$, domain and codomain do not have the same dimension, hence the map is not an isomorphism.
\end{proof}

To calculate the quantum max-flow for general bond dimensions $a,b,a'$ and $b'$, we use the same method as in~\autoref{app:explicitorbit}. Write 
\begin{equation*}
  \begin{array}{lccl}
	 a = p \alpha + (p+1) \beta, & &  &b = ({p+1}) \alpha + ({p+2}) \beta \\
	 a' = {p'} \alpha' + ({p'}+1) \beta', & & &b' = ({p'}+1) \alpha' + ({p'}+2) \beta',
  \end{array}
\end{equation*}
and assume without loss of generality $p \leq p'$. We can define tensors in $A \otimes B \otimes W$ resp.~$A' \otimes B' \otimes W^*$ as block tensors where the block elements are the distinguished tensors $T$ and $T'$ defined before. Now, using~\autoref{lem:appendixpieribridgeformula}, one can repeat the same calculation as in~\autoref{app:explicitorbit} to calculate the quantum max-flow.   

%
\end{appendix}

\end{document}